\newtheorem{theorem}{Theorem}%[section]
\newtheorem{definition}[theorem]{Definition}
\newtheorem{lemma}[theorem]{Lemma}
\newtheorem{corollary}[theorem]{Corollary}
\begin{document}

\begin{center}
{\Large\bf Convergence to equilibrium distribution.\\~\\
 Dirac fields coupled to a particle }
\end{center}
\vspace{1cm}
 \begin{center}
{\large T.V. Dudnikova}\\~\\
{\it M.V. Keldysh Institute of Applied Mathematics RAS\\
 Moscow 125047, Russia}\\~\\
e-mail:~tdudnikov@mail.ru
\end{center}
 \vspace{1cm}

 \begin{abstract}
For a system consisting of several Dirac fields and a particle, 
we study the Cauchy problem with random initial data.
We assume that the initial measure has zero mean value,
 a finite mean charge density, a translation-invariant covariance
 and satisfies  a  mixing condition.
The main result is the long-time convergence of distributions
of the random solutions to a limit Gaussian measure.
\medskip

{\it Key words and phrases}:  Dirac field coupled to a
particle; random initial data; mixing condition;
 correlation matrices; characteristic functional;
convergence to statistical equilibrium; Gaussian measures.
\medskip

MSC2020: 35Lxx, 60Fxx, 60G60, 82Bxx
 \end{abstract}

%%%%%%%%%%%%%%%%%%%%%%%%%%%%%%%%%%%%
\section{Introduction}
%%%%%%%%%%%%%%%%%%%%%%%%%%%%%%%%%%%%%%%%%%%%%%%%%%%%%

The paper is devoted to the  problem of long-time convergence to an
 equilibrium distribution in infinite-dimensional systems.
The statement of the problem in a general framework
and the overview of the first results are presented in \cite{DSi, DSS}.
In particular, for an ideal gas with infinitely many particles
and one-dimensional hard rods, this problem was studied in \cite{DS, DBS}.
Also, the convergence to equilibrium was established
for  one-dimensional chains of harmonic oscillators by Boldrighini and others in \cite{BPT},
for the harmonic crystals in \cite{DKM1}
 and for one-dimensional chains of anharmonic oscillators coupled
to heat baths by Jak\v{s}i\'c, Pillet and others (see, e.g., \cite{JP, EPR}).
In the systems described
by hyperbolic partial differential equations, the convergence analysis
was started  by Ratanov~\cite{R2} for wave equations.
 Later, the similar results were obtained for Klein--Gordon~\cite{K3, DKM2}
  and Dirac equations~\cite{DKM-03},
for a scalar field coupled to a harmonic crystal~\cite{DK}
and for the Klein--Gordon field coupled to a particle~\cite{D-2010}.
\medskip

In this paper,
we consider a system consisting of
a particle with position $q=(q_1,q_2,q_3)\in\mathbb{R}^3$
and
$N$ Dirac fields $\psi_1(x),\ldots,\psi_N(x)$, $x\in\mathbb{R}^3$,
where all
$\psi_n(x)=(\psi_{n1}(x),\ldots,\psi_{n4}(x))$ take values in $\mathbb{C}^4$,
 $n\in \overline{N}:=\{1,\ldots,N\}$.
The coupled dynamics is governed by the following equations
\begin{align}
i\,\dot\psi_n(x,t)&=(-i\,\alpha\cdot\nabla+\beta m_n)\psi_n(x,t)-q(t)\cdot\nabla\rho_n(x),
\quad n\in \overline{N},
\quad x\in \mathbb{R}^3,\quad t\in\mathbb{R},\label{1}\\
\ddot q(t)&=-V q(t)+\sum\limits_{n=1}^N
\left\langle\psi_n(\cdot\,,t),\nabla\rho_n\right\rangle.\label{2}
\end{align}
Here $m_n>0$, $V$ is a  positive  symmetric matrix,
$\rho_n \in C^1(\mathbb{R}^3;\mathbb{C}^4)$,
$\alpha=(\alpha_1,\alpha_2,\alpha_3)$,
$\nabla=(\partial_1,\partial_2,\partial_3)$,
$\partial_j=\partial/(\partial x_j)$, $j=1,2,3$,
$\alpha_j$ and $\beta$ are $4\times4$ Dirac matrices,
 ``$\cdot$'' stands for the scalar product
in the Euclidean space $\mathbb{R}^3$.
Here and below, the brackets $\langle\cdot\,,\cdot\rangle$ mean the following.
Set
 $$
 {\cal R}\psi_n:=
\left(\Re\psi_{n1},\dots,\Re\psi_{n4},\Im\psi_{n1},\dots,\Im\psi_{n4}\right)
\quad \mbox{for }\,\,\, \psi_n= (\psi_{n1},\dots\psi_{n4})\in \mathbb{C}^4,
\quad n\in \overline{N},
$$
 and   denote by  ${\cal R}^j\psi_n$ the
 $j$th component of the vector ${\cal R}\psi_n$, $j=1,...,8$.
 Then, for
$\psi=(\psi_1,\ldots,\psi_N)$ and
$\chi=(\chi_1,\ldots,\chi_N)$, we write
\begin{equation}\label{10}
\langle\psi,\chi\rangle:=\sum\limits_{n=1}^N\langle\psi_n,\chi_n\rangle:=
 \sum\limits_{n=1}^N\left({\cal R}\psi_n,{\cal R}\chi_n\right)=
\sum\limits_{n=1}^N\sum\limits_{j=1}^8\left({\cal R}^j\psi_n,{\cal R}^j\chi_n\right).
\end{equation}
Here and below, the brackets $(\cdot\,,\cdot)$
 mean  the  inner product in the real  Hilbert spaces
 $L^2\equiv L^2(\mathbb{R}^3)$, or in $L^2\otimes \mathbb{R}^8$, or in
some their   extensions.
The standard representation for the Dirac matrices $\alpha_j$ and $\beta$
(in $2\times 2$ blocks) is
\begin{equation}\label{matD}
\alpha_j=\begin{pmatrix}
           0 &\sigma_j \\
           \sigma_j & 0 \\
         \end{pmatrix},\quad j=1,2,3;\qquad \beta=\begin{pmatrix}
                                                         \mathrm{I} & 0 \\
                                                            0 & -\mathrm{I} \\
                                                          \end{pmatrix},
\end{equation}
where $\mathrm{I}$ is the unit $2\times2$ matrix,
$\sigma_j$ are the Pauli matrices,
$$
\sigma_1=\begin{pmatrix}
           0 & 1 \\
           1 & 0 \\
         \end{pmatrix},\quad \sigma_2=\begin{pmatrix}
           0 & -i \\
           i & 0 \\
         \end{pmatrix},\quad \sigma_3=\begin{pmatrix}
           1 & 0 \\
           0 & -1 \\
         \end{pmatrix}.
$$
Below by $\mathrm{I}$ we denote the unit $n\times n$ matrix with arbitrary $n=2,3,\ldots$.
The matrices $\alpha_j$ and $\beta$ are Hermitian and satisfy the anticommutation relations,
\begin{equation}\label{rel}
\alpha^*_j=\alpha_j, \quad j=0,1,2,3, \quad \mbox{where }\,\,\alpha_0:=\beta,
\quad
\alpha_j\alpha_k+\alpha_k\alpha_j=2\delta_{jk},
\quad j,k=0,1,2,3.
\end{equation}

In Eqn~\eqref{1} instead of matrices $\alpha_j$ and $\beta$
of the standard representation~\eqref{matD}
we can take matrices depending on the number $n=1,\ldots,N$ and satisfying relations~\eqref{rel}.
Then, all results remain true.
For simplicity of exposition, we assume that $\alpha_j$ and $\beta$
do not depend on $n$ and have form~\eqref{matD}.

Now we show that the system~\eqref{1}--\eqref{2} has a Hamiltonian structure.
Indeed, we put
$
A_1:=\alpha_1\partial_1+\alpha_3\partial_3$, $A_{2n}:=-i\alpha_2\partial_2+\beta m_n.
$
Note that $\alpha_1,\alpha_3,i\alpha_2,\beta\in\mathbb{R}^4\times\mathbb{R}^4$
and $(i\alpha_2)^\mathrm{T}=-i\alpha_2$, where $\mathrm{T}$ denotes the transposition of the matrix.
Then,
$\forall \phi_1,\phi_2\in C_0^\infty(\mathbb{R}^3;\mathbb{R}^4)$,
$\langle\phi_1,A_1\phi_2\rangle=-\langle A_1\phi_1,\phi_2\rangle$
and $\langle\phi_1,A_{2n}\phi_2\rangle=\langle A_{2n}\phi_1,\phi_2\rangle$.
Denote $\phi_n:=\Re\psi_n$, $\pi_n:=\Im\psi_n$, $\mu_n:=\Re\rho_n$, $\nu_n:=\Im\rho_n$,
where $\Re\psi_n=(\Re\psi_{n1},\ldots,\Re\psi_{n4})$,
$\Im\psi_n=(\Im\psi_{n1},\ldots,\Im\psi_{n4})$.
 We define $\Re\rho_n$ and $\Im\rho_n$ by a similar way.
Then system~\eqref{1}, \eqref{2} becomes
\begin{equation}\label{2.1}
\left\{
\begin{aligned}
\dot\phi_n(x,t) &=-A_1\phi_n(x,t)
+A_{2n}\pi_n(x,t)-q(t)\cdot\nabla\nu_n(x),\quad x\in \mathbb{R}^3,\quad t\in\mathbb{R}, \\
 \dot q(t)&=p(t),\\
  \dot\pi_n(x,t) &=-A_{2n}\phi_n(x,t)
  -A_1\pi_n(x,t)+q(t)\cdot\nabla\mu_n(x),\quad n=1,\ldots,N,
  \\
  \dot p(t)&=-Vq(t)+\sum\limits_{n=1}^N\Big(\left(\phi_n(\cdot,t),\nabla\mu_n\right)
  +\left( \pi_n(\cdot,t),\nabla\nu_n\right)\Big).
\end{aligned}
\right.
\end{equation}
%%%Here $q(t)\cdot\nabla\mu_n=\sum\limits_{j=1}^3q_j(t)\partial_j\mu_{n}\in \mathbb{R}^4$.
Write $\phi=(\phi_1,\ldots,\phi_N)$, $\pi=(\pi_1,\ldots,\pi_N)$.
Then,  system~\eqref{2.1} can be represented as the Hamiltonian system with the Hamiltonian functional
\begin{align}\label{H}
\mathrm{H}(\phi,q,\pi,p)&=\sum\limits_{n=1}^N\frac12
\Big(\left(\phi_n, A_{2n}\phi_n\right)
+\left(\pi_n, A_{2n}\pi_n\right)
+2\left(\phi_n, A_1\pi_n\right)\Big)\notag\\
&+\frac12\left(q\cdot Vq+|p|^2\right)-\sum\limits_{n=1}^Nq\cdot\Big(
\left(\phi_n,\nabla \mu_n\right)+\left(\pi_n,\nabla\nu_n\right)\Big),
\end{align}
because the right hand side of equations in  \eqref{2.1} is equal to
$\delta \mathrm{H}/(\delta\pi_n)$, $\partial \mathrm{H}/(\partial p)$,
$-\delta \mathrm{H}/(\delta\phi_n)$, $-\partial \mathrm{H}/(\partial q)$, respectively.
Another words, the system~\eqref{2.1} has a form
\begin{equation}\label{2.4}
\dot Y(t)=J\,\mathcal{D}\,\mathrm{H}(Y(t)),\quad J:=\begin{pmatrix}
                                           0 & \mathrm{I}  \\
                                           -\mathrm{I} & 0  \\
                                         \end{pmatrix},
                                         \qquad Y=(\phi,q,\pi,p),
\end{equation}
where $\mathcal{D}\mathrm{H}$ is the Fr\'{e}chet derivative with respect to
$\phi,q,\pi, p$ of the Hamiltonian  defined in \eqref{H},
$\mathrm{I}$ is the unit $(4N+3)\times (4N+3)$ matrix.
%%%%%%%%%%%%%%%%%%%%%%%%%%%%%%
\smallskip

\textbf{Remark}. Formula~\eqref{2.4} implies the energy conservation law,
$\mathrm{H}(Y(t))=\mathrm{H}(Y(0))$, $t\in\mathbb{R}$.
Indeed,
$$
\frac{d}{dt}\,\mathrm{H}(Y(t))=\left(\mathcal{D}\mathrm{H}(Y(t)),\dot Y(t)\right)
=\Big(\mathcal{D}\mathrm{H}(Y(t)),J\,\mathcal{D}\mathrm{H}(Y(t))\Big)=0,\quad t\in\mathbb{R},
$$
since the operator $J$ is skew-symmetric and
$\mathcal{D}\mathrm{H}(Y(t))\in\mathcal{E}$ for $Y(t)\in \mathcal{E}$.
\smallskip
%%%%%%%%%%%%%%%%%%%%%%%%%%%%%%%%%%

Below we use notation $Y=(\psi,q,p)$, where $\psi=(\psi_1,\ldots,\psi_N)$
takes the values in $\mathbb{C}^{4N}$. %%%%% for $x\in\mathbb{R}^3$.

%%%%%%%%%%%%%%%%%%%%%%%%%%%%%%%%%%%%%%%%%%%%%%%%%%%%%%%%%%%
\subsection{Conditions on the system. Cauchy problem}
%%%%%%%%%%%%%%%%%%%%%%%%%%%%%%%%%%%%%%%
We impose the  conditions~\textbf{A1}--\textbf{A3} on  the
coupling function $\rho(x)=(\rho_1(x),\ldots,\rho_N(x))$,
$\rho_n(x)=(\rho_{n1}(x),\ldots,\rho_{n4}(x))\in \mathbb{C}^4$,
$n=1,\ldots,N$, $x\in\mathbb{R}^3$, and on the
matrix $V$.
\begin{itemize}
\item[{\textbf{A1}.}]
$\rho(-x)=\rho(x)$, $\rho\in C^1(\mathbb{R}^3;\mathbb{C}^4)$,
$\rho(x)=0$ for $|x|\ge R_\rho$.

\item[{\textbf{A2}.}]
The matrix $V-m_*^2\mathrm{I}-K$ is positive definite,
where $m_*=\min\{m_n,n=1,\ldots,N\}$ and the matrix $K=(K_{ij})_{i,j=1}^3$
has entries
$$
K_{ij}:=\sum\limits_{n=1}^N\frac{m_n}{(2\pi)^{3}}
\int_{\mathbb{R}^3}\frac{k_ik_j\,\mathcal{B}_n(k)}{k^2+m_n^2-m_*^2}\,dk,
\quad i,j=1,2,3.
$$
Here $\mathcal{B}_n(k):=\widehat{\rho}_n(k)\cdot\beta\widehat{\rho}_n(k)$,
where $\widehat{\rho}_n(k)=\int e^{ik\cdot x}\rho_n(x)\,dx$
is the Fourier transform of the function $\rho_n$.
By \eqref{matD},
$\mathcal{B}_n(k)=|\widehat{\rho}_{n1}(k)|^2+|\widehat{\rho}_{n2}(k)|^2
-|\widehat{\rho}_{n3}(k)|^2-|\widehat{\rho}_{n4}(k)|^2$.

\item[{\textbf{A3}.}]
$\mathcal{B}_n(k)>0$  for all   $k\in\mathbb{R}^3\setminus\{0\}$ and $n=1,\ldots,N$.
\end{itemize}
%%%%%%%%%%%%%%%%%%%%%%%%%%%%%%%%%%%%%

We denote $\psi(x,t)=(\psi_1(x,t),\ldots,\psi_N(x,t))$,
$\psi^0(x)=\left(\psi^0_1(x),\ldots,\psi_N^0(x)\right)$.
We study the Cauchy problem for the system \eqref{1}--\eqref{2}
with initial data
\begin{equation}\label{ID}
\psi(x,0)=\psi^0(x),\quad
q(0)=q^0,\quad \dot q(0)=p^0.
\end{equation}
We denote
$Y_0\equiv(\psi^0(x),q^0,p^0)$, $Y(t)\equiv(\psi(x,t),q(t),\dot q(t))$.
Then  the system \eqref{1}--\eqref{2} writes as
\begin{equation}\label{CP}
\dot Y(t)={\cal F}(Y(t)),\quad t\in\mathbb{R};\quad Y(0)=Y_0.
\end{equation}
%%%%%%%%%%%%%%%%%%%%%%%%%%%%%%%%%%

We assume that the initial date $Y_0$  belongs to the phase space ${\cal E}$.
\begin{definition}\label{d1.1}
Denote by $\mathcal{H}\equiv [L^2_{\mathrm{loc}}(\mathbb{R}^3;\mathbb{C}^{4})]^N$
the Fr\'echet space of complex- and vector-valued functions $\psi(x)=(\psi_1(x),\ldots,\psi_N(x))$,
endowed with local (charge) seminorms
$$
\Vert \psi\Vert^2_{0,R}\equiv
\int\limits_{|x| <R} |\psi(x)|^2\,dx   <\infty,   \quad \forall R>0.
$$
The phase space  ${\cal E} \equiv \mathcal{H}\oplus \mathbb{R}^3\oplus \mathbb{R}^3$
is the Fr\'echet space of vectors $Y\equiv(\psi(x),q,p)$,
endowed with the local seminorms
$
\Vert Y\Vert^2_{{\cal E},R}=
\Vert\psi\Vert^2_{0,R}+|q|^2+|p|^2$, $\forall R>0$.
\end{definition}
%%%%%%%%%%%%%%%%%%%%%%%%%%%
 \begin{lemma}    \label{p1.1}
Let conditions {\bf A1}--{\bf A3} hold. Then (i)
for every $Y_0 \in {\cal E}$, the Cauchy problem~\eqref{CP}
has a unique solution $Y(t)\in C(\mathbb{R}, {\cal E})$.
(ii) For every $t\in \mathbb{R}$, the operator $U(t):Y_0\mapsto  Y(t)$
 is continuous on ${\cal E}$.  Moreover, for every $R>R_\rho$ and $T>0$,
$$
\sup\limits_{|t|\le T}
\Vert U(t) Y_0\Vert_{{\cal E},R}\le
C(T)\Vert Y_0\Vert_{{\cal E},R+T}.
$$
\end{lemma}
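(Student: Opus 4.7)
The plan is a Duhamel/fixed-point construction, transported from $L^2$ to the Fr\'echet space $\mathcal{E}$ via the finite propagation speed of the free Dirac equation and the compact support of $\rho$ guaranteed by \textbf{A1}. Set $H_n:=-i\alpha\cdot\nabla+\beta m_n$. The anticommutation relations \eqref{rel} give $H_n^2=-\Delta+m_n^2$, so $H_n$ is essentially self-adjoint on $C_0^\infty(\mathbb{R}^3;\mathbb{C}^4)$ and $U_n(t):=e^{-iH_n t}$ is a strongly continuous unitary group on $L^2(\mathbb{R}^3;\mathbb{C}^4)$. The free oscillator $\ddot q=-Vq$ generates a matrix one-parameter group $G(t)$ on $\mathbb{R}^3\oplus\mathbb{R}^3$. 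Rewrite \eqref{CP} as the Duhamel system
\begin{equation*}
\psi_n(t)=U_n(t)\psi_n^0+i\!\int_0^t\!U_n(t-s)\,q(s)\cdot\nabla\rho_n\,ds,\quad (q(t),p(t))=G(t)(q^0,p^0)+\!\int_0^t\!G(t-s)(0,F(s))\,ds,
\end{equation*}
with $F(s):=\sum_n\langle\psi_n(s),\nabla\rho_n\rangle$, which by \textbf{A1} depends only on $\psi_n(s)$ restricted to $B_{R_\rho}$. For $Y_0\in L^2\oplus\mathbb{R}^6$ this is a linear contraction on $C([0,T_0],L^2\oplus\mathbb{R}^6)$ for small $T_0>0$; by linearity the local solution extends globally to a unique $Y\in C(\mathbb{R},L^2\oplus\mathbb{R}^6)$.

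To reach general $Y_0\in\mathcal{E}$ I would invoke the light-cone energy inequality
\begin{equation*}
\Vert U_n(t)\psi^0\Vert_{0,R}\le\Vert\psi^0\Vert_{0,R+|t|},\qquad R>0,\ t\in\mathbb{R},
\end{equation*}
which follows from the Hermiticity of $\alpha_j,\beta$: multiplying $i\dot\psi=H_n\psi$ by $\psi^*$ yields the conservation law $\partial_t|\psi|^2+\nabla\cdot(\psi^*\alpha\psi)=0$, and $|\psi^*\alpha\cdot\nu\,\psi|\le|\psi|^2$ for every unit vector $\nu$, so integrating the divergence identity over the backward light cone with base $B_R\times\{t\}$ gives the claim. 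Given $Y_0\in\mathcal{E}$, $R>R_\rho$ and $T>0$, cut $\psi^0$ off outside $B_{R+T+1}$ to obtain an $L^2$ datum $\widetilde Y_0$, solve by the previous step, and check that the result inside $B_R$ on $[-T,T]$ does not depend on the choice of cutoff once its radius exceeds $R+T$: the Dirac source $q(s)\cdot\nabla\rho_n$ is supported in $B_{R_\rho}\subset B_R$, while the oscillator force $F(s)$ only samples $\psi_n(s)$ on $B_{R_\rho}$, which by the light-cone bound is determined by the initial data on $B_{R_\rho+|s|}\subset B_{R+T}$. A Gronwall argument applied to the difference of two such cutoff solutions shows they agree on $B_R\times[-T,T]$, producing a well-defined $Y\in C(\mathbb{R},\mathcal{E})$; uniqueness follows by the same argument.

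For assertion (ii), insertion of the light-cone inequality into the Duhamel formulas gives, for every $R>R_\rho$,
\begin{equation*}
\Vert\psi_n(t)\Vert_{0,R}\le\Vert\psi_n^0\Vert_{0,R+|t|}+C\!\int_0^{|t|}\!|q(s)|\,ds,\qquad|q(t)|+|p(t)|\le C\bigl(|q^0|+|p^0|\bigr)+C\!\int_0^{|t|}\!\sum_n\Vert\psi_n(s)\Vert_{0,R_\rho}\,ds.
\end{equation*}
Since $R>R_\rho$ implies $\Vert\psi_n(s)\Vert_{0,R_\rho}\le\Vert\psi_n(s)\Vert_{0,R}$, substituting one inequality into the other and applying Gronwall on $[0,T]$ yields $\sup_{|t|\le T}\Vert U(t)Y_0\Vert_{\mathcal{E},R}\le C(T)\Vert Y_0\Vert_{\mathcal{E},R+T}$. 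Continuity of $U(t)$ on $\mathcal{E}$ then follows from the same estimate applied to the difference of two solutions, which solves the same linear system with initial datum $Y_0^{(1)}-Y_0^{(2)}$. The one genuinely non-routine step I expect to be the main obstacle is establishing the light-cone energy inequality for the Dirac group on purely local $L^2$ data; once in hand, the rest of the argument reduces to bookkeeping and a single Gronwall estimate.
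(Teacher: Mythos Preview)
Your argument is correct and follows essentially the same route as the paper: Duhamel representation plus finite propagation speed for the free Dirac group, with the coupling localized by the compact support of $\rho$ from \textbf{A1}. The paper simply cites Mikhailov's textbook for the standard existence/uniqueness machinery and quotes the light-cone bound as Lemma~\ref{p1.0} (your inequality $\Vert U_n(t)\psi^0\Vert_{0,R}\le\Vert\psi^0\Vert_{0,R+|t|}$ is exactly \eqref{24}), whereas you spell out the contraction, cutoff, and Gronwall steps explicitly.
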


Lemma~\ref{p1.1} follows from \cite[Thms. V.3.1, V.3.2]{Mikh})
because the speed of propagation for Eqs.~\eqref{1}--\eqref{2} is finite
by condition~\textbf{A1} and the Duhamel integral representation~\eqref{6.4} (see also Lemma~\ref{p1.0} below).
%%%%%%%%%%%%%%%%%%%%%%%%%%%%%%%%%%%%%

Let us choose a function
$\zeta(x)\in C_0^\infty(\mathbb{R}^3)$ such that $\zeta(0)\ne 0$.
Denote by $H^s_{\rm loc}(\mathbb{R}^3)$, $s\in \mathbb{R}$,  the local Sobolev spaces,
i.e., the Fr\'echet spaces
of distributions $\psi\in D'(\mathbb{R}^3)$ with the finite seminorms
$
\Vert \psi\Vert _{s,R}:= \Vert\Lambda^s\left(\zeta(x/R)\psi\right)\Vert_{L^2(\mathbb{R}^3)},
$
where $\Lambda^s \psi:=F^{-1}_{k\to x}(\langle k\rangle^s\widehat{\psi}(k))$,
$\langle k\rangle:=\sqrt{|k|^2+1}$, and $\widehat{\psi}:=F \psi$ is the Fourier transform
of a tempered distribution $\psi$. For $\psi\in C_0^\infty(\mathbb{R}^3)$,
we write $F\psi(k)= \int e^{i k\cdot x} \psi(x) dx$.
%Note that  the space  $H_{\rm loc}^{s}(\mathbb{R}^3)$
%for $s=0$ agrees with Definition~\ref{d1.1}.
%%%%%%%%%%%%%%%%%%%%%%%%%%%%%%%%%%%%%%%%%%%%%%
\begin{definition}
Write $\mathcal{H}^s:=\left[H^s_{\mathrm{loc}}(\mathbb{R}^3;\mathbb{C}^4)\right]^N$.
 We denote
${\cal E}^{s}:= \mathcal{H}^s\oplus \mathbb{R}^3\oplus \mathbb{R}^3$, $s\in \mathbb{R}$.
\end{definition}
%%%%%%%%%%%%%%%%%%%%%%%%%%%%%%%%%%%%%5

Using standard techniques of pseudodifferential operators
and Sobolev's Theorem (see, e.g., \cite{H3}), it is possible to prove that
 ${\cal E}^0={\cal E}\subset {\cal E}^{-\varepsilon}$ for every $\varepsilon>0$,
and the embedding  is compact.

%%%%%%%%%%%%%%%%%%%%%%%%%%%   2.2    %%%%%%%%%%
\subsection{Random solution. Convergence to equilibrium}
%%%%%%%%%%%%%%%%%%%%%%%%%%%%%%%%%%%%%%

Let $(\Omega,\Sigma,P)$ be a probability space  with expectation $\mathbb{E}$
and ${\cal B}({\cal E})$ denote the Borel $\sigma$-algebra in ${\cal E}$.
We assume that $Y_0=Y_0(\omega,x)$ in the problem~\eqref{CP}
is a measurable random function
with values in $({\cal E},\,{\cal B}({\cal E}))$.
In other words, $(\omega,x)\mapsto Y_0(\omega,x)$ is a measurable map
$\Omega\times \mathbb{R}^3\to \mathbb{C}^{4N}\oplus\mathbb{R}^6$
 with respect to the (completed) $\sigma$-algebra
$\Sigma\times{\cal B}(\mathbb{R}^3)$ and ${\cal B}(\mathbb{C}^{4N}\oplus\mathbb{R}^6)$.
Then $Y(t)=U(t) Y_0$ is also a measurable random function with values in
$({\cal E},{\cal B}({\cal E}))$ by Lemma~\ref{p1.1}.
We denote by $\mu_0(dY_0)$ a Borel probability measure in ${\cal E}$
giving the distribution of  $Y_0$.
Without loss of generality, we assume
$(\Omega,\Sigma,P)=({\cal E},{\cal B}({\cal E}),\mu_0)$
and $Y_0(\omega,x)=\omega(x)$ for
$\mu_0(d\omega)\times dx$-almost all $(\omega,x)\in{\cal E}\times \mathbb{R}^3$.

We identify the complex and real spaces $\mathbb{C}^4\equiv\mathbb{R}^8$,
and  $\otimes$ stands for the tensor product of real vectors.
%%-----------------------------------------------
\begin{definition}
 $\mu_t$ is a Borel probability measure in ${\cal E}$
which gives the distribution of $Y(t)$:
$\mu_t(B) = \mu_0(U(-t)B)$,
$\forall B\in {\cal B}({\cal E})$,  $t\in  \mathbb{R}$.
 The correlation functions of measure $\mu_t$ are
defined by
$
Q_t(x,y)\equiv \mathbb{E} \left(Y(x,t)\otimes Y(y,t)\right)\quad
\mbox{for almost all }\,\,\, x,y\in \mathbb{R}^3.
$
\end{definition}

%%----------------------------------------------
Our main objective is to prove
 the weak convergence of the measures $\mu_t$
in the Fr\'echet spaces ${\cal E}^{-\varepsilon }$ for each  $\varepsilon>0$,
\begin{equation}\label{1.8}
\mu_t\rightharpoondown \mu_\infty  \qquad \mbox{as }\quad t\to \infty,
%\mu_t\,\buildrel {\hspace{2mm}{\cal E}^{-\varepsilon }}\over
%{- \hspace{-2mm} \rightharpoondown }
%\, \mu_\infty  \quad{\rm as}\quad t\to \infty,
\end{equation}
where $\mu_\infty$ is a limit measure on ${\cal E}$.
By definition, this means the convergence of the following integrals
$$
 \int f(Y)\mu_t(dY)\rightarrow
 \int f(Y)\mu_\infty(dY)\quad \mbox{as}\quad t\to \infty,
$$
 for any bounded continuous functional $f(Y)$
 on  ${\cal E}^{-\varepsilon }$.
 Moreover, in Section~4 we prove the convergence of the correlation
functions of the measures $\mu_t$ to a limit as $t\to\infty$.
%%%%%%%%%%%%%%%%%%%%%%%%%%%%%%%%%%%%%%%%%%%%%
\smallskip

Using the methods of \cite{VF} (Russian ed., Appendix~II,
 and English ed., Theorem XII.5.2) and
 the technique  of \cite{DKM2, DKM-03}, we conclude that
the convergence~\eqref{1.8} follows from the following three assertions.
\begin{itemize}
\item[\bf I.] The family of measures $\mu_t$, $t\geq 0$, is weakly
compact in ${\cal E}^{-\varepsilon }$ for each  $\varepsilon>0$.

\item[\bf  II.] The correlation functions of $\mu_t$ converge to a limit,
\begin{equation*}%%\label{corf}
Q_t(x,y)\equiv \int\left( Y(x) \otimes Y(y)\right)\mu_t(Y)
\to Q_\infty(x,y),\quad t\to\infty.
\end{equation*}
\item[\bf  III.]
The characteristic functionals of $\mu_t$ converge to a Gaussian functional,
\begin{equation}\label{2.6i}
 \widehat{\mu}_t(Z) = \int \exp\left({i\langle Y,Z\rangle}\right)\mu_t(dY)
\rightarrow  \exp\left\{-\frac{1}{2}{\cal Q}_\infty (Z,Z)\right\},
\quad t\to\infty,
\end{equation}
where ${\cal Q}_\infty$ is the quadratic form with the  integral kernel
$Q_\infty(x,y)$.
\end{itemize}
%%%%%%%%%%%%%%%%%%%%%%%%%%%%%%%%%%%%%%%%%%%%%%%%%%%%%
\smallskip

Let us explain the main idea of the proof.
At first, we derive the decay of the order $(1+|t|)^{-3/2}$
for the local charge of the solution $Y(t)$ to problem~\eqref{CP} assuming
that the initial date $Y_0$ has a compact support
(see Theorem~\ref{l5.1}).
Then, we apply the integral representation~\eqref{6.4}
of $Y(t)$ and prove the uniform bound~\eqref{7.1.1}
for the mean local charge density with respect to the measure
 $\mu_t$, $t\ge0$.
Finally, property {\bf I} follows from the Prokhorov compactness
theorem; see \cite[Lemma II.3.1]{VF}.
%Note that if the initial measure is Gaussian, then the %convergence~\eqref{1.8}
%follows from Assertions~\textbf{I} and \textbf{II}. Therefore, in this case %we do not need
%the mixing condition for the initial measure.

To prove the assertions \textbf{II} and \textbf{III},
 we derive the asymptotic behavior of the solution
$Y(t)$ (see Corollary~\ref{c7.2}) of the form
\begin{equation}\label{0.1}
\langle Y(t),Z\rangle\sim
\sum\limits_{n=1}^N\left\langle W_n(t)\psi_n^0,\chi_n^Z\right\rangle,
\quad t\to\infty,
\end{equation}
where $W_n(t)$ is a solving operator to the Cauchy problem
for the free Dirac equation~\eqref{3},
the functions $\chi_n^Z$ are expressed by $Z\in C_0^\infty(\mathbb{R}^3)\oplus \mathbb{R}^6$ (see formula~\eqref{hn}).
Finally, we apply the results of \cite{D-2010}, where the weak convergence
of the statistical solutions
is proved for free Dirac equations.

%%%%%%%%%%%%%%%%%%%%%%%%%    2  %%%%%%%%%%%%%%%%%%%%%%%%
\section{Main result}
%%%%%%%%%%%%%%%%%%%%%%%%%%%%%%%%%%%%%%%%%%%%%%%%%%%%%%%%%%%
Recall that the initial data $Y_0$ in the problem~\eqref{CP}
is a random function with a distribution $\mu_0$.
We write $\nu_0:=P\mu_0$, where $P:(\psi^0,q^0,p^0)\in {\cal E}
\to\psi^0\in \mathcal{H}$.

%%%%%%%%%%%%%%
\begin{definition}
The correlation functions of the measure $\nu_0$ are
defined by the rule
$$
  Q_{0,nn'}^{\nu,ij}(x,y)\equiv
\int {\cal R}^i\psi^0_n(x){\cal R}^j\psi^0_{n'}(y)\,\nu_0(d\psi^0)
\quad\mbox{for almost all }\,\, x,y\in \mathbb{R}^3,\quad i,j=1,\ldots,8,
$$
$n,n'\in \overline{N}$,
provided that the expectations in the right hand side are finite.
\end{definition}

%%%%%%%%%%%%%%%%%%%%%%%%%%%%%%%%%%%%%%%%
\subsection{Conditions on the initial measure}
%%--------------------------------------------
We assume that  the initial measure $\mu_0$ satisfies
conditions {\bf S1}--{\bf S5}.
\begin{itemize}
\item[\bf S1.] $\mu_0$ has zero expectation value,
$\mathbb{E} \left(Y_0(x)\right)\equiv \int Y_0(x)\,\mu_0(dY_0)= 0$, $x\in \mathbb{R}^3$.

\item[\bf S2.]
$\mu_0$ has finite mean charge density, i.e.,
\begin{equation}\label{med}
\mathbb{E}\left(|\psi^0(x)|^2\right)\le e_0 <\infty,\quad \mathbb{E}(|q^0|^2+|p^0|^2)<\infty.
\end{equation}

\item[\bf S3.] The correlation functions of the measure $\nu_0$
are translation invariant, i.e.,
\begin{equation*}%%\label{2.4'}
Q_{0,nn'}^{\nu,ij}(x,y)=
q^{\nu,ij}_{0,nn'}(x-y),\quad x,y\in \mathbb{R}^3,\quad n,n'\in \overline{N},
\quad i,j=1,\ldots,8.
\end{equation*}
\item[\bf S4.] The correlation functions of $\nu_0$ obey the bound
\begin{equation*}%%\label{11}
|q_{0,nn'}^{\nu,ij}(x)|  \le h(|x|),\quad x\in\mathbb{R}^3,\quad n,n'\in \overline{N},
\quad i,j=1,\ldots,8,
\end{equation*}
where $h$ is a nonnegative bounded function and $r^2 h(r)\in L^1(0,+\infty)$.
\end{itemize}
%%%%%%%%%%%%%%%%%%%%%%%%%%%%%%%%%%%%%%%%%%%%%%%%%%%
\begin{lemma} \label{l4.1}
Let conditions  {\bf S1}--{\bf S4} hold.
Then $\widehat{q}_{0,nn'}^{\nu,ij}\in L^1(\mathbb{R}^3)$ for any $i,\,j,\,n,\,n'$.
\end{lemma}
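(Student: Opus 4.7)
The plan is to exploit the positive-definiteness of the matrix-valued correlation function, combined with the pointwise bound from \textbf{S4}, and transfer that bound to the Fourier side via a Gaussian mollifier.

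First, I would note that $r^2 h(r)\in L^1(0,\infty)$ gives $\int_{\mathbb{R}^3} h(|x|)\,dx = 4\pi\int_0^\infty r^2 h(r)\,dr<\infty$, so by \textbf{S4} each correlation function $q_{0,nn'}^{\nu,ij}$ belongs to $L^1(\mathbb{R}^3)$; hence its Fourier transform is continuous and bounded, and the task is to upgrade this to $L^1$. Next, by \textbf{S1}--\textbf{S3} the random field $x\mapsto(\mathcal{R}^i\psi_n^0(x))_{n,i}\in\mathbb{R}^{8N}$ is stationary with zero mean, and its matrix-valued covariance $Q_0^\nu(x)=(q_{0,nn'}^{\nu,ij}(x))$ is positive definite in the distributional sense: for any test functions $\varphi_{n,i}\in C_0^\infty(\mathbb{R}^3;\mathbb{C})$,
\[
\sum_{n,n',i,j}\int\int q_{0,nn'}^{\nu,ij}(x-y)\varphi_{n,i}(x)\overline{\varphi_{n',j}(y)}\,dx\,dy \;=\; \mathbb{E}\Big|\sum_{n,i}\int\varphi_{n,i}(x)\,\mathcal{R}^i\psi_n^0(x)\,dx\Big|^2 \;\ge\; 0.
\]
By Parseval and the continuity of the entries, this forces the Hermitian matrix $(\widehat{q}_{0,nn'}^{\nu,ij}(k))$ to be positive semi-definite at \emph{every} $k\in\mathbb{R}^3$.

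For the diagonal case, fix $n,i$ and set $q(x):=q_{0,nn}^{\nu,ii}(x)$, so $\widehat{q}(k)\ge 0$ everywhere. I would test against the Gaussian mollifier $\varphi_\epsilon(x)=(2\pi\epsilon)^{-3/2} e^{-|x|^2/(2\epsilon)}$, whose Fourier transform is $\widehat{\varphi}_\epsilon(k)=e^{-\epsilon|k|^2/2}$, to obtain by Parseval
\[
\int q(x)\varphi_\epsilon(x)\,dx \;=\; (2\pi)^{-3}\int \widehat{q}(k)\,e^{-\epsilon|k|^2/2}\,dk.
\]
Since $|q(x)|\le h(|x|)\le\|h\|_\infty$ and $\int\varphi_\epsilon\,dx=1$, the left-hand side is bounded by $\|h\|_\infty$ uniformly in $\epsilon$. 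The right-hand side is nonnegative and rises monotonically, as $\epsilon\downarrow 0$, to $(2\pi)^{-3}\int\widehat{q}(k)\,dk$. Hence $\widehat{q}_{0,nn}^{\nu,ii}\in L^1(\mathbb{R}^3)$. For the off-diagonal entries, the pointwise positive semi-definiteness of the matrix $(\widehat{q}_{0,nn'}^{\nu,ij}(k))$ gives the Cauchy--Schwarz bound $|\widehat{q}_{0,nn'}^{\nu,ij}(k)|^2\le \widehat{q}_{0,nn}^{\nu,ii}(k)\,\widehat{q}_{0,n'n'}^{\nu,jj}(k)$, and an integral Cauchy--Schwarz then reduces the claim to the diagonal $L^1$ bound already proved.

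I expect the diagonal step to be the crux: passing from the mere pointwise decay of $q$ (with $h$ only bounded and $r^2 h\in L^1$) to an $L^1$ bound on $\widehat{q}$ relies crucially on the positivity $\widehat{q}\ge 0$ delivered by the Bochner-type argument; without it, the mollifier computation would give no control on $\int\widehat{q}$.
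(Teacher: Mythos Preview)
Your proof is correct and follows essentially the same route as the paper: both arguments use that \textbf{S4} puts $q_{0,nn'}^{\nu,ij}\in L^1(\mathbb{R}^3)$, invoke the positive-definiteness of the covariance to obtain a nonnegative (matrix-valued) spectral object, and then use the boundedness of the covariance at the origin to conclude finiteness of $\int\widehat q$. The only cosmetic difference is in packaging: the paper states the Bochner step and the ``total measure is finite'' step abstractly (using \textbf{S2} for $q(0)<\infty$, and $q\in L^2\Rightarrow\widehat q\in L^2$ to identify the spectral measure with a function), whereas you make the same passage explicit through the Gaussian mollifier and monotone convergence, and you handle the off-diagonal entries by the pointwise Cauchy--Schwarz inequality for positive semi-definite matrices. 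Both executions are equivalent; yours is slightly more self-contained.
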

%%--------------------------------
\begin{proof}
Conditions \textbf{S1}--\textbf{S4} imply
$$
\int\limits_{\mathbb{R}^3} |q_{0,nn'}^{\nu,ij}(x)|^p\,dx\le
C \int\limits_{\mathbb{R}^3}h^p(|x|)\,dx\le  C_1
 \int\limits_0^{+\infty} r^2 h(r)\,dr <\infty,\quad p\ge1.
$$
Hence,
\begin{equation}\label{4.7}
 q_{0,nn'}^{\nu,ij}\in L^p(\mathbb{R}^3),\quad p\ge1.
\end{equation}
Denote $\widehat{q}_{0,nn'}^{\nu,ij}(k)
=F_{x\to k}\left[{q}_{0,nn'}^{\nu,ij}(x)\right]$. We note that, due to condition \textbf{S3},
\begin{equation}\label{15}
\int\widehat{\mathcal{R}^i\psi_n}(k)\widehat{\mathcal{R}^j\psi_{n}}(k')
\,\nu_0(d\psi)
=F_{x\to k,\, y\to k'}\left[Q_{0,nn'}^{\nu,ij}(x,y)\right]
=(2\pi)^3\delta(k+k')\widehat{q}_{0,nn'}^{\nu,ij}(k).
\end{equation}
On the other hand, by Bohner's theorem, $\widehat{q}_{0}^{\nu}dk=(\widehat{q}_{0,nn'}^{\nu,ij}(k))dk$
is a nonnegative matrix-valued measure on $\mathbb{R}^3$,
since condition~\textbf{S3} implies that for any
$\chi=(\chi_1,\ldots,\chi_N)\in \mathcal{D}_0:=\left[C_0^\infty(\mathbb{R}^3;\mathbb{C}^4)\right]^N$,
\begin{align*}
\int\left|\langle \psi,\chi\rangle\right|^2\nu_0(d\psi)
=(2\pi)^{-3}\sum\limits_{n,n'=1}^N\sum\limits_{i,j=1}^8
\int_{\mathbb{R}^3}
\overline{\widehat{\mathcal{R}^i\chi_n}}(k)\,
\widehat{q}_{0,nn'}^{\nu,ij}
(k)\,\widehat{\mathcal{R}^j\chi_{n'}}(k)\,dk\ge0.
\end{align*}
In turn, condition {\bf S2} implies that the total measure
 $\widehat{q}_0^\nu(\mathbb{R}^3)$ is finite.
On the other hand, relation~\eqref{4.7} for $p=2$
gives $\widehat{q}^{\nu,ij}_{0,nn'}\in  L^2(\mathbb{R}^3)$.
Hence, $\widehat{q}^{\nu,ij}_{0,nn'}\in  L^1(\mathbb{R}^3)$.
\end{proof}
%%%%%%%%%%%%%%%%%%%%%%%%%%%%%%%%%%%%%%%%%
\begin{corollary}
The bound~\eqref{4.7}  with $p=1$ and the Hausdorff--Young inequality imply
\begin{equation}\label{16}
\left|\left\langle
Q_{0,nn'}^{\nu,ij}(x,y),f(x)g(y)\right\rangle\right|
\le C \Vert f\Vert_{L^2}\, \Vert g\Vert_{L^2}\quad\forall f,g\in L^2.
\end{equation}
\end{corollary}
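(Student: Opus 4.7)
The plan is to use translation invariance together with the fact that convolution by an $L^1$ kernel is bounded as a map $L^2 \to L^2$, which is exactly the content of the Hausdorff--Young hint at $p=1$.

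First I would invoke condition \textbf{S3} to rewrite the pairing as a convolution-type double integral,
$$
\bigl\langle Q_{0,nn'}^{\nu,ij}(x,y),\,f(x)g(y)\bigr\rangle
= \int_{\mathbb{R}^3}\!\!\int_{\mathbb{R}^3} q_{0,nn'}^{\nu,ij}(x-y)\,f(x)\,g(y)\,dx\,dy.
$$
Taking Fourier transforms in both $x$ and $y$ and using Parseval's identity (initially for $f,g\in C_0^\infty(\mathbb{R}^3)$, with extension to $L^2$ by density), this becomes, up to the constant $(2\pi)^{-3}$, an integral of the form
$$
(2\pi)^{-3}\int_{\mathbb{R}^3} \widehat{q}_{0,nn'}^{\nu,ij}(k)\,\widehat{f}(-k)\,\widehat{g}(k)\,dk,
$$
with the same reasoning as in \eqref{15}.

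Next I would apply the Hausdorff--Young inequality in its $p=1$ form: since $q_{0,nn'}^{\nu,ij}\in L^1(\mathbb{R}^3)$ by the bound \eqref{4.7} at $p=1$, its Fourier transform is bounded, with
$$
\bigl\|\widehat{q}_{0,nn'}^{\nu,ij}\bigr\|_{L^\infty}\le \bigl\|q_{0,nn'}^{\nu,ij}\bigr\|_{L^1}\le C.
$$
Then Cauchy--Schwarz in the $k$-integral followed by the Plancherel identity for $\widehat f$ and $\widehat g$ yields
$$
\bigl|\langle Q_{0,nn'}^{\nu,ij},\,f\otimes g\rangle\bigr|
\le (2\pi)^{-3}\bigl\|\widehat{q}_{0,nn'}^{\nu,ij}\bigr\|_{L^\infty}\,\bigl\|\widehat f\bigr\|_{L^2}\,\bigl\|\widehat g\bigr\|_{L^2}
\le C\,\|f\|_{L^2}\,\|g\|_{L^2},
$$
which is the claimed estimate.

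There is no real obstacle here: the statement reduces to the elementary fact that a translation-invariant bilinear form whose kernel lies in $L^1$ extends boundedly to $L^2\times L^2$. The only minor point worth verifying is the legitimacy of the Parseval step, which is standard after first restricting to Schwartz (or $C_0^\infty$) test functions and using density, since the right-hand bound controls the pairing uniformly in $L^2$.
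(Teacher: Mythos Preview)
Your proof is correct and follows exactly the approach the paper indicates: the corollary is stated without further proof in the paper, the words ``the bound~\eqref{4.7} with $p=1$ and the Hausdorff--Young inequality imply'' being the entire argument. You have simply written out the details---translation invariance, Fourier representation as in~\eqref{15}, the $L^1\to L^\infty$ bound on $\widehat q_{0,nn'}^{\nu,ij}$, and Cauchy--Schwarz plus Plancherel---which is precisely what is intended.
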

%%%%%%%%%%%%%%%%%%%%%%%%%%%%%%%%%%

To prove the convergence of correlation functions and the compactness of
 $\{\mu_t,\,t\ge0\}$,
we impose conditions~{\bf S1}--{\bf S4}.
If the  initial measure is Gaussian, then for the proof of assertion~\eqref{1.8}
also it suffices to impose {\bf S1}--{\bf S4}.
However, to prove \eqref{1.8} in the case of non-Gaussian initial measures,
we need a stronger condition ${\bf S5}$
than ${\bf S4}$. To state it, we define a mixing condition for the measure $\nu_0$.
%%%%%%%%%%%%%%%%%%%%%%%%%%%%%%%%%%%%%%%%%%%%%%%%
\begin{definition}\label{dmix}
We denote by $\sigma({\cal A})$
the $\sigma$-algebra  in ${\cal H}$ generated by the
linear functionals $\psi\mapsto\, \langle \psi,\chi\rangle$,
where  $\chi\in {\cal D}_0$ with
$\mathrm{supp}\chi\subset {\cal A}\subset \mathbb{R}^3$.
We define the $\alpha$- and $\varphi$-mixing coefficients
of a probability measure $\nu_0$ on ${\cal H}$
by the rule (cf \cite[Def. 17.2.2]{IL} and \cite{Ros})
\begin{equation*}%%%\label{ilc}
\alpha(r):=
|\nu_0(A\cap B) - \nu_0(A)\nu_0(B)|,\quad
\varphi(r):=
\frac{| \nu_0(A\cap B) - \nu_0(A)\nu_0(B)|}{ \nu_0(B)},\quad r\ge0,
\end{equation*}
where the supremum is taken over all sets
$A\in\sigma({\cal A})$ and $B\in\sigma({\cal B})$  and
all pairs of open convex subsets
 ${\cal A}, {\cal B} \subset \mathbb{R}^3$ at a distance
$d({\cal A},{\cal B})\geq r$.
 We say that the measure $\nu_0$ satisfies
the $\alpha$-mixing ($\varphi$-mixing) condition if
$\alpha(r)\to 0$ ($\varphi(r)\to 0$) as $r\to\infty$.
\end{definition}
%%%%---------------------------------------------------
{\bf S5.} The measure $\nu_0$  satisfies the $\varphi$-mixing condition, and
\begin{equation}\label{1.12}
 r^2\varphi^{1/2}(r)\in L^1[0,+\infty).
\end{equation}

%%%%%%%%%%%%%%%%%%%%%%%%%%%%%%%%%%%%%%%
{\bf Remarks}. (1) Instead of the $\varphi$-mixing condition,
it suffices to assume the $\alpha$-mixing condition~\cite{Ros}
together with a higher degree ($>2$) in the bound~\eqref{med}, i.e.,
to assume that there is  a $\delta$, $\delta >0$, such that
\begin{equation}\label{med'}
\mathbb{E}\left(|\psi^0(x)|^{2+\delta}\right)\le e_\delta <\infty.
\end{equation}
In this case,  we assume that
$r^2\alpha^{p}(r)\in L^1[0,+\infty)$ with $p=\min\{\delta/(2+\delta),1/2\}$
(cf. bound~\eqref{1.12}).
Furthermore, the $\alpha$- and $\varphi$-mixing conditions can be weakened
by a similar way as in \cite{D-Mixing-23}.

(2) Conditions \textbf{S2} and \textbf{S5} implies \textbf{S4}, where
 $h(r)=Ce_0\varphi^{1/2}(r)$ (in the case of the $\varphi$-mixing)
 with the constant $e_0$ from bound~\eqref{med}, or $h(r)=Ce_\delta^{2/(2+\delta)}\alpha^{2/(2+\delta)}(r)$
(in the case of the $\alpha$-mixing)
with the constant $e_\delta$ from \eqref{med'}.
This follows from \cite[Theorems 17.2.2, 17.2.3]{IL}.
%%%%%%%%%%%%%%%%%%%%%%%%%%%5
\smallskip

Before to state the main result, we formulate the result of \cite{DKM-03}
on the statistical stabilization for the free Dirac fields.

%%%%%%%%%%%%%%%%%%
\subsection{Convergence to equilibrium for the free Dirac equation}
%%%%%%%%%%%%%%%%%%%%%%%%%%%%%%%
Let us fix a number $n\in \overline{N}$ and write $l_n(\nabla):=\alpha\cdot\nabla+i\,\beta m_n$.
The  Cauchy problem for the free Dirac equation
 reads
\begin{equation}\label{3}
\left(\partial_t+l_n(\nabla)\right)\psi_n(x,t)=0,\quad t>0,
\quad \psi_n(x,0)=\psi^0_n(x), \quad x\in \mathbb{R}^3.
\end{equation}
In the Fourier transform, the solution to problem~\eqref{3} is
$\widehat{\psi}_n(k,t)=e^{i(\alpha\cdot k-\beta m_n)t}\,\widehat{\psi}^0_n(k)$. Hence,
\begin{equation}\label{2.8}
\Vert W_n(t) \psi^0_n\Vert=\Vert\psi^0_n\Vert,\quad\psi_n^0\in L^2,\quad t\in\mathbb{R},
\end{equation}
by  relations~\eqref{rel}. Here and below, $\Vert\cdot\Vert$ denotes the norm in $L^2$.
Below, we apply the following
well-known bounds (see, e.g., \cite{RS3}). Let $\psi^0_n=0$ for $|x|\ge R_1$. Then
for any $R>0$,
\begin{equation}\label{eqfreeD}
\Vert W_n(t)\psi^0_n\Vert_{0,R}
\le C (1+t)^{-3/2}\Vert\psi_n^0\Vert_{0,R_1},\quad t\ge0.
\end{equation}
% and for any $\chi\in C^\infty_0(\mathbb{R}^3)$
%$$
%\langle W_n(t)\psi^0_n,\chi\rangle=O(\langle t\rangle^{-3/2}).
%$$
%%%%%%%%%%%%%%%%%%%%%%%%%%%%%%%%%%%%%%%%%%%%%%%%%%%%%%%
Formulas  \eqref{matD} and \eqref{rel} imply
$\left(\partial_t+l_n(\nabla)\right)
\left(\partial_t- l_n(\nabla)\right)=(\partial^2_t -\triangle+m^2_n)\mathrm{I}.
$
Then,
the fundamental solution ${\cal E}_n(x,t)$ of the Dirac operator,
 i.e., a solution of the  equation
$$
\left(\partial_t+l_n(\nabla)\right)
{\cal E}_n(x,t)=\delta(x,t)\mathrm{I},\quad {\cal E}_n(x,t)=0\quad
\mbox{for }\,\,\, t<0,
$$
has the form
${\cal E}_n(x,t)=\left(\partial_t -l_n(\nabla)\right)g_{t,n}(x)$, $t>0$,
where  $g_{t,n}(x)$
is a fundamental solution for the Klein--Gordon operator
$\partial^2_t -\triangle+m^2_n$, and $g_{t,n}$ vanishes for $t<0$.
Hence,  $W_n(t)$ is a convolution operator of the form
\begin{equation}\label{25}
W_n(t) \psi^0_n={\cal E}_n(\cdot,t)*\psi^0_n=
\left(\partial_t - l_n(\nabla)\right)g_{t,n}*\psi^0_n.
\end{equation}
%%%%%%%%%%%%%%%%%%%%%%%%%%%%%%%%%%%

\textbf{Remark}.
The function $g_{t,n}(x)$ is given by
$g_{t,n}(x)=F^{-1}_{k\to x}\left[\frac{\sin \omega_n(k) t}{\omega_n(k)}\right]$,
$\omega_n(k)\equiv \sqrt{|k|^2+m_n^2}$.
Then, by the Paley--Wiener Theorem (see, e.g., \cite[Theorem~7.3.1]{RS2}),
the function  $g_{t,n}(\cdot)$ is supported by the ball
$|x|\le t$.
%n^Hence, the convolution in \eqref{25} exists.
The following lemma is proved in \cite{DKM-03}.
%%-----------------------------------------
 \begin{lemma}    \label{p1.0}
 For any $\psi_0 \in \mathcal{H}_1:=L^2_{\mathrm{loc}}(\mathbb{R}^3;\mathbb{C}^4)$,
 there exists  a unique solution $\psi_n(\cdot,t)\in C(\mathbb{R},\,\mathcal{H}_1)$
 to the Cauchy problem~\eqref{3}.
 For any  $t\in \mathbb{R}$, the operator $W_n(t):\psi_n^0\mapsto  \psi_n(\cdot,t)$
 is continuous in $\mathcal{H}_1$ and for any $\psi_n^0\in \mathcal{H}_1$ and $R>R_\rho>0$,
 \begin{equation}\label{24}
 \Vert W_n(t)\psi_n^0\Vert_{0,R}\le \Vert\psi_n^0\Vert_{0,R+|t|},\quad t\in\mathbb{R}.
 \end{equation}
 \end{lemma}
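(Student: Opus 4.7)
The plan is to construct the solution first for $L^2$ data by Fourier transform, deduce finite speed of propagation from the convolution representation~\eqref{25}, and then extend to $\mathcal{H}_1$ by a localization argument.

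First, for initial data $\psi_n^0\in L^2(\mathbb{R}^3;\mathbb{C}^4)$, I would define the solution spectrally by $\widehat{W_n(t)\psi_n^0}(k)=e^{i(\alpha\cdot k-\beta m_n)t}\widehat{\psi}_n^0(k)$. The matrix $\alpha\cdot k-\beta m_n$ is Hermitian by~\eqref{rel}, so the exponential is unitary for each $k$; this yields the global bound~\eqref{2.8} and shows that $W_n(t)$ is a strongly continuous unitary group on $L^2(\mathbb{R}^3;\mathbb{C}^4)$, giving existence and uniqueness for $L^2$ data.

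Second, I would establish finite speed of propagation for compactly supported data via~\eqref{25}. The Paley--Wiener theorem (in the form noted in the remark preceding the lemma) gives $\mathrm{supp}\,g_{t,n}\subset\{|x|\le|t|\}$, and since $\partial_t-l_n(\nabla)$ is a local differential operator, also $\mathrm{supp}\,\mathcal{E}_n(\cdot,t)\subset\{|x|\le|t|\}$. Hence if $\psi_n^0\in L^2$ is supported in $\{|y|\le R'\}$, then $W_n(t)\psi_n^0$ is supported in $\{|x|\le R'+|t|\}$.

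Third, for a general $\psi_n^0\in\mathcal{H}_1$ and each $R>0$, I would set $\tilde\psi_n^0(y):=\mathbf{1}_{|y|\le R+|t|}\,\psi_n^0(y)\in L^2(\mathbb{R}^3;\mathbb{C}^4)$ and declare $\psi_n(x,t):=(W_n(t)\tilde\psi_n^0)(x)$ for $|x|\le R$. Finite speed of propagation ensures that the restriction to $\{|x|\le R\}$ does not depend on the choice of truncation, because the difference of any two $L^2$ extensions is supported in $\{|y|>R+|t|\}$ and, by the preceding step, its $W_n(t)$-image vanishes on $\{|x|\le R\}$. Letting $R\to\infty$, these local pieces assemble into a unique $\psi_n(\cdot,t)\in\mathcal{H}_1$, and
\[
\Vert W_n(t)\psi_n^0\Vert_{0,R}=\Vert W_n(t)\tilde\psi_n^0\Vert_{0,R}\le\Vert W_n(t)\tilde\psi_n^0\Vert_{L^2}=\Vert\tilde\psi_n^0\Vert_{L^2}=\Vert\psi_n^0\Vert_{0,R+|t|},
\]
which is~\eqref{24}. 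Continuity of $W_n(t)$ on the Fr\'echet space $\mathcal{H}_1$ is then immediate from~\eqref{24} applied to differences, and uniqueness in $C(\mathbb{R},\mathcal{H}_1)$ follows by applying the same estimate to the difference of two putative solutions.

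The main obstacle is the rigorous verification of finite speed of propagation: with~\eqref{25} and the Paley--Wiener support bound at hand, this reduces to checking that the convolution $(\partial_t-l_n(\nabla))g_{t,n}*\psi_n^0$ coincides with the spectrally defined $L^2$-solution and that convolution with the compactly supported distribution $\mathcal{E}_n(\cdot,t)$ respects supports in the usual way. Everything else is bookkeeping with the local seminorms.
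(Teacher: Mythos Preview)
The paper does not prove this lemma itself; it simply cites \cite{DKM-03}. Your argument is correct and is the standard route: unitary $L^2$ evolution via Fourier transform and the Hermiticity coming from~\eqref{rel}, finite propagation speed from~\eqref{25} together with the Paley--Wiener support bound on $g_{t,n}$, and then truncation to pass from $L^2$ to $L^2_{\mathrm{loc}}$. The chain of (in)equalities you display for~\eqref{24} is exactly right.

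One small point worth tightening: your uniqueness step invokes ``the same estimate'' for the difference of two arbitrary $\mathcal{H}_1$-solutions, but~\eqref{24} has so far been derived only for the \emph{constructed} operator $W_n(t)$, not for a generic solution. To close the loop you need the domain-of-dependence property for an arbitrary distributional solution (e.g.\ via a local energy identity on backward light cones), which then forces any solution with zero data to vanish on every ball. This is routine and fits under what you call ``bookkeeping,'' but it is logically separate from the existence construction.
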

%%%%%%%%%%%%%%%%%%%%%%%%%%%%%%%%%%%%%%%%%%%%%%%%%%%%%%%%%%%%%%%%%%%%
%Lemma~\ref{p1.0} follows from \cite[Thms. V.3.1, V.3.2]{Mikh})
%because the speed of propagation for Eq.~\eqref{3} is finite.

%Introduce the following $8\times 8$ real valued matrices (in $4\times 4$ blocks)
%\begin{equation}\label{matr}
%\Lambda^1=\begin{pmatrix}
%    \alpha_1 & 0\\
%0 & \alpha_1\\
%\end{pmatrix},\quad
%          \Lambda^2=\begin{pmatrix}
%          0&i\alpha_2\\
%- i\alpha_2 &0\\
%\end{pmatrix},\quad
%\Lambda^3= \begin{pmatrix}
% \alpha_3 & 0\\
%0 & \alpha_3\\
%           \end{pmatrix},
%           \quad
%           \Lambda^0=\begin{pmatrix}
%0&-\beta\\
% \beta&0\\
%                     \end{pmatrix}
%\end{equation}
%$$
%{\Lambda}_n(\nabla)=\Lambda^1\partial_1+\Lambda_2\partial_2+\Lambda_3\partial_3
%+m_n\Lambda^0.
%$$
%%%%%%%%%%%%%%%%%%%%%%%%%%%%%%%%%%%%%%%%
Relation~\eqref{25} implies the following formula
\begin{equation}\label{ReW}
  \mathcal{R}(W_n(t)\psi^0_n)
=(\partial_t-\Lambda_n(\nabla))g_{t,n}*\mathcal{R}\psi_n^0,
 \quad \Lambda_n(\nabla):=\begin{pmatrix}
                            A_1 & -A_{2n}\\
                            A_{2n} & A_1 \\
                          \end{pmatrix},
\end{equation}
where $A_1\equiv A_1(\partial_1,\partial_3):=\alpha_1\partial_1+\alpha_3\partial_3$, $A_{2n}\equiv A_{2n}(\partial_2):=-i\alpha_2\partial_2+\beta m_n$.
%%%%%%%%%%%%%%%%%%%%%%%%%%%%%%%%%%%%%%%%%%%%%%%%%%%%%%%%%%%%%%%%%%%%
%Lemma~\ref{p1.0} follows from \cite[Thms. V.3.1, V.3.2]{Mikh})
%because the speed of propagation for Eq.~\eqref{3} is finite.

%Introduce the following $8\times 8$ real valued matrices (in $4\times 4$ blocks)
%\begin{equation}\label{matr}
%\Lambda^1=\begin{pmatrix}
%    \alpha_1 & 0\\
%0 & \alpha_1\\
%\end{pmatrix},\quad
%          \Lambda^2=\begin{pmatrix}
%          0&i\alpha_2\\
%- i\alpha_2 &0\\
%\end{pmatrix},\quad
%\Lambda^3= \begin{pmatrix}
% \alpha_3 & 0\\
%0 & \alpha_3\\
%           \end{pmatrix},
%           \quad
%           \Lambda^0=\begin{pmatrix}
%0&-\beta\\
% \beta&0\\
%                     \end{pmatrix}
%\end{equation}
%$$
%{\Lambda}_n(\nabla)=\Lambda^1\partial_1+\Lambda_2\partial_2+\Lambda_3\partial_3
%+m_n\Lambda^0.
%$$
%%%%%%%%%%%%%%%%%%%%%%%%%%%%%%%%%%%%%%%%
%%%%%%%%%%%%%%%%%%%%%%%%%%%%%%%%
\smallskip

We introduce the matrix-valued function

\begin{equation}\label{1.13}
Q^\nu_{\infty}(x,y)= \left(q^{\nu}_{\infty,nn'}(x-y)\right)_{n,n'=1}^N,
\quad
q^\nu_{\infty,nn'}(x):=\left\{
\begin{array}{ll}
q^\nu_{nn'}(x), & \mbox{if }\,m_n=m_{n'}, \\
 0, & \mbox{otherwise},
  \end{array}
   \right.
\end{equation}
 for almost all $x,y\in \mathbb{R}^3$,
 where  $q^\nu_{nn'}(x)=F^{-1}_{k\to x}[\widehat{q}^\nu_{nn'}(k)]$,
\begin{equation}\label{1.10}
 \widehat{q}^\nu_{nn'}(k):
=\frac{1}{2}\left(\widehat{q}^{\nu}_{0,nn'}(k)+
\widehat{\cal P}_n(k)\Lambda_n(-ik)\widehat{q}^\nu_{0,nn'}(k) \Lambda^\mathrm{T}_{n'}(ik)\right),
\quad n,n'\in \overline{N}.
\end{equation}
Here $\widehat{\cal P}_n(k)=1/(k^2+m_n^2)$,
and $\widehat{q}^\nu_{0,nn'}(k)=\left(\widehat{q}^{\nu,ij}_{0,nn'}(k)\right)_{i,j=1}^8$,  where $q^{\nu,ij}_{0,nn'}$ are the correlation functions
of the measure $\nu_0$.
Since $\Lambda^\mathrm{T}_n(ik)=-\Lambda_n(-ik)$, we have, formally,
\begin{equation*}
 q^\nu_{nn'} (x) =\frac{1}{2}\left(q^\nu_{0,nn'}(x)-
{\cal P}_n* \Lambda_n(\nabla) q^\nu_{0,nn'}(x)
\Lambda_{n'}(\stackrel{\leftarrow}\nabla)\right),
\end{equation*}
where ${\cal P}_n (z)=e^{-m_n|z|}/(4\pi|z|)$
is the fundamental solution for the
operator $-\Delta+m_n^2$, and
   $*$ stands for the convolution of distributions.
%%%%%%-----------------------------------------------

%Let $S(\mathbb{R}^3)$ stand for the set of all functions
%$\chi\in C^\infty(\mathbb{R}^3)$ such that
%$\sup_x|x^\beta D^\alpha\chi(x)|<\infty$ for all indices $\alpha,\beta$.

Denote by ${\cal Q}^\nu_{\infty} (\chi,\chi)$ a real quadratic form on
${\cal D}_0$ defined by
\begin{equation}\label{qpp}
{\cal Q}^\nu_{\infty} (\chi,\chi)=
\langle Q^\nu_{\infty}(x,y),\chi(x)\otimes \chi(y)\rangle
= \sum\limits_{n,n'=1}^N
\langle q^\nu_{\infty,nn'}(x-y),\chi_n(x)\otimes \chi_{n'}(y)\rangle,
\end{equation}
where $\langle\cdot\,,\cdot\rangle$ is defined in \eqref{10}.
%%%%%%%%%%%%%%%%%%%%%%%%%%%%%%%%%%%%%%%%%%%%%%%%%%
\begin{lemma}\label{r2.9}
For all $i,j,n,n'$, the functions $\widehat{q}_{\infty,nn'}^{\nu,ij}$ are bounded.
Hence, the form ${\cal Q}^\nu_{\infty}$ is continuous on $L^2$.
\end{lemma}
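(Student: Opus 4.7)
The plan is to first bound the input correlation $\widehat{q}^\nu_{0,nn'}$ in $L^\infty$, then check that the extra Fourier multiplier appearing in \eqref{1.10} is bounded, and finally transfer pointwise boundedness of the Fourier kernel into continuity of $\mathcal{Q}^\nu_\infty$ on $L^2$ via Parseval.

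First, I would observe that each scalar entry $\widehat{q}^{\nu,ij}_{0,nn'}$ is uniformly bounded: by \eqref{4.7} with $p=1$ we have $q^{\nu,ij}_{0,nn'}\in L^1(\mathbb{R}^3)$, so its Fourier transform lies in $L^\infty(\mathbb{R}^3)$ by the Hausdorff--Young inequality. Next, by definition \eqref{1.13} it suffices to treat the indices with $m_n=m_{n'}$, since $\widehat{q}^{\nu,ij}_{\infty,nn'}\equiv 0$ otherwise. From the explicit expression of $\Lambda_n$ in \eqref{ReW} and the formulas for $A_1,\,A_{2n}$, the entries of the symbol $\Lambda_n(-ik)$ are affine functions of $k$, whence $|\Lambda_n(-ik)|\le C(|k|+m_n)$, and the same for $\Lambda_{n'}^{\mathrm T}(ik)$.

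The key pointwise estimate is then
$$
\widehat{\mathcal{P}}_n(k)\,|\Lambda_n(-ik)|\,|\Lambda_{n'}^{\mathrm T}(ik)|
\le \frac{C(|k|+m_n)^2}{|k|^2+m_n^2}\le C',
\qquad k\in\mathbb{R}^3,
$$
using $2m_n|k|\le |k|^2+m_n^2$ and the strict positivity $m_n>0$ (this is the only place in the argument where a condition fails if masses are allowed to vanish, and so it is the main point to verify). Combining with \eqref{1.10} and the $L^\infty$ bound on $\widehat{q}^{\nu,ij}_{0,nn'}$ gives the uniform estimate
$$
\sup_{k\in\mathbb{R}^3}|\widehat{q}^{\nu,ij}_{\infty,nn'}(k)|\le C\|\widehat{q}^{\nu,ij}_{0,nn'}\|_{L^\infty}<\infty.
$$

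For the continuity of $\mathcal{Q}^\nu_\infty$ on $L^2$, I would exploit the translation invariance of $q^\nu_{\infty,nn'}(x-y)$ and rewrite, via Parseval (as in \eqref{15}),
$$
\mathcal{Q}^\nu_\infty(\chi,\chi)
=(2\pi)^{-3}\sum_{n,n'=1}^N\sum_{i,j=1}^8
\int_{\mathbb{R}^3}\overline{\widehat{\mathcal{R}^i\chi_n}(k)}\,
\widehat{q}^{\nu,ij}_{\infty,nn'}(k)\,
\widehat{\mathcal{R}^j\chi_{n'}}(k)\,dk.
$$
Applying the pointwise bound on $\widehat{q}^{\nu,ij}_{\infty,nn'}$ and Cauchy--Schwarz in $k$, followed by Parseval on each factor, yields $|\mathcal{Q}^\nu_\infty(\chi,\chi)|\le C\|\chi\|^2$ for $\chi\in\mathcal{D}_0$. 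Density of $\mathcal{D}_0$ in $L^2$ then extends $\mathcal{Q}^\nu_\infty$ to a continuous quadratic form on $L^2$. The whole argument is routine once the resolvent estimate in the middle paragraph is in place; that is really the only step requiring a moment's thought.
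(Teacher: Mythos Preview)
Your proof is correct and follows the same route as the paper's: use \eqref{4.7} with $p=1$ to get $\widehat{q}^{\nu,ij}_{0,nn'}\in L^\infty$, then observe that the multiplier $\widehat{\cal P}_n(k)\Lambda_n(-ik)\,(\cdot)\,\Lambda_{n'}^{\mathrm T}(ik)$ from \eqref{1.10} is bounded. The paper's argument is in fact just the two sentences ``Relation~\eqref{4.7} with $p=1$ implies that $\widehat{q}^{\nu,ij}_{0,nn'}(k)$ are bounded. Hence, \eqref{1.13} and \eqref{1.10} imply that $\widehat{q}_{\infty,nn'}^{\nu}$ are also bounded'', so you have simply made explicit the multiplier bound and the Parseval step for continuity that the paper leaves to the reader. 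One cosmetic point: the displayed bound $\sup_k|\widehat{q}^{\nu,ij}_{\infty,nn'}(k)|\le C\|\widehat{q}^{\nu,ij}_{0,nn'}\|_{L^\infty}$ should have $\max_{i',j'}\|\widehat{q}^{\nu,i'j'}_{0,nn'}\|_{L^\infty}$ on the right, since the matrix product mixes the entries; this does not affect the argument.
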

\begin{proof}
Relation~\eqref{4.7} with $p=1$ implies that
$\widehat{q}^{\nu,ij}_{0,nn'}(k)$ are bounded.
Hence, \eqref{1.13} and \eqref{1.10} imply that $\widehat{q}_{\infty,nn'}^{\nu}$ are also bounded.
\end{proof}
%%%%%%%%%%%%%%%%%%%%%%%%%%%%%%%%%%%%%%%%%%%%%%%%%%%%%%%%%%%%%%%

We define the operator $\mathbf{W}(t)$
on the space ${\cal H}\equiv [\mathcal{H}_1]^N$ by the rule
\begin{equation}\label{W(t)}
\mathbf{W}(t)(\psi^0_1,\dots,\psi^0_N)
=(W_1(t)\psi^0_1,\dots,W_N(t)\psi^0_N).
\end{equation}
%%---------------------------------------------
\begin{definition}
For a  probability  measure $\nu$ on  ${\cal H}$
we denote by $\widehat{\nu}$ the characteristic functional (the Fourier transform)
$$
\widehat{\nu}(\chi)\equiv\int\exp(i\langle \psi,\chi\rangle )\,\nu(d\psi),\quad
 \chi\in {\cal D}_0.
$$
A  measure $\nu$ is said to be Gaussian (with zero expectation) if
its characteristic functional has the form
$\widehat{\nu} (\chi)= \exp\left\{-\frac{1}{2}  {\cal Q}(\chi,\chi)\right\}$,
$\chi \in {\cal D}_0$,
where ${\cal Q}$ is a  real nonnegative quadratic form in ${\cal D}_0$.

A measure $\nu$ is called translation-invariant if
$\nu(T_h B)= \mu(B)$, $\forall B\in{\cal B}({\cal H})$, $h\in \mathbb{R}^3$,
where $T_h \psi(x)= \psi(x-h)$.
\end{definition}
%%%%%%%%%%%%%%%%%%%%%%%%%%%%%%%%%%%%%%%%%

The following result can be obtained by an easy adaptation
of the proof of  \cite[Theorem~A]{DKM-03},
where the result is proved in the case when $N=1$.
%%------------------------------------------------------------
\begin{theorem} \label{l6.2}
Let conditions {\bf S1}--{\bf S3} and {\bf S5} hold. Then
 the measures $\nu_{t}\equiv \mathbf{W}(t)^*\nu_0$
weakly converge as $t\to\infty$ on the space ${\cal H}^{-\varepsilon}$
for each $\varepsilon>0$.
 The limit measure  $\nu_{\infty}$
is a translation-invariant Gaussian measure on ${\cal H}$.
 The characteristic functional of $\nu_{\infty}$ is of the form
$$
\widehat\nu_{\infty}(\chi)=\exp\left\{-\frac{1}{2}
{\cal Q}^\nu_{\infty}(\chi,\chi)\right\},\quad \chi\in {\cal D}_0,
$$
where ${\cal Q}^\nu_{\infty}(\chi,\chi)$ is defined in \eqref{qpp}.
\end{theorem}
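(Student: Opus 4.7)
My plan is to follow the three-step scheme (I)--(III) outlined in the Introduction, specialized to the free Dirac setting in $\mathcal{H}$ (with no particle degrees of freedom). Since $\mathbf{W}(t)$ acts componentwise by \eqref{W(t)} and each $W_n(t)$ commutes with spatial translations, the translation invariance of $\nu_0$ from condition \textbf{S3} passes to every $\nu_t$, and the task reduces to controlling the joint distribution of $(W_1(t)\psi^0_1,\dots,W_N(t)\psi^0_N)$.

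For compactness (I), the unitarity of $e^{i(\alpha\cdot k-\beta m_n)t}$ (equivalently \eqref{2.8}) combined with translation invariance of $\nu_t$ shows that the mean charge density $\mathbb{E}|\psi_n(x,t)|^2$ is constant in $x$ and $t$, and bounded by $e_0$ from \eqref{med}. Hence $\mathbb{E}\|\psi_n(\cdot,t)\|^2_{0,R}\le e_0|B_R|$ uniformly in $t\ge 0$, and Prokhorov's theorem together with the compact embedding $\mathcal{H}\hookrightarrow\mathcal{H}^{-\varepsilon}$ yields tightness of $\{\nu_t\}_{t\ge 0}$ in $\mathcal{H}^{-\varepsilon}$.

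For the convergence of correlations (II), I would pass to Fourier space via \eqref{ReW}. Translation invariance reduces the covariance of $\nu_t$ to a matrix Fourier multiplier applied to $\widehat{q}^\nu_{0,nn'}(k)$, and the factorization of $e^{i(\alpha\cdot k-\beta m_n)t}$ through $\cos(\omega_n t)$ and $(\sin\omega_n t)/\omega_n$ produces a finite sum of oscillatory symbols of type $e^{\pm it\omega_n(k)\pm it\omega_{n'}(k)}$ times bounded multipliers. When $m_n=m_{n'}$, the zero-phase combinations yield the stationary limit \eqref{1.10}, while genuinely oscillating combinations vanish by the Riemann--Lebesgue lemma using $\widehat{q}_{0,nn'}^{\nu,ij}\in L^1(\mathbb{R}^3)$ from Lemma~\ref{l4.1}. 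When $m_n\ne m_{n'}$, all four phase combinations $\pm\omega_n(k)\pm\omega_{n'}(k)$ are nonzero on $\mathbb{R}^3$, so every term oscillates away and the full cross-correlation tends to $0$, matching the off-diagonal zero prescription in \eqref{1.13}.

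For the Gaussian limit (III), I would adapt the Bernstein/Ibragimov--Linnik cutting method of \cite{DKM-03}. For fixed $\chi\in\mathcal{D}_0$, the random variable $\langle\mathbf{W}(t)\psi^0,\chi\rangle$ is a linear functional of $\psi^0$ against a test function whose $L^2$-mass is spread over a ball of radius $\sim t$. Partitioning $\mathbb{R}^3$ into disjoint cubes of intermediate side separated by buffer gaps $r(t)\to\infty$, I would write this random variable as a sum of block contributions that are $\varphi$-mixing with rate controlled by \textbf{S5}; the Lyapunov/Lindeberg-type condition then follows from \textbf{S2} together with \eqref{1.12}, and combined with the Step~(II) limit for the variance this gives $\widehat{\nu}_t(\chi)\to\exp\{-\tfrac12\mathcal{Q}^\nu_\infty(\chi,\chi)\}$. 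The main obstacle I expect is calibrating this decomposition: since the number of contributing cells grows like $t^3$, the block size and $r(t)$ must be tuned so that both the mixing error and the small-block remainder vanish simultaneously, and the off-diagonal $(n,n')$ contributions require the cross-mass Riemann--Lebesgue decay from Step~(II) to be imported uniformly in the cells.
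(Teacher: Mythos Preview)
The paper does not actually prove Theorem~\ref{l6.2}: it states that the result ``can be obtained by an easy adaptation of the proof of \cite[Theorem~A]{DKM-03}, where the result is proved in the case when $N=1$.'' Your three-step sketch (compactness via the uniform mean-charge bound and Prokhorov, convergence of correlations via the Fourier representation \eqref{ReW} and Lemma~\ref{l4.1}, and the Bernstein/Ibragimov--Linnik room-corridor decomposition for the CLT under \textbf{S5}) is exactly the scheme of \cite{DKM-03}, so your proposal is correct and coincides with the approach the paper invokes by reference; the only addition for $N>1$ is precisely your treatment of the cross-blocks $n\ne n'$, which is the ``easy adaptation'' the paper alludes to.
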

%%%%--------------------------------------------------
%%%%%%%%%%%%%%%%
\subsection{Statement of result}
%%%%%%%%%%%%%%%%%%%%%%%%%%%%%%%%%%%%
To state the result~\eqref{1.8} precisely,
 we set ${\cal D}={\cal D}_0\oplus \mathbb{R}^3\oplus \mathbb{R}^3$
and $\langle Y,Z\rangle:=\langle \psi,\chi\rangle+q\cdot u+p\cdot v
$
for $Y=(\psi,q,p)\in {\cal E}$ and
$Z=(\chi,u,v)\in  {\cal D}$, i.e.,
$\chi=(\chi_1,\dots,\chi_N)\in {\cal D}_0$, $(u,v)\in \mathbb{R}^3\times \mathbb{R}^3$.
Denote
\begin{gather}
\chi^Z=(\chi^Z_1,\dots,\chi^Z_N),\quad \chi_n^Z:=\chi_n(x)+
\sum\limits_{r=1}^N\theta^{\chi_r}_{nr}(x)
+\Xi^0_n(x)\cdot  u+\Xi^1_n(x)\cdot v,
\quad n\in \overline{N}, \label{hn}\\
\label{teta}
\theta^\chi_{nr}(x):=\sum\limits_{k=1}^3\int\limits_0^{+\infty}
W_n(s)\,\Xi^0_{nk}(x)
\left\langle W_r(s)\,i\, \partial_k\rho_r,\chi\right\rangle\,ds,
\quad \chi\in C_0^\infty(\mathbb{R}^3;\mathbb{C}^4).
\end{gather}
Here
$\Xi^j_n(x)=\left(\Xi^j_{n1}(x),\Xi^j_{n2}(x),\Xi^j_{n3}(x)\right)$, $n\in \overline{N}$, $j=0,1$,
where
$\Xi^j_{nk}(x)$ are $\mathbb{C}^4$-valued functions of the form
\begin{equation}\label{xi}
\Xi^j_{nk}(x):=\sum\limits_{l=1}^3
\int\limits_0^{+\infty}N^{(j)}_{kl}(s)\,{W}_n(s)\,\partial_l \rho_n(x)\,ds,
 \quad x\in\mathbb{R}^3,\quad k=1,2,3,
\end{equation}
 with the matrix- and real-valued function  $N(t)=(N_{kl}(t))_{k,l=1}^3$ defined in Theorem~\ref{h-eq},
 $N^{(j)}_{kl}(t):=\frac{d^j}{dt^j} N_{kl}(t)$.
%-----------------------------------------------
Denote by ${\cal Q}_{\infty} (Z,Z)$
a real quadratic form in ${\cal D}$ of the form
\begin{equation*}%%\label{Qmu}
{\cal Q}_{\infty} (Z,Z)=
{\cal Q}^\nu_{\infty} (\chi^Z,\chi^Z),
\end{equation*}
where $\chi^Z$ is defined in (\ref{hn}) and ${\cal Q}^\nu_{\infty}$
in \eqref{qpp}.
Our main result is the following theorem.
%%%----------------------------
\begin{theorem}\label{tA}
Let conditions {\bf A1}--{\bf A3} be true. Then the following assertions hold.

(1) Let conditions {\bf S1}--{\bf S4} be fulfilled. Then
the correlation functions of $\mu_t$ converge to a limit, i.e.,
for any $Z_1,Z_2\in {\cal D}$,
\begin{equation}\label{concorf}
 \mathbb{E}\left(\langle Y(t),Z_1\rangle\langle Y(t),Z_2\rangle\right)
\to {\cal Q}_\infty(Z_1,Z_2),\quad t\to\infty.
\end{equation}

(2) Let conditions {\bf S1}--{\bf S3} and {\bf S5} be fulfilled.
Then the convergence in (\ref{1.8}) holds for any $\varepsilon>0$.
  The limit measure $ \mu_\infty $ is a Gaussian
measure on ${\cal E}$.
 The limit characteristic functional has the form
$$
\widehat{\mu}_\infty (Z)= \exp\left\{-\frac{1}{2}
{\cal Q}_\infty(Z,Z)\right\},\qquad Z \in  {\cal D}.
$$
 The measure $\mu_\infty$ is invariant, i.e.,
$U(t)^* \mu_\infty=\mu_\infty,\quad t\in \mathbb{R}$.
\end{theorem}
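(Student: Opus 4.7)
The plan is to verify the three assertions \textbf{I}, \textbf{II}, \textbf{III} listed before the theorem, each of which reduces ultimately to the already-established statistical convergence for the free Dirac evolution (Theorem~\ref{l6.2}). The central technical ingredient is the asymptotic identity~\eqref{0.1}.

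First I would derive~\eqref{0.1} with a quantitative remainder. Starting from the Duhamel representation~\eqref{6.4} of $Y(t)$, the particle motion $(q(t),p(t))$ is expressed as a Volterra integral against the field, whose resolvent kernel is built from the matrix function $N(t)$ of Theorem~\ref{h-eq}. Substituting the free Dirac propagator $\mathbf{W}(t)\psi^0$ for the leading field contribution, integrating by parts in the time variable and changing $s\mapsto t-s$ in the tail integrals, one arrives at the identity
\[
\langle Y(t),Z\rangle=\sum_{n=1}^N\langle W_n(t)\psi_n^0,\chi_n^Z\rangle+r(t,Z),
\]
with $\chi_n^Z$ given by~\eqref{hn}--\eqref{xi}. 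The remainder $r(t,Z)$ consists of tail integrals $\int_t^{+\infty}(\cdots)\,ds$ against the compactly supported $\nabla\rho_n$ and of cross-field coupling corrections. Both are controlled by the dispersive decay~\eqref{eqfreeD} for $W_n$ applied to the compactly supported $\partial_l\rho_n$, so that $\mathbb{E}|r(t,Z)|^2\to 0$ as $t\to\infty$, uniformly on bounded subsets of $Z\in\mathcal{D}$.

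Compactness~\textbf{I} comes next. By Theorem~\ref{l5.1} the local charge of a solution starting from compactly supported data decays at rate $(1+|t|)^{-3/2}$; plugging this into~\eqref{6.4} together with \textbf{S1}--\textbf{S2}, \textbf{S4}, I would obtain a uniform bound $\mathbb{E}\|Y(t)\|_{\mathcal{E},R}^2\le C(R)$ for all $t\ge 0$, as announced in~\eqref{7.1.1}. The compact embedding $\mathcal{E}\subset\mathcal{E}^{-\varepsilon}$ and Prokhorov's theorem in the Fr\'echet space $\mathcal{E}^{-\varepsilon}$ then yield tightness of $\{\mu_t:t\ge 0\}$.

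With the reduction~\eqref{0.1} in hand, assertions~\textbf{II} and~\textbf{III} are immediate. For~\textbf{II}, under \textbf{S1}--\textbf{S4},
\[
\mathbb{E}\,\langle Y(t),Z_1\rangle\langle Y(t),Z_2\rangle=\mathbb{E}\,\langle \mathbf{W}(t)\psi^0,\chi^{Z_1}\rangle\langle \mathbf{W}(t)\psi^0,\chi^{Z_2}\rangle+o(1),
\]
and the first term tends to ${\cal Q}^\nu_\infty(\chi^{Z_1},\chi^{Z_2})={\cal Q}_\infty(Z_1,Z_2)$ because the correlation functions of $\nu_t=\mathbf{W}(t)^*\nu_0$ converge (this is proved along the way toward Theorem~\ref{l6.2}), and continuity of ${\cal Q}^\nu_\infty$ on $L^2$ (Lemma~\ref{r2.9}) extends the convergence from $\mathcal{D}_0$ to the $L^2$-fields $\chi^Z$. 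For~\textbf{III}, under \textbf{S1}--\textbf{S3}, \textbf{S5}, the same reduction gives $\widehat{\mu}_t(Z)=\widehat{\nu}_t(\chi^Z)+o(1)\to\exp\bigl\{-\tfrac12{\cal Q}^\nu_\infty(\chi^Z,\chi^Z)\bigr\}=\exp\bigl\{-\tfrac12{\cal Q}_\infty(Z,Z)\bigr\}$ by Theorem~\ref{l6.2}. Together with~\textbf{I} this identifies the unique limit $\mu_\infty$ as the centered Gaussian measure on~$\mathcal{E}$ with the stated characteristic functional. Finally, invariance $U(t)^*\mu_\infty=\mu_\infty$ is obtained by passing to the limit $s\to\infty$ in $\mu_{t+s}=U(t)^*\mu_s$, using continuity of $U(t)$ on $\mathcal{E}^{-\varepsilon}$ (Lemma~\ref{p1.1}).

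The main obstacle is the derivation of the asymptotic~\eqref{0.1} with vanishing second moment of the remainder: the particle subsystem is forced by an integral of the Dirac fields against $\nabla\rho_n$, and one must invert this coupling through the $N(t)$-resolvent and then trade the resulting time convolution against the dispersive local decay~\eqref{eqfreeD} for each $W_n$. Once this quantitative reduction is established, the remainder of the argument is a rather mechanical transfer of Theorem~\ref{l6.2} through the bounded linear map $\psi^0\mapsto\chi^Z$.
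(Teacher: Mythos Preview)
Your plan coincides with the paper's: establish the asymptotic reduction~\eqref{0.1} with a mean-square-vanishing remainder (Corollary~\ref{c7.2}), prove the uniform bound~\eqref{7.1.1} for compactness, and then read off~\textbf{II} and~\textbf{III} from the free-Dirac result Theorem~\ref{l6.2}.

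There is one logical gap in the transfer step. You write that ``continuity of ${\cal Q}^\nu_\infty$ on $L^2$ (Lemma~\ref{r2.9}) extends the convergence from $\mathcal{D}_0$ to the $L^2$-fields $\chi^Z$.'' Continuity of the \emph{limit} quadratic form is not enough to extend a pointwise limit from a dense subspace to its closure; you need \emph{equicontinuity} of the whole family $\{{\cal Q}^\nu_t:t\ge0\}$ on $L^2$, i.e.\ a uniform bound $\sup_t|{\cal Q}^\nu_t(\chi,\chi)|\le C\Vert\chi\Vert^2$. The paper isolates exactly this as Lemma~\ref{l}, using the unitarity~\eqref{2.8} of $W_n(t)$ together with the $L^2$-continuity~\eqref{16} of the covariance. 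The same issue arises in your handling of~\textbf{III}: Theorem~\ref{l6.2} is stated only for $\chi\in\mathcal{D}_0$, so passing to $\chi^Z\in L^2$ again requires the equicontinuity of $\widehat{\nu}_t$ in $L^2$ (Lemma~\ref{l}(ii)). A smaller point: for the compactness bound~\eqref{7.1.1} you invoke~\textbf{S4}, but the paper's proof of the key estimate~\eqref{6.3} rests on the translation invariance~\textbf{S3} (so that $\mathbb{E}|W_n(t)\psi_n^0(x)|^2$ is independent of~$x$), not on~\textbf{S4}.
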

%----------------------------------------

 The assertion (1) of Theorem \ref{tA} is proved in Section~4.2,
 the assertion~(2) can be derived
 from Lemmas~\ref{l2.1} and \ref{l2.2}.
%%-----------------------------------------
\begin{lemma}\label{l2.1}
Let conditions {\bf S1}--{\bf S3} hold. Then the family of  measures $\{\mu_t, t\ge0\}$
 is weakly compact in the space
 ${\cal E}^{-\varepsilon}$ for any $\varepsilon>0$
 and
 \begin{equation}\label{7.1.1}
\sup\limits_{t\ge 0} \mathbb{E}\Vert U(t) Y_0\Vert^2_{{\cal E},R}
\le C(R)<\infty,\quad \forall R>0.
\end{equation}
\end{lemma}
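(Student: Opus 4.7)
The plan is to prove the uniform second-moment bound \eqref{7.1.1} first; weak compactness of $\{\mu_t\}$ on $\mathcal{E}^{-\varepsilon}$ then follows by a standard Prokhorov argument, since Chebyshev's inequality applied to \eqref{7.1.1} shows that the sets $K_\rho=\{Y\in\mathcal{E}:\|Y\|_{\mathcal{E},R}\le\rho(R)\ \forall R\in\mathbb{N}\}$ have $\mu_t$-measure arbitrarily close to $1$ uniformly in $t$, while the embedding $\mathcal{E}\hookrightarrow\mathcal{E}^{-\varepsilon}$ is compact, rendering each $K_\rho$ relatively compact in $\mathcal{E}^{-\varepsilon}$.

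For \eqref{7.1.1} I would invoke the Duhamel integral representation \eqref{6.4}. The field part decomposes as
$$\psi_n(\cdot,t)=W_n(t)\psi_n^0+\int_0^t W_n(t-s)\bigl(iq(s)\cdot\nabla\rho_n\bigr)\,ds,$$
and substituting this into \eqref{2} yields a closed linear integro-differential equation for $(q(t),p(t))$ whose fundamental matrix is the function $N(t)$ of Theorem~\ref{h-eq}. For the free field piece, translation invariance \textbf{S3} combined with the pointwise unitarity of the Fourier symbol $e^{i(\alpha\cdot k-\beta m_n)t}$ (which preserves the trace of the spectral covariance) gives
$$\mathbb{E}\bigl|(W_n(t)\psi_n^0)(x)\bigr|^2=\mathbb{E}|\psi_n^0(x)|^2\le e_0$$
by \textbf{S2}, so $\mathbb{E}\|W_n(t)\psi_n^0\|_{0,R}^2\le C(R)$ uniformly in $t$.

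For the particle, I would use the resolvent $N(t)$ to represent $(q(t),p(t))$ as a linear functional of $(q^0,p^0)$ and of the stationary random forcing $F(t)=\sum_n\langle W_n(t)\psi_n^0,\nabla\rho_n\rangle$. By the previous paragraph and the compact support of $\nabla\rho_n$ (condition \textbf{A1}), $\mathbb{E}|F(t)|^2$ is bounded uniformly in $t$; combining with the uniform boundedness of $N(t)$ and $\dot N(t)$ on $[0,\infty)$, to be established in Theorem~\ref{h-eq} under \textbf{A1}--\textbf{A3}, and with the finite initial moments from \textbf{S2}, this gives $\sup_{t\ge 0}\mathbb{E}(|q(t)|^2+|p(t)|^2)<\infty$. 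Feeding this bound back into the Duhamel formula for $\psi_n$ and applying the dispersive estimate \eqref{eqfreeD} to each $W_n(t-s)(q(s)\cdot\nabla\rho_n)$, a Minkowski-type argument controls the forced contribution to $\mathbb{E}\|\psi_n(t)\|_{0,R}^2$ uniformly in $t$, completing \eqref{7.1.1}.

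The crux — and main obstacle — is the particle estimate: uniform control of $\mathbb{E}(|q(t)|^2+|p(t)|^2)$ hinges on qualitative properties of the resolvent $N(t)$ encoding the radiative dissipation of the particle's energy into the Dirac fields. This is exactly the content of Theorem~\ref{h-eq}, and it is where the non-resonance hypotheses \textbf{A2} and \textbf{A3} play their decisive role; without them $q(t)$ could grow and the second-moment bound would fail. With Theorem~\ref{h-eq} in hand, the rest is second-moment bookkeeping and an invocation of Prokhorov.
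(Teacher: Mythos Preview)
Your overall strategy is sound but differs from the paper's in the choice of Duhamel decomposition. The paper writes the \emph{full} dynamics as
\[
U(t)Y_0 \;=\; U_0(t)Y_0 \;+\; \int_0^t U(t-s)\,B\,U_0(s)Y_0\,ds,
\]
with $B$ the coupling operator \eqref{2.6}. Since $BU_0(s)Y_0$ has field part supported in $B_{R_\rho}$, the paper applies the decay bound \eqref{delocen} of Theorem~\ref{l5.1} to $U(t-s)$, obtaining the integrand factor $\langle t-s\rangle^{-3/2}$; the bound \eqref{7.1.1} then follows from the free estimate $\sup_t\mathbb{E}\|U_0(t)Y_0\|^2_{\mathcal{E},R}\le C(R)$ (proved via finite propagation speed rather than your Fourier-symbol argument, though both work under {\bf S1}--{\bf S3}). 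You instead decompose $\psi_n$ and $q$ separately and never invoke Theorem~\ref{l5.1}, only its building blocks $N(t)$ and \eqref{eqfreeD}. Both routes are valid; yours is more hands-on, the paper's more compact because the decay of the coupled group on compactly supported data has already been packaged into Theorem~\ref{l5.1}.

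There is, however, a genuine gap in your particle estimate. You claim that ``uniform boundedness of $N(t)$ and $\dot N(t)$'' together with the uniform bound on $\mathbb{E}|F(t)|^2$ yields $\sup_t\mathbb{E}(|q(t)|^2+|p(t)|^2)<\infty$. This is false as stated: from
\[
q(t)=\dot N(t)q^0+N(t)p^0+\int_0^t N(s)\,F(t-s)\,ds,
\]
mere boundedness of $N$ and of $\mathbb{E}|F|^2$ gives only $\mathbb{E}|q(t)|^2\le C(1+t)^2$, since by Minkowski the convolution term is controlled by $\bigl(\int_0^t|N(s)|\,ds\bigr)^2\sup_s\mathbb{E}|F(s)|^2$. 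What you actually need is $N,\dot N\in L^1(0,\infty)$, and that is precisely what Theorem~\ref{h-eq} delivers via $|N^{(j)}(t)|\le C\langle t\rangle^{-3/2}$. With that correction your argument goes through; but the understatement is not cosmetic---integrability of $N$ is exactly the radiative damping encoded in {\bf A2}--{\bf A3}, and a merely bounded $N$ (as would occur if the particle retained an undamped oscillatory mode) would indeed let $\mathbb{E}|q(t)|^2$ grow.
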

%%%%%%%%%%%%%%%%%%%%%%%%%%%%%%%%%%%%%%%%%%%%%%%%%%%%%%%%%%%%
\begin{lemma}\label{l2.2}
Let conditions {\bf S1}--{\bf S3} and {\bf S5} hold. Then
for any $Z\in {\cal D}$, the convergence~\eqref{2.6i} is true.
\end{lemma}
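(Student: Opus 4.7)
The plan is to reduce the characteristic functional of $\mu_t$ to that of the free-Dirac evolution and then invoke Theorem~\ref{l6.2}. The asymptotic representation~\eqref{0.1}, supplied by Corollary~\ref{c7.2}, gives a decomposition
$$
\langle Y(t),Z\rangle=\sum_{n=1}^N\langle W_n(t)\psi_n^0,\chi_n^Z\rangle+r_Z(t),
$$
in which $\chi_n^Z$ is the function defined in~\eqref{hn} and the remainder $r_Z(t)$ is to be shown to tend to $0$ in $L^2(\mu_0)$. Granting this, the elementary bound $|e^{ia}-e^{ib}|\le|a-b|$ and the Cauchy--Schwarz inequality yield
$$
\bigl|\widehat\mu_t(Z)-\mathbb{E}\exp\bigl(i\langle\mathbf{W}(t)\psi^0,\chi^Z\rangle\bigr)\bigr|\le \mathbb{E}|r_Z(t)|\le\bigl(\mathbb{E}|r_Z(t)|^2\bigr)^{1/2}\to 0\quad(t\to\infty),
$$
and the inner expectation is precisely $\widehat\nu_t(\chi^Z)$.

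The second step is to identify the limit of $\widehat\nu_t(\chi^Z)$ via Theorem~\ref{l6.2}. Because the integrals~\eqref{teta} and~\eqref{xi} propagate the compactly supported $\rho_n$ over all times, $\chi^Z$ is generally not in $\mathcal{D}_0$, and the theorem cannot be applied verbatim. One first verifies that $\chi^Z\in[L^2(\mathbb{R}^3;\mathbb{C}^{4})]^N$: the local-charge decay $(1+s)^{-3/2}$ of $W_n(s)\partial_l\rho_n$ from~\eqref{eqfreeD}, together with the decay of the coefficient $N(s)$ provided by Theorem~\ref{h-eq}, makes the time integrals defining $\Xi^j_{nk}$ and $\theta^{\chi_r}_{nr}$ converge in $L^2$. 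One then approximates $\chi^Z$ in $L^2$ by a sequence $\chi^Z_\varepsilon\in\mathcal{D}_0$. The uniform-in-$t$ $L^2$-continuity of $\widehat\nu_t$ supplied by~\eqref{16}, combined with the $L^2$-continuity of the limit quadratic form (Lemma~\ref{r2.9}), allows one to transfer the conclusion of Theorem~\ref{l6.2} from $\mathcal{D}_0$ to $\chi^Z$, yielding $\widehat\nu_t(\chi^Z)\to\exp\{-\tfrac12\mathcal{Q}^\nu_\infty(\chi^Z,\chi^Z)\}$. Since $\mathcal{Q}_\infty(Z,Z)=\mathcal{Q}^\nu_\infty(\chi^Z,\chi^Z)$ by definition, the convergence~\eqref{2.6i} follows.

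The main obstacle is the first step, namely the $L^2(\mu_0)$-smallness of the remainder $r_Z(t)$. Starting from the Duhamel representation~\eqref{6.4}, one expresses the particle trajectory as a convolution of $N(t)$ with the field-coupling source, substitutes back into the field equation, and, after rearrangement, arrives at~\eqref{0.1}. The terms missing from the main part are tail contributions of the form $\int_t^\infty N^{(j)}_{kl}(s)\langle W_n(s)\psi_n^0,\partial_l\rho_n\rangle\,ds$ together with similar pairings of $W_n(s)\psi_n^0$ against $\partial_l\rho_n$ for $s\ge t$. Their second moments under $\mu_0$ are controlled by the uniform bilinear bound~\eqref{16}, which converts the $L^2$-in-$s$ integrability of the integrand (ensured by~\eqref{eqfreeD} and the decay of $N$) into a convergent mean-square estimate that vanishes as $t\to\infty$. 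Once this asymptotic representation is established, the passage to the characteristic functional is routine given the $L^2$-continuities obtained above.
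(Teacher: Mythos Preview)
The proposal is correct and follows essentially the same route as the paper: split via Corollary~\ref{c7.2}, control the remainder by $|e^{ia}-e^{ib}|\le|a-b|$ and Cauchy--Schwarz, and then extend Theorem~\ref{l6.2} from $\mathcal{D}_0$ to $\chi^Z\in L^2$ by the uniform-in-$t$ $L^2$-equicontinuity of $\widehat\nu_t$ (packaged in the paper as Lemma~\ref{l}\,(ii)) together with the $L^2$-continuity of $\mathcal{Q}^\nu_\infty$. Your last paragraph re-sketches the proof of the remainder bound that Corollary~\ref{c7.2} already supplies, so it is redundant but not wrong.
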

Lemma~\ref{l2.1} (Lemma~\ref{l2.2}) provides the existence (the uniqueness, resp.)
of the limit measure $\mu_\infty$.
They are proved in Sections~\ref{sec4.1} and \ref{sec4.3}, respectively.
Before proving Theorem~\ref{tA}, we study the long-time behaviour
of the random solutions to problem~\eqref{CP}.
\smallskip

%%%%%%%%%%%%%%%%%%%%%%%%%%%%%%
{\bf Remark}.
All results remain true if we consider $N$ one-dimensional Dirac fields
$\psi_n(x)\in \mathbb{C}^{2}$, $n=1,\ldots, N$, $x\in\mathbb{R}$,
coupled to an  oscillator $q\in\mathbb{R}$,
$$
\left\{
\begin{aligned}
&i\,\dot\psi_n(x,t)=(-i\,\alpha\,\partial_x+\beta m_n)\psi_n(x,t)-q(t)\rho'_n(x),
\quad t\in\mathbb{R},\quad x\in \mathbb{R},
\quad n=1,\ldots,N,\\
&\ddot q(t)=-\kappa^2 q(t)+\sum\limits_{n=1}^N\langle\psi_n(\cdot,t),\rho'_n\rangle.
\end{aligned}
\right.
$$
Here $\kappa>0$, $\rho_n\in \mathbb{C}^2$, $\alpha=\begin{pmatrix}
               0 & 1 \\
               1 & 0 \\
             \end{pmatrix}$, $\beta=\begin{pmatrix}
            1 &0 \\
            0 & -1 \\
          \end{pmatrix}$,
 and $\kappa$ and $\rho_n$ satisfy restrictions similarly to conditions
          {\bf A1}--{\bf A3}.

%%%%%%%%%%%%%%%%%%%%%%%    3   %%%%%%%%%%%%%%%%%
\section{Asymptotic behavior of $Y(t)$ as $t\to\infty$}\label{sec3}
%%%%%%%%%%%%%%%%%%%%%%%%%%%%%%%%%%%%%%%%%%%%%%%%%%%%%

\begin{theorem}\label{l5.1}
Let conditions {\bf A1}--{\bf A3} hold and
let $Y_0\in \mathcal{E}$ be such that
\begin{equation*}%%\label{**}
\psi^0(x)=0\quad \mbox{for }\,|x|>R_1,
\end{equation*}
with some $R_1>0$. Then
there exists a constant $C=C(R,R_1)>0$ such that
the following bound holds for every $R>0$,
\begin{equation}\label{delocen}
\Vert Y(t)\Vert_{{\cal E},R}
\le C\langle t\rangle^{-3/2}\Vert Y_0\Vert_{{\cal E},R_1},\quad t\ge0.
\end{equation}
\end{theorem}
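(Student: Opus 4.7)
The plan is to reduce the coupled system to a closed Volterra equation for $q(t)$ by eliminating the Dirac fields via Duhamel's formula against the free propagators $W_n(t)$, and then to invert it using the Green matrix $N(t)$ of Theorem~\ref{h-eq}. Concretely, rewriting \eqref{1} as $\dot\psi_n=-l_n(\nabla)\psi_n+iq\cdot\nabla\rho_n$ and integrating against $W_n$ yields
$$
\psi_n(t)=W_n(t)\psi_n^0+i\int_0^t W_n(t-s)\,\bigl(q(s)\cdot\nabla\rho_n\bigr)\,ds.
$$
Substituting this into \eqref{2} gives an integro-differential equation of the form
$$
\ddot q(t)+Vq(t)-\int_0^t K(t-s)q(s)\,ds=f(t),\qquad f(t):=\sum_{n=1}^N\bigl\langle W_n(t)\psi_n^0,\nabla\rho_n\bigr\rangle,
$$
where the $3\times 3$ kernel $K$ is built from the free Dirac evolution of $\nabla\rho_n$. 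By \eqref{eqfreeD}, applied with $R_1=R_\rho$, both $K(t)$ and $f(t)$ decay locally like $\langle t\rangle^{-3/2}$, the latter because $\psi_n^0$ is compactly supported in $|x|\le R_1$.

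Next I would use the variation-of-constants formula
$$
q(t)=\dot N(t)q^0+N(t)p^0+(N\!*\!f)(t)
$$
and its analogue for $p(t)$. Conditions \textbf{A2}--\textbf{A3} enter precisely to guarantee that the matrix symbol of the reduced equation obtained by Fourier analysis (essentially $-\omega^2\mathrm{I}+V-\widehat K(\omega)$) is invertible on the real axis, with controlled low-frequency behaviour near $\omega=0$; this is exactly the content expected in Theorem~\ref{h-eq} and leads to the dispersive bound $|N(t)|+|\dot N(t)|\le C\langle t\rangle^{-3/2}$. Combined with the convolution estimate $\int_0^t\langle t-s\rangle^{-3/2}\langle s\rangle^{-3/2}\,ds\le C\langle t\rangle^{-3/2}$ and the bound on $f$, this yields
$$
|q(t)|+|p(t)|\le C\,\langle t\rangle^{-3/2}\Vert Y_0\Vert_{\mathcal{E},R_1},\qquad t\ge 0.
$$

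Finally, I reinsert this decay into the Duhamel formula for $\psi_n(t)$. The free term $W_n(t)\psi_n^0$ is controlled in $\Vert\cdot\Vert_{0,R}$ by $C\langle t\rangle^{-3/2}\Vert\psi_n^0\Vert_{0,R_1}$ via \eqref{eqfreeD}. For the forced term I split the integral at $s=t/2$ and use \eqref{eqfreeD} for $W_n(t-s)\nabla\rho_n$ on each piece:
$$
\Bigl\Vert\int_0^t W_n(t-s)\bigl(q(s)\cdot\nabla\rho_n\bigr)\,ds\Bigr\Vert_{0,R}\le C\int_0^t\langle t-s\rangle^{-3/2}|q(s)|\,ds\le C'\langle t\rangle^{-3/2}\Vert Y_0\Vert_{\mathcal{E},R_1},
$$
where the last step is the same convolution bound. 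Adding the two contributions, and using the energy bound \eqref{2.8} on the part of $\psi_n^0$ outside $|x|\le R_1$ only trivially, produces \eqref{delocen}.

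The main obstacle is the second step: establishing the $\langle t\rangle^{-3/2}$ decay of $N(t)$ and $\dot N(t)$. This relies crucially on \textbf{A2}--\textbf{A3}, which rule out embedded eigenvalues/resonances of the reduced oscillator equation on the real $\omega$-axis and fix the correct low-frequency asymptotics of the symbol; the decay then comes from a stationary-phase analysis of the inverse Fourier transform, governed by the dispersion of the free Dirac fields that feed the kernel $K$. Once $N$ is controlled, the remaining work is the routine Duhamel/splitting computation outlined above.
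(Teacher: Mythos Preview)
Your proposal is correct and follows essentially the same route as the paper: Duhamel reduction of $\psi_n$ to the free propagator, substitution into \eqref{2} to obtain the closed Volterra equation for $q$, variation of constants with the Green matrix $N(t)$ from Theorem~\ref{h-eq}, the convolution estimate $\int_0^t\langle t-s\rangle^{-3/2}\langle s\rangle^{-3/2}\,ds\le C\langle t\rangle^{-3/2}$, and reinsertion into the Duhamel formula for the fields. One cosmetic remark: the delicate spectral points for $N$ are the branching points $\pm\,m_n$ (Puiseux expansions), not $\omega=0$; condition~\textbf{A2} already makes the symbol uniformly positive definite on the whole interval $|\omega|\le m_*$.
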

%%----------------------------
\begin{proof}
To prove the bound~\eqref{delocen}, we follow the strategy of \cite{D-2025-ljm}.
Using the Duhamel representation for Eq.~\eqref{1} with initial data \eqref{ID}, we have
\begin{equation}\label{2.10}
\psi_n(x,t)={W}_n(t)\psi_n^0(x)+i\int\limits_0^t {W}_n(t-s)\nabla \rho_n(x)\cdot q(s)\,ds,
\end{equation}
where $W_n(t)$ is the solving operator to problem~\eqref{3}.
We substitute \eqref{2.10} in Eq.~\eqref{2} and obtain
\begin{align}
\ddot q(t)=-Vq(t)+\int\limits_0^t H(t-s) q(s)\,ds+F(t),\label{cs0}\\
F(t):=(F_1(t),F_2(t),F_3(t)),\quad
F_k(t):=\sum\limits_{n=1}^N
\left\langle \partial_k\rho_n,{W}_n(t)\psi_n^0\right\rangle,\label{41}\\
 H(t):=(H_{kl}(t))_{k,l=1}^3,\quad
H_{kl}(t):=\sum\limits_{n=1}^N
\left\langle \partial_k\rho_n,{W}_n(t)\,i\,\partial_l\rho_n\right\rangle.\label{2.14}
\end{align}
Denote by $S(t)=
\begin{pmatrix}
   \dot N(t) &  N(t) \\
   \ddot N(t) & \dot N(t) \\
 \end{pmatrix}
$ the solving operator to  problem~\eqref{cs0} with $F(t)\equiv0$
 and initial data
\begin{equation}\label{cs0-in}
q(t)|_{t=0}=q^0,\quad \dot q(t)|_{t=0}=p^0.
\end{equation}
Write $N^{(j)}(t):=\frac{d^j}{dt^j}N(t)$ for $j=0,1,2$.
To prove the decay~\eqref{delocen} for the solutions to problem~\eqref{cs0},
\eqref{cs0-in}, we apply the following bound (see Appendix~B).
  %%%%-----------------------------------------------------------
 \begin{theorem}\label{h-eq}
 Let conditions {\bf A1}--{\bf A3} hold. Then
\begin{equation}\label{decayA}
|N^{(j)}(t)|\le C(1+t)^{-3/2},\quad j=0,1,2,\quad t\ge0.
\end{equation}
\end{theorem}
%%----------------------------------------------------

For the solutions to problem \eqref{cs0}, \eqref{cs0-in},
the following representation holds
\begin{equation}\label{3.4}
\begin{pmatrix}
  q(t) \\
  \dot{q}(t) \\
\end{pmatrix}
= S(t)\begin{pmatrix}
        q^0 \\
         p^0 \\
      \end{pmatrix}
+\int\limits_0^t S(\tau)\begin{pmatrix}
                  0 \\
                 F(t-\tau)\\
                 \end{pmatrix}
d\tau,\quad t>0.
\end{equation}
Due to bound \eqref{eqfreeD} and condition~\textbf{A1},
we have $|F(t)|\le C \langle t\rangle^{-3/2}\Vert \nabla\rho\Vert_{0,R_\rho}\Vert\psi^0\Vert_{0,R_1}$.
Hence, together with \eqref{3.4}, this implies bound~\eqref{delocen} for $q(t)$ and $\dot{q}(t)$.
For the field components $\psi_n(\cdot\,,t)$, the bound~\eqref{delocen}
follows from  representation~\eqref{2.10},
 bound~\eqref{eqfreeD} and  bound~\eqref{delocen} for $q(t)$:
\begin{align*}
&\Vert\psi_n(\cdot,t)\Vert_{0,R}
\le \Vert W_n(t)\psi_n^0\Vert_{0,R}
+\int\limits_0^t\Vert W_n(t-s)\nabla\rho_n\Vert_{0,R}\,|q(s)|ds
\\
&\le C\langle t\rangle^{-3/2}\Vert \psi_n^0\Vert_{0,R_1}
+C_1\int\limits_0^t\langle t-s\rangle^{-3/2}\langle s\rangle^{-3/2}\,ds\,
\Vert \nabla\rho_n\Vert_{0,R_\rho}
\Vert Y_0\Vert_{\mathcal{E},R_1}
\le C_2\langle t\rangle^{-3/2}\Vert Y_0\Vert_{\mathcal{E},R_1}.
\end{align*}
This completes the proof of bound~\eqref{delocen}.
\end{proof}
%%%%%%%%%%%%%%%%%%%%%%%%%%%%%%%%%%%%%%%%%%%%%%%%%%%%%%%%%

Now we specify the representation~\eqref{0.1} and prove it.
Write $q_k^{(j)}(t):=\frac{d^j}{dt^j}q_k(t)$ for $j=0,1$; $k=1,2,3$.
\begin{theorem}\label{l7.1}
Let conditions {\bf A1}--{\bf A3} and {\bf S1}--{\bf S4} be fulfilled. Then

(i) the following representation holds,
\begin{equation}\label{81}
q^{(j)}_k(t)=\sum\limits_{n=1}^N
\langle W_n(t)\psi_n^0,\Xi_{nk}^j\rangle+r_j(t),\quad \mbox{where }\,\,
\mathbb{E}\left|r_j(t)\right|^2
\le C(1+t)^{-1},
\end{equation}
$j=0,1$, $k=1,2,3$,  the functions $\Xi_{nk}^j$ are defined in \eqref{xi}.

(ii) Let $\chi\in C_0^\infty(\mathbb{R}^3; \mathbb{C}^4)$
with $\mathrm{supp}\, \chi\subset B_R:=\{x\in \mathbb{R}^3:|x|\le R\}$.  Then, for $n\in \overline{N}$
and $t\ge1$,
\begin{equation}\label{83}
\langle\psi_n(\cdot\,,t),\chi\rangle=
 \left\langle W_n(t)\psi^0_n,\chi\right\rangle+
\sum\limits_{r=1}^N
\left\langle W_r(t)\psi_r^0,\theta^\chi_{rn}\right\rangle+r(t),
\quad \mbox{where }\,\,
\mathbb{E}\left|r(t)\right|^2 \le C(1+t)^{-1},
\end{equation}
where the functions $\theta^\chi_{rn}(x)$, $r,n\in \overline{N}$,
are defined in \eqref{teta}.
\end{theorem}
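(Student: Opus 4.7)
The plan is to prove (i) first by unwinding the variation-of-constants formula \eqref{3.4}, then bootstrap to (ii) via Duhamel \eqref{2.10} for $\psi_n(t)$. The crucial algebraic tool throughout is that the free evolution $W_n(t)$ is $\mathbb{C}$-unitary on $L^2$, hence preserves the real inner product $\langle\cdot,\cdot\rangle$ and satisfies $W_n(t-\tau)=W_n(-\tau)W_n(t)$ with $W_n(-\tau)^{*}=W_n(\tau)$; this lets me transfer $W_n$ off the random field onto a deterministic test function, which is then controllable in $L^2$-norm so that the covariance bound \eqref{16} applies.

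For (i), from \eqref{3.4} I write $q_k^{(j)}(t)=\sum_l\bigl[N_{kl}^{(j+1)}(t)q_l^0+N_{kl}^{(j)}(t)p_l^0\bigr]+\sum_{n,l}\int_0^t N_{kl}^{(j)}(\tau)\langle\partial_l\rho_n,W_n(t-\tau)\psi_n^0\rangle\,d\tau$. The homogeneous piece is $O((1+t)^{-3/2})(|q^0|+|p^0|)$ by Theorem~\ref{h-eq}, hence of mean square $O((1+t)^{-3})$ by \textbf{S2}, and is absorbed in $r_j(t)$. For the Duhamel integral, the unitarity identity rewrites the integrand as $\langle W_n(\tau)\partial_l\rho_n, W_n(t)\psi_n^0\rangle$; extending the $\tau$-integral to $[0,\infty)$ recovers $\sum_n\langle W_n(t)\psi_n^0,\Xi_{nk}^j\rangle$ from \eqref{xi}, leaving a tail $\sum_{n,l}\int_t^\infty N_{kl}^{(j)}(\tau)\langle W_n(t)\psi_n^0, W_n(\tau)\partial_l\rho_n\rangle\,d\tau$. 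Its $L^2(\Omega)$-norm is bounded by Minkowski's integral inequality together with Theorem~\ref{h-eq} and \eqref{16} (which yields $\mathbb{E}|\langle W_n(t)\psi_n^0, W_n(\tau)\partial_l\rho_n\rangle|^2\le C\|\partial_l\rho_n\|_{L^2}^2$ uniformly in $t,\tau$, using the $L^2$-isometry \eqref{2.8}), giving $C\int_t^\infty(1+\tau)^{-3/2}\,d\tau\le C(1+t)^{-1/2}$, whence \eqref{81}.

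For (ii), start from \eqref{2.10}: $\langle\psi_n(\cdot,t),\chi\rangle=\langle W_n(t)\psi_n^0,\chi\rangle+\sum_k\int_0^t q_k(s)\langle iW_n(t-s)\partial_k\rho_n,\chi\rangle\,ds$, substitute $q_k(s)=\sum_r\langle W_r(s)\psi_r^0,\Xi_{rk}^0\rangle+r_0(s)$ from (i), and in the main part change variables $u=t-s$ and apply the identity $\langle W_r(t-u)\psi_r^0,\Xi_{rk}^0\rangle=\langle W_r(t)\psi_r^0,W_r(u)\Xi_{rk}^0\rangle$. Extending the $u$-integral to $[0,\infty)$ identifies it with $\sum_r\langle W_r(t)\psi_r^0,\theta^\chi_{rn}\rangle$ via \eqref{teta}, modulo a tail $\sum_{k,r}\int_t^\infty\langle W_r(t)\psi_r^0,W_r(s)\Xi_{rk}^0\rangle\langle W_n(s)i\partial_k\rho_n,\chi\rangle\,ds$. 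The remainder $r(t)$ is this tail plus $\sum_k\int_0^t r_0(s)\langle iW_n(t-s)\partial_k\rho_n,\chi\rangle\,ds$. For the tail, $\Xi_{rk}^0\in L^2$ since $\int_0^\infty|N_{kl}(s)|\,ds<\infty$ by Theorem~\ref{h-eq} and $\|W_r(s)\partial_l\rho_r\|=\|\partial_l\rho_r\|$, so \eqref{16} gives $\mathbb{E}|\langle W_r(t)\psi_r^0,W_r(s)\Xi_{rk}^0\rangle|^2\le C$ uniformly; combined with the local-energy decay $|\langle W_n(s)i\partial_k\rho_n,\chi\rangle|\le C(1+s)^{-3/2}$ from \eqref{eqfreeD} (valid because $\partial_k\rho_n,\chi$ have compact support), Minkowski yields $C(1+t)^{-1/2}$ in $L^2(\Omega)$. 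For the $r_0$-contribution, Minkowski with $(\mathbb{E}|r_0(s)|^2)^{1/2}\le C(1+s)^{-1/2}$ from (i) reduces to the convolution estimate $\int_0^t(1+t-s)^{-3/2}(1+s)^{-1/2}\,ds=O((1+t)^{-1/2})$, handled by splitting at $s=t/2$. Both contributions square to $O((1+t)^{-1})$, proving \eqref{83}.

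The main obstacle is the algebraic identity transferring $W_n$ off the random field, which requires care to verify that unitarity on complex $L^2$ does descend to the real inner product \eqref{10} and that the group law is compatible with the pairing; once that is in hand, the remainder bounds are a routine combination of three decay mechanisms (Theorem~\ref{h-eq}, \eqref{eqfreeD}, and \eqref{16}). A secondary technical nuisance is the convolution estimate in the $r_0$-term, but the $s=t/2$ split is standard.
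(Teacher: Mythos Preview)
Your proof is correct and follows essentially the same route as the paper: variation-of-constants \eqref{3.4} for $q^{(j)}$, the adjoint/group identity $(W_n(t))'=W_n(-t)$ to transfer the free evolution onto a deterministic test function, extension of the time integral to $[0,\infty)$, and Minkowski together with Theorem~\ref{h-eq}, \eqref{eqfreeD}, and the covariance bound \eqref{16} for the remainders. The only cosmetic difference is that for the uniform-in-$s$ bound on $\mathbb{E}|\langle W_n(t-s)\psi_n^0,\partial_l\rho_n\rangle|^2$ in part~(i) the paper invokes the mean local-energy estimate~\eqref{6.3}, whereas you go directly through \eqref{16} and the $L^2$-isometry \eqref{2.8}; both give the same constant bound.
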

%%%%%%%%%%%%%%%%%%%%%%%%%%%%%%%%%%%%%%%%%%%
\begin{proof} (i)
At first, relations~\eqref{3.4} and \eqref{41}  imply that
for each $k=1,2,3$, $j=0,1$,
\begin{equation}\label{84}
\mathbb{E} \Big|q^{(j)}_k(t)-\sum\limits_{n=1}^N\sum\limits_{l=1}^3\int\limits_0^t
\left\langle W_n(t-s)\psi^0_n, N^{(j)}_{kl}(s)\partial_l\rho_n
\right\rangle\,ds\Big|^2 =
\mathbb{E} \Big| N^{(j+1)}(t) q^0+N^{(j)}(t) p^0\Big|^2\le
C\langle t\rangle^{-3},
\end{equation}
by the bound~\eqref{decayA} and condition \textbf{S2}. Secondly,
\begin{equation*}%\label{47}
 \mathbb{E}\left|\int\limits_t^{+\infty}
\left\langle W_n(t-s)\psi_n^0,  N^{(j)}_{kl}(s)\partial_l\rho_n
\right\rangle\,ds\right|^2= \int\limits_t^{+\infty}  N^{(j)}_{kl}(s_1)\,ds_1
\int\limits_t^{+\infty} N^{(j)}_{kl}(s_2)A(t-s_1,t-s_2)\,ds_2,
\end{equation*}
where $A(t-s_1,t-s_2):=\mathbb{E}\left(
\left\langle W_n(t-s_1)\psi^0_n, \partial_l\rho_n \right\rangle
\left\langle W_n(t-s_2)\psi^0_n, \partial_l\rho_n \right\rangle
\right)$.
For any $t,s_1, s_2\in \mathbb{R}$, we have
\begin{equation*}
|A(t-s_1,t-s_2)|\le C
\sup_{\tau\in \mathbb{R}}\mathbb{E}|\langle W_n(\tau)\psi^0_n, \partial_l\rho_n\rangle|^2
\le C_1\sup_{\tau\in \mathbb{R}}\mathbb{E}\Vert W_n(\tau)\psi^0_n\Vert^2_{0,{R_\rho}}
\le C_2<\infty,
\end{equation*}
by the bound~\eqref{6.3}.
Hence, applying the bound~\eqref{decayA},  we obtain  that
\begin{equation}\label{48}
\mathbb{E}\left|\int\limits_t^{+\infty}
\langle W_n(t-s)\psi^0_n, N^{(j)}_{kl}(s)\partial_l\rho_n \rangle
\,ds\right|^2\le
C\left(\int\limits_t^{+\infty}\langle s\rangle ^{-3/2}\,ds\right)^2
\le C(1+t)^{-1}.
\end{equation}
Therefore, bounds \eqref{84} and \eqref{48}  imply
\begin{equation}\label{49}
 q^{(j)}_k(t)=\sum\limits_{n=1}^N\sum\limits_{l=1}^3\int\limits_0^{+\infty}
\left\langle W_n(t-s)\psi^0_n, N^{(j)}_{kl}(s)\partial_l\rho_n
\right\rangle ds +r_j(t),
\quad \mathbb{E}|r_j(t)|^2\le C\langle t\rangle^{-1}.
\end{equation}
To prove \eqref{81}, we introduce
 an operator $({W}_n(t))'$   adjoint to ${W}_n(t)$:
$$
\left\langle \psi, ({W}_n(t))'\phi\right\rangle
  =\left\langle {W}_n(t)\psi,\phi\right\rangle\quad
\mbox{for }\,\, \phi,\psi\in L^2.
$$
Note that for $\phi,\psi\in C_0^\infty(\mathbb{R}^3)$,
$$
\left\langle \psi, \frac{d}{dt}({W}_n(t))'\phi\right\rangle=
\left\langle \frac{d}{dt}{{W}}_n(t)\psi,\phi\right\rangle=
-\left\langle l_n(\nabla){W}_n(t)\psi,\phi\right\rangle
=\left\langle \psi, {l}_n(\nabla)({W}_n(t))'\phi\right\rangle.
$$
Hence, $({W}_n(t))'=W_n(-t)$.
Therefore, representation~\eqref{8.1} follows from  \eqref{49} and \eqref{xi}, since
$$
\left\langle W_n(t-s)\psi^0_n, \partial_l\rho_n\right\rangle
%=\left\langle W_n(-s)W_n(t)\psi^0_n, \partial_l\rho_n\right\rangle
=\left\langle W_n(t)\psi_n^0,W'_n(-s)\partial_l\rho_n \right\rangle
= \left\langle W_n(t)\psi_n^0,
W_n(s)\partial_l\rho_n \right\rangle.
$$

(ii) Let $\chi\in C_0^\infty(\mathbb{R}^3; \mathbb{C}^4)$ with $\mathrm{supp} \chi\subset B_R$.
By \eqref{2.10}, we have
\begin{equation}\label{6.7}
\langle \psi_n(\cdot\,,t),\chi\rangle=
\langle W_n(t)\psi_n^0,\chi\rangle+
\sum\limits_{k=1}^3 \int\limits_0^{t}
q_k(t-s)\left\langle W_n(s)\, i\,\partial_k\rho_n,\chi \right\rangle ds.
\end{equation}
Condition \textbf{A1} and  \eqref{eqfreeD} imply
\begin{equation}\label{6.8}
|\langle W_n(s) \partial_k\rho_n,\chi\rangle|\le C\,\langle s\rangle^{-3/2}\Vert\nabla\rho_n\Vert_{0,R_\rho},
\end{equation}
where  $C=C(R,R_\rho)<\infty$ is a positive constant.
Hence, using \eqref{8.1} and \eqref{6.8}, we obtain
\begin{align}\label{7.8}
\mathbb{E}\left|\int\limits_0^{t}
\Big(q_k(t-s)-\sum\limits_{r=1}^N
\left\langle W_r(t-s)\psi_r^0,\Xi^0_{rk}\right\rangle\Big)
\left\langle W_n(s)\,i\,\partial_k\rho_n,\chi\right\rangle ds\right|^2\notag\\
\le C\left(\int\limits_0^{t} \sqrt{\mathbb{E}|r_0(t-s)|^2}\,\langle s\rangle^{-3/2}ds\right)^2
\le C_1(1+t)^{-1}.
\end{align}
The next step in proving \eqref{83} is to verify that
\begin{equation}\label{6.9}
\mathbb{E}\Big|\int\limits_t^{+\infty}\sum\limits_{r=1}^N
\langle W_r(t-s)\psi_r^0,\Xi_{rk}^0\rangle
\langle W_n(s)\,i\,\partial_k\rho_n,\chi\rangle\,ds\Big|^2\le C(1+t)^{-1}.
\end{equation}
Indeed, by \eqref{16} and \eqref{2.8}, we have
\begin{align*}
\mathbb{E}|\langle W_r(t)\psi_r^0,f\rangle|^2&=
\mathbb{E}|\langle \psi_r^0,W'_r(t)f\rangle|^2=
\sum\limits_{i,j=1}^8 \left(
Q_{0,rr}^{\nu,ij}(x,y),\mathcal{R}^i(W'_r(t)f(x))\,\mathcal{R}^j (W'_r(t)f(y))
\right)\\
&\le C\Vert W'_r(t)f\Vert^2=C\Vert f\Vert^2,\quad \forall f\in L^2,
\quad r\in \overline{N}.
\end{align*}
 By condition~\textbf{A1}, notion~\eqref{xi} and  bounds~\eqref{decayA} and \eqref{2.8},
 \begin{equation}\label{57'}
 \Xi^j_{rk}\in L^2(\mathbb{R}^3;\mathbb{C}^4).
 \end{equation}
Hence, $
\mathbb{E}|\langle W_r(\tau)\psi_r^0,\Xi_{rk}^0\rangle|^2
 \le C\Vert \Xi_{rk}^0\Vert^2\le C_1 <\infty.
$
Together with \eqref{6.8}, this implies  bound~\eqref{6.9}.
Hence, representation \eqref{6.7} and  bounds \eqref{7.8} and \eqref{6.9} imply
\begin{equation*}
\langle \psi_n(\cdot\,,t),\chi\rangle=
\langle W_n(t)\psi_n^0,\chi\rangle+
\sum\limits_{r=1}^N\sum\limits_{k=1}^3 \int\limits_0^{+\infty}
\left\langle W_r(t-s)\psi_r^0,\Xi_{rk}^0\right\rangle
\left\langle W_n(s)\,i\,\partial_k\rho_n,\chi\right\rangle ds+r(t),
\end{equation*}
where $\mathbb{E}|r(t)|^2\le C(1+t)^{-1}$.
Finally,
$\left\langle W_r(t-s)\psi^0_r, \Xi_{rk}^0\right\rangle
= \left\langle W_r(t)\psi_r^0,
W_r(s)\Xi_{rk}^0 \right\rangle$.
Therefore,  representation~\eqref{83} holds by \eqref{teta}.
\end{proof}
%%----------------------------------------------------------
\begin{corollary}\label{c7.2}
Let $Z=(\chi,u,v)\in {\cal D}={\cal D}_0\times \mathbb{R}^3\times \mathbb{R}^3$.
Then
$$
\langle Y(t),Z\rangle=
 \langle \mathbf{W}(t) \psi^0, \chi^Z\rangle + r(t),\quad \mathbb{E}|r(t)|^2\le C (1+t)^{-1},\quad t>0,
$$
where $\langle Y(t),Z\rangle= \langle \psi(\cdot\,,t),\chi\rangle
+ q(t)\cdot u+\dot{q}(t)\cdot v$,
$Y(t)=(\psi(\cdot\,,t),q(t),\dot q(t))$ is a solution to the  problem~\eqref{CP},
the function $\chi^Z$ is defined in \eqref{hn}.
 Note that
 \begin{equation}\label{55}
 \chi^Z\in [L^2(\mathbb{R}^3;\mathbb{C}^4)]^N\quad \mbox{for any }\,\, Z\in \mathcal{D}.
 \end{equation}
 Indeed, $\Xi_{nk}^j\in L^2$. Hence,
 $\Vert W_n(s)\Xi_{nk}^0\Vert=\Vert \Xi_{nk}^0\Vert\le C<\infty$.
Let $\mathrm{supp}\,\chi\subset B_{R_1}$. Then,
notation~\eqref{teta} and bounds~\eqref{2.8} and \eqref{6.8}
imply that $\Vert\theta^\chi_{nr}\Vert\le C\Vert \chi\Vert_{0,R_1}$.
\end{corollary}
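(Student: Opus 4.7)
The plan is to decompose
\[
\langle Y(t),Z\rangle = \sum_{n=1}^N \langle \psi_n(\cdot\,,t),\chi_n\rangle + \sum_{k=1}^3 q_k(t)\,u_k + \sum_{k=1}^3 \dot q_k(t)\,v_k,
\]
and treat each of the three pieces using the asymptotic representations from Theorem~\ref{l7.1}. For the field part I would apply \eqref{83} to each $\langle \psi_n(\cdot\,,t),\chi_n\rangle$; for the particle part I would apply \eqref{81} with $j=0,1$ to each $q_k(t)$ and $\dot q_k(t)$. Each of these pieces produces a remainder satisfying $\mathbb{E}|r|^2 = O((1+t)^{-1})$, and combining them via the Cauchy--Schwarz inequality (applied to the finite sum of remainders) preserves the same order for the total remainder $r(t)$.

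The main bookkeeping step is to recognize that, after summation, the leading contributions reorganize themselves into $\langle \mathbf{W}(t)\psi^0,\chi^Z\rangle$ with $\chi^Z$ as in \eqref{hn}. Concretely, renaming the dummy indices $n\leftrightarrow r$ in the double sum $\sum_n\sum_r \langle W_r(t)\psi_r^0,\theta^{\chi_n}_{rn}\rangle$ produced by the second term of \eqref{83} yields $\sum_n \langle W_n(t)\psi_n^0,\sum_r \theta^{\chi_r}_{nr}\rangle$. Similarly, \eqref{81} applied coordinatewise converts $q(t)\cdot u + \dot q(t)\cdot v$ into $\sum_n \langle W_n(t)\psi_n^0,\Xi^0_n\cdot u + \Xi^1_n\cdot v\rangle$ up to an error controlled as above. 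Adding these to the diagonal term $\sum_n \langle W_n(t)\psi_n^0,\chi_n\rangle$ and using linearity of $\langle\cdot\,,\cdot\rangle$ in the second argument reconstructs $\sum_n \langle W_n(t)\psi_n^0,\chi_n^Z\rangle = \langle \mathbf{W}(t)\psi^0,\chi^Z\rangle$, which is the asserted formula.

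For the inclusion \eqref{55} I would argue along the lines sketched in the statement itself. The contribution $\Xi^0_n\cdot u + \Xi^1_n\cdot v$ lies in $L^2$ by \eqref{57'}, and $\chi_n \in C_0^\infty \subset L^2$. The only nontrivial piece is $\theta^{\chi_r}_{nr}$: taking the $L^2$ norm inside the time integral in \eqref{teta}, using the isometry $\Vert W_n(s)\Xi^0_{nk}\Vert = \Vert \Xi^0_{nk}\Vert$ from \eqref{2.8}, and inserting the bound \eqref{6.8} for $\langle W_r(s)\partial_k\rho_r,\chi_r\rangle$ (valid since $\mathrm{supp}\,\chi_r$ is compact), we obtain an integrable time factor $\langle s\rangle^{-3/2}$; this controls $\Vert \theta^{\chi_r}_{nr}\Vert$ by $C\Vert\chi_r\Vert_{0,R_1}$. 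Summing yields $\chi^Z_n \in L^2$. I do not anticipate any essential obstacle here; the step requiring the most care is simply tracking the indices $n$ and $r$ consistently through the double sum so that the resulting object matches the definition \eqref{hn}.
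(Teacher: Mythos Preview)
Your proposal is correct and is precisely the approach the paper has in mind: the corollary is an immediate consequence of Theorem~\ref{l7.1}, obtained by summing the representations \eqref{81} and \eqref{83} and reindexing the double sum over $n,r$ so that the leading terms assemble into $\langle\mathbf{W}(t)\psi^0,\chi^Z\rangle$ with $\chi^Z$ as in \eqref{hn}. The paper does not write out a separate proof beyond the remarks on $\chi^Z\in L^2$ contained in the statement, and your treatment of that inclusion (via \eqref{57'}, \eqref{2.8}, and \eqref{6.8}) matches those remarks exactly.
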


%%%%%%%%%%%%%%%%%%%%%%%%%%%%%%%%%%%%%%%%%%%%%%%%%%%%%%%%%%%%
\section{Proof of Theorem \ref{tA}}
%%%%%%%%%%%%%%%%%%%%%%%%%%%%%%%%%%%%%%%%%%%%%%%%%%%%%%%%%%%%%%%
%%%%%%%%%%%%%%%%%%%%%%%%%%%%%%%%%%%
\subsection{Compactness of the measures $\mu_t$}\label{sec4.1}
%%%%%%%%%%%%%%%%%%%%%%%%%%%%%%%%%%%%%%%%%
Lemma~\ref{l2.1} follows
from the bound~\eqref{7.1.1} by using the Prokhorov Theorem,
 \cite[Lemma II.3.1]{VF}, and technique of the proof in
 \cite[Thm.~XII.5.2]{VF}.
Now we prove the bound~\eqref{7.1.1}.
Let $U_0(t):Y_0\to Y(t)$ be the strongly continuous group of bounded linear operators
on $\mathcal{E}$ corresponding to the case $\rho\equiv0$.
Then, $U_0(t)Y_0=(\psi_0(\cdot,t),q_0(t),\dot{q}_0(t))$,
where
\begin{equation}\label{3.5}
\psi_0(x,t)=(\psi_{01}(x,t),\ldots,\psi_{0N}(x,t))\equiv \mathbf{W}(t)\psi^0,\quad
\psi_{0n}(x,t)=W_n(t)\psi_n^0,
\end{equation}
 the operator $\mathbf{W}(t)$ is defined in \eqref{W(t)},
and $q_0(t)$ is a solution to the Cauchy problem
$$
\ddot q_0(t)+V q_0(t)=0,\quad t\in \mathbb{R},\quad
(q_0(t),\dot q_0(t))|_{t=0}=(q^0,p^0).
$$
Hence,
\begin{equation}\label{3.5'}
q_0(t)=\cos(\sqrt{V}t)q^0+V^{-1/2}\sin(\sqrt{V} t)p^0.
\end{equation}
At first, we prove that
\begin{equation}\label{7.1.10}
\sup\limits_{t\ge 0} \mathbb{E}\Vert U_0(t) Y_0\Vert^2_{{\cal E},R}
\le C(R),\quad\forall R>0.
\end{equation}
Indeed,
$
\Vert U_0(t) Y_0\Vert^2_{{\cal E},R}=
\Vert \mathbf{W}(t) \psi^0\Vert^2_{0,R}+|q_0(t)|^2+|\dot q_0(t)|^2
$.
By condition~{\bf S2},
\begin{equation}\label{57}
\mathbb{E}(|q_0(t)|^2+|\dot q_0(t)|^2)\le C\,\mathbb{E}(|q^0|^2+|p^0|^2)<\infty.
\end{equation}
We verify that
\begin{equation}\label{6.3}
\sup\limits_{t\ge 0} \mathbb{E}\Vert \mathbf{W}(t) \psi^0\Vert^2_{0,R}=
\sup\limits_{t\ge 0} \sum\limits_{n=1}^N
\mathbb{E}\Vert W_n(t) \psi_n^0\Vert^2_{0,R}\le C(R),
\quad\forall R>0.
\end{equation}
To prove \eqref{6.3}, we put $e_t(x):=\mathbb{E}|W_n(t) \psi_n^0(x)|^2$.
Then by condition~\textbf{S3} and relation~\eqref{25}, $e_t(x)=e_t$ for almost every $x\in\mathbb{R}^3$.
Hence, using bound~\eqref{24} and condition~\textbf{S2}, we obtain
$$
e_t|B_R|=\mathbb{E}\Vert W_n(t)\psi_n^0\Vert^2_{0,R}\le \mathbb{E}\Vert \psi_n^0\Vert^2_{0,R+t}
\le e_0|B_{R+t}|,\quad t\ge0,
$$
where $|B_R|$ is the volume of the ball $B_R=\{x\in \mathbb{R}^3:|x|<R\}$.
Hence, $e_t\le e_0|B_{R+t}|/|B_R|$.
As $R\to\infty$, we get $e_t\le e_0$.
Therefore,
$\mathbb{E}\Vert W_n(t) \psi_n^0\Vert^2_{0,R}=e_t|B_R|\le e_0|B_R|\le C(R)<\infty$,
and the bound~\eqref{6.3} is proved.
Further, we represent the solution to problem~\eqref{CP} as
\begin{equation}\label{6.4}
U(t)Y_0=U_0(t) Y_0+\int\limits_0^t
U(t-s) BU_0(s)Y_0\,ds,
\end{equation}
where
\begin{equation}\label{2.6}
B(Y):=\Big(i\, q\cdot\nabla\rho_1,\ldots,i\, q\cdot\nabla\rho_N,0,\
              \sum\limits_{n=1}^N\langle \psi_n,\nabla\rho_n\rangle\Big),
              \quad Y=(\psi_1,\ldots,\psi_N,q,p).
\end{equation}
Hence, \eqref{delocen} and \eqref{7.1.10} yield
\begin{align*}
\mathbb{E}\,\Vert U(t)Y_0\Vert^2_{{\cal E},R}
&\le \mathbb{E}\Vert U_0(t) Y_0\Vert^2_{{\cal E},R}
+\mathbb{E}\int\limits_0^t \Vert U(t-s) BU_0(s)Y_0\Vert^2_{{\cal E},R}\,ds
\nonumber\\
&\le C(R)+\int\limits_0^t \langle t-s\rangle^{-3/2}\,
\mathbb{E}\,\Vert U_0(s) Y_0\Vert^2_{{\cal E},R_\rho}\,ds\le C_1(R)<\infty.
\nonumber
\end{align*}
The bound~~\eqref{7.1.1} is proved. This implies the assertion of Lemma~\ref{l2.1}
because the embedding ${\cal E}\equiv{\cal E}^0\subset {\cal E}^{-\varepsilon}$ is compact for every $\varepsilon>0$.

%%%%%%%%%%%%%%%%%%%%%%%%%%%%%%%%%%%%%%%%%%%%%%%%%%%%%%%
\subsection{Convergence of correlation functions}\label{s.conv}
%%%%%%%%%%%%%%%%%%%%%%%%%%%%%%%%%%%%%%%%%%%%%%%%%
To prove \eqref{concorf}, it suffices to prove the convergence of
$\mathbb{E}|\langle Y(t),Z\rangle|^2$ to a limit as $t\to\infty$.
 Corollary~\ref{c7.2} implies that, for any $Z\in {\cal D}$,
\begin{equation}\label{59}
\mathbb{E}|\langle Y(t),Z\rangle|^2=
\mathbb{E}|\langle \mathbf{W}(t)\psi^0,\chi^Z\rangle|^2+o(1)
={\cal Q}^\nu_t(\chi^Z,\chi^Z)+o(1),
\quad t\to\infty,
\end{equation}
where $\chi^Z$ is defined in \eqref{hn} and
${\cal Q}^\nu_t(\chi,\chi):=\mathbb{E}|\langle \mathbf{W}(t)\psi^0,\chi\rangle|^2$.
In \cite[Proposition~4.1]{DKM-03}, we proved the convergence of
${\cal Q}^\nu_t(\chi,\chi)$
to a limit for $\chi\in{\cal D}_0$.
However, generally, $\chi^Z\not\in{\cal D}_0$.
Note that  $\chi^Z\in L^2$ if $Z\in {\cal D}$, due to \eqref{55}.
Now we check that ${\cal Q}^\nu_t(\chi^Z,\chi^Z)$, $t\in \mathbb{R}$,
are equicontinuous in $L^2$.
%%----------------------------
\begin{lemma}\label{l}
(i) The quadratic form ${\cal Q}^\nu_t(\chi,\chi)=
\int|\langle
\psi^0,\chi\rangle|^2\nu_t(d\psi^0)$, $t\in \mathbb{R}$,
are equicontinuous in $L^2$.
(ii) The characteristic functionals $\widehat{\nu}_t(\chi)$, $t\in \mathbb{R}$, are equicontinuous in $L^2$.
\end{lemma}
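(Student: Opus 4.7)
The plan is to reduce both statements to the $L^2$-boundedness of the correlation form of $\nu_0$, which we already have from \eqref{16}, and then transport it through time via the isometry \eqref{2.8}. The key observation is that, by the adjoint relation $(W_n(t))' = W_n(-t)$ established in the proof of Theorem~\ref{l7.1},
$$
{\cal Q}^\nu_t(\chi,\chi) = \mathbb{E}\,|\langle \mathbf{W}(t)\psi^0,\chi\rangle|^2 = \mathbb{E}\,|\langle \psi^0,\mathbf{W}(-t)\chi\rangle|^2,
$$
where $\mathbf{W}(-t)\chi = (W_1(-t)\chi_1,\dots,W_N(-t)\chi_N)$. So the form ${\cal Q}^\nu_t$ is the pullback of ${\cal Q}^\nu_0$ along $\mathbf{W}(-t)$.

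First I would prove (i). Expanding in correlation functions and using \eqref{16} componentwise yields
$$
\mathbb{E}\,|\langle \psi^0,\chi'\rangle|^2 = \sum_{n,n'=1}^N \sum_{i,j=1}^8 \bigl\langle Q^{\nu,ij}_{0,nn'}(x,y),\mathcal{R}^i\chi'_n(x)\otimes\mathcal{R}^j\chi'_{n'}(y)\bigr\rangle \le C\sum_{n,n'}\|\chi'_n\|\,\|\chi'_{n'}\| \le C_1\|\chi'\|_{L^2}^2
$$
for every $\chi'\in [L^2(\mathbb{R}^3;\mathbb{C}^4)]^N$. Applying this with $\chi' = \mathbf{W}(-t)\chi$ and invoking the isometry \eqref{2.8}, which gives $\|W_n(-t)\chi_n\| = \|\chi_n\|$ and hence $\|\mathbf{W}(-t)\chi\|_{L^2} = \|\chi\|_{L^2}$, I obtain the uniform bound ${\cal Q}^\nu_t(\chi,\chi) \le C_1\|\chi\|_{L^2}^2$ for every $t\in \mathbb{R}$. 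The corresponding bilinear form ${\cal Q}^\nu_t(\chi_1,\chi_2)$ is then bounded by $C_1\|\chi_1\|_{L^2}\|\chi_2\|_{L^2}$ (by polarization, or directly by the same computation), so
$$
|{\cal Q}^\nu_t(\chi_1,\chi_1) - {\cal Q}^\nu_t(\chi_2,\chi_2)| \le C_1(\|\chi_1\|_{L^2}+\|\chi_2\|_{L^2})\,\|\chi_1-\chi_2\|_{L^2},
$$
which is the desired equicontinuity on bounded sets of $L^2$, uniformly in $t$.

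For (ii) I would use the elementary inequality $|e^{ia} - e^{ib}| \le |a-b|$ for $a,b\in\mathbb{R}$, so
$$
|\widehat{\nu}_t(\chi_1) - \widehat{\nu}_t(\chi_2)| \le \int |\langle \psi,\chi_1-\chi_2\rangle|\,\nu_t(d\psi) \le \bigl({\cal Q}^\nu_t(\chi_1-\chi_2,\chi_1-\chi_2)\bigr)^{1/2} \le C_1^{1/2}\,\|\chi_1-\chi_2\|_{L^2}
$$
by Cauchy--Schwarz and part (i). This Lipschitz bound, uniform in $t$, yields equicontinuity of $\widehat{\nu}_t$ in $L^2$.

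There is no real obstacle here: everything reduces to the $L^2$-continuity of the initial correlation form, which is already in hand, combined with the unitarity of $W_n(t)$ on $L^2$. The only subtlety to be careful about is that although the characteristic functionals are defined on ${\cal D}_0$, the argument above extends them continuously to all of $[L^2(\mathbb{R}^3;\mathbb{C}^4)]^N$; this is precisely what will be needed in Section~\ref{s.conv} in order to apply the convergence of $\widehat{\nu}_t$ at the test functions $\chi^Z$, which lie in $L^2$ but not necessarily in ${\cal D}_0$.
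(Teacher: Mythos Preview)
Your proof is correct and follows essentially the same approach as the paper: both reduce to the uniform bound $\sup_t {\cal Q}^\nu_t(\chi,\chi)\le C\|\chi\|^2$ by pulling back along $\mathbf{W}(-t)=(W_n(t))'$ and using the isometry~\eqref{2.8}, and both derive~(ii) from~(i) via Cauchy--Schwarz and $|e^{ia}-e^{ib}|\le|a-b|$. The only cosmetic difference is that you invoke~\eqref{16} for the $L^2$-boundedness of the initial correlation form, whereas the paper cites Lemma~\ref{r2.9} (both rest on the boundedness of $\widehat{q}^{\nu,ij}_{0,nn'}$), and you spell out the polarization step for equicontinuity that the paper leaves implicit.
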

%%%%%%%%%%%%%%%%%%%%%%%%%%%
\emph{Proof}. (i) It suffices to prove the uniform bound
\begin{equation}\label{6.12}
\sup\limits_{t\in \mathbb{R}} |{\cal Q}^\nu_t(\chi,\chi)|\le C
\Vert\chi\Vert^2,\quad \chi\in L^2.
\end{equation}
We note that
$
 {\cal Q}^\nu_t(\chi,\chi)
=\sum\limits_{n,n'=1}^N
\langle q^\nu_{0,nn'}(x-y), W'_n(t)\chi_n(x)\otimes W'_{n'}(t)\chi_{n'}(y)\rangle.
$
Since ${W}'_n(t)=W_n(-t)$, then using Lemma~\ref{r2.9} and bound~\eqref{2.8} gives
$$
\sup\limits_{t\in \mathbb{R}} |{\cal Q}^\nu_t(\chi,\chi)|
\le C \sup\limits_{t\in \mathbb{R}} \sum\limits_{n=1}^N
\Vert W'_n(t)\chi_n\Vert^2\le C\Vert\chi\Vert^2.
$$

(ii) Applying the Cauchy--Schwartz inequality and bound~\eqref{6.12}, we obtain that
\begin{align*}
|\widehat{\nu}_t(\chi_1)-\widehat{\nu}_t(\chi_2)|&=
\left|\int \left( e^{i\langle \psi^0,\chi_1 \rangle}-
e^{i\langle \psi^0,\chi_2 \rangle}\right)\nu_t(d\psi^0)\right|
%\le \int \left|e^{i\langle \psi^0,\chi_1-\chi_2 \rangle}-1\right|\nu_t(d\psi^0)\\
\le
\int |\langle \psi^0,\chi_1-\chi_2 \rangle|\,\nu_t(d\psi^0)\\
&\le \sqrt {\int |\langle \psi^0,\chi_1-\chi_2 \rangle|^2\nu_t(d\psi^0)}
=
\sqrt {{\cal Q}^\nu_t(\chi_1-\chi_2, \chi_1-\chi_2)}
\le C\Vert\chi_1-\chi_2 \Vert.
\end{align*}

Thus, formula~\eqref{59} and the item~(i) of Lemma~\ref{l} imply  that
$\lim\limits_{t\to\infty}{\cal Q}^\nu_t(\chi^Z,\chi^Z)
={\cal Q}^\nu_\infty(\chi^Z,\chi^Z)$.
The assertion~(1) of Theorem~\ref{tA} is proved.

%%%%%%%%%%%%%%%%%%%%%%%%%%%%%%%%%%%%%%%%%%%%%%%%%%
\subsection{Convergence of characteristic functionals}\label{sec4.3}
%%%%%%%%%%%%%%%%%%%%%%%%%%%%%%%%%%%%%%%%%%%%%%%%%%%%%%%%%
To prove the assertion~(2) of Theorem~\ref{tA}
it remains to prove Lemma~\ref{l2.2}.
 By triangle inequality, we have
\begin{align}
&\left|\mathbb{E}\, e^{i\langle Y(t),Z\rangle}-
\exp\left\{-\frac{1}{2} {\cal Q}_\infty(Z,Z)\right\}\right|
\le\left|\mathbb{E} \left(e^{i\langle Y(t),Z\rangle}-
 e^{i\langle \mathbf{W}(t)\psi^0,\chi^Z\rangle}\right)\right|
\nonumber\\
&+ \left|\mathbb{E}\, e^{i\langle \mathbf{W}(t)\psi^0,\chi^Z\rangle}
-\exp\left\{-\frac{1}{2}{\cal Q}_\infty (Z,Z)\right\}\right|.\label{8.16}
\end{align}
The first term in the right hand side of \eqref{8.16} is estimated by
\begin{align*}
&\Big|\mathbb{E} \Big(e^{i\langle Y(t),Z\rangle}-
e^{i\langle \mathbf{W}(t)\psi^0,\chi^Z\rangle}\Big)\Big|
\le \mathbb{E}\Big|\langle Y(t),Z\rangle
-\langle \mathbf{W}(t)\psi^0,\chi^Z\rangle\Big|
\nonumber\\
&\le \mathbb{E}|r(t)|
\le \Big(\mathbb{E}|r(t)|^2\Big)^{1/2} \le C (1+t)^{-1/2}\to0\quad
\mbox{as }\,\,t\to\infty,
\end{align*}
by Corollary~\ref{c7.2}. It remains to prove the convergence of
$\mathbb{E}\left(\exp\{i\langle \mathbf{W}(t)\psi^0,\chi^Z\rangle\}\right)\equiv\widehat{\nu}_t(\chi^Z)$
to a limit as $t\to\infty$.
Since $\chi^Z\in L^2$, then Theorem~\ref{l6.2}
and the item~(ii) of Lemma~\ref{l} yield
$$
\widehat{\nu}_t(\chi^Z)
\to \exp\left\{-\frac{1}{2}{\cal Q}^\nu_{\infty}(\chi^Z,\chi^Z)\right\}\quad
\mbox{as }\,\, t\to\infty,
$$
where ${\cal Q}^\nu_{\infty}$ is defined by \eqref{qpp}.
This completes the proof of Theorem~\ref{tA}.

%%%%%%%%%%%%%%%%%%%%%%%%%%%%%%%%%%%%%%%%%%%%%%%%
\section{Conclusion}
%%%%%%%%%%%%%%%%%%%%%%%%%%%%5
In this paper, we prove the convergence~\eqref{1.8}
assuming that the phase space is $\mathcal{E}=\mathcal{H}\oplus \mathbb{R}^6$
with $\mathcal{H}=[L^2_\mathrm{loc}(\mathbb{R}^3;\mathbb{C}^4)]^N$.
Instead of $\mathcal{E}$, we now introduce a space
$\mathcal{E}_\sigma$.
%%%%%%%%%%%%%%%%%%%%%%%%%%%%%%%%%%%%%%%%%
\begin{definition}
We write $\mathcal{H}_\sigma:=[L^2_\sigma]^N$, $\sigma\in\mathbb{R}$,
where $L^2_\sigma\equiv L^2_\sigma(\mathbb{R}^3;\mathbb{C}^{4})$ are
the weighted Agmon spaces of the complex- and vector-valued functions $\psi$ with the finite norm
$$
\Vert\psi\Vert_\sigma:=\Vert\langle x\rangle^{\sigma}\psi\Vert
=\Big(\int\limits_{\mathbb{R}^3}\langle x\rangle^{2\sigma}|\psi(x)|^2\,dx\Big)^{1/2}<\infty,
\quad \langle x\rangle:=\sqrt{1+|x|^2}.
$$
Below by $L^2_\sigma$ we denote $L^2_\sigma(\mathbb{R}^3;\mathbb{C}^{n})$
with any $n\in \mathbb{N}$.

Let  $\mathcal{E}_\sigma:=\mathcal{H}_\sigma\oplus \mathbb{R}^6$
be  the phase space of $Y=(\psi,q,p)$ with the finite norm
$\Vert Y\Vert_{\mathcal{E}_\sigma}:=\Vert\psi\Vert_\sigma+|q|+|p|$.

For $s,\sigma\in\mathbb{R}$, we denote
$\mathcal{H}^s_\sigma=[H^s_\sigma(\mathbb{R}^3;\mathbb{C}^4)]^N$,
where $H^s_\sigma$ is the weighted Sobolev space with the finite norm
$\Vert\psi\Vert_{H^s_\sigma}=\Vert \langle x\rangle^\sigma\Lambda^s\psi\Vert<\infty$.
Write $\mathcal{E}^s_\sigma=\mathcal{H}_\sigma^s\oplus \mathbb{R}^6$
and $\mathcal{E}_\sigma:=\mathcal{E}^0_\sigma$.
\end{definition}
%%%%%%%%%%%%%%%%%%%%%%%%%%%%%%%%%%%%%5

In this section, we assume  that $\sigma<-3/2$ if  the coupling function $\rho$
satisfies the conditions~\textbf{A1}--\textbf{A3}, and
$\sigma<-5/2$ if instead of \textbf{A1}
we impose a weaker condition~\textbf{A1'}:
\begin{itemize}
\item[{\bf A1'}] $\rho(-x)=\rho(x)$ and $\nabla\rho\in L^2_{-\sigma}$
with some $\sigma<-5/2$.
\end{itemize}

Then, the convergence~\eqref{1.8} holds in the spaces
$\mathcal{E}^{-\varepsilon}_{\bar{\sigma}}$ with any $\varepsilon>0$ and $\bar{\sigma}<\sigma$.
This assertion can be proved by a similar way as Theorem~\ref{tA} with the following two modifications.

At first, we note that the proof of Theorem~\ref{l7.1} and Corollary~\ref{c7.2}
was based on the bound~\eqref{delocen} for the solutions to problem~\eqref{CP}.
If we choose the space $\mathcal{E}_\sigma$ as the phase space,
then we use the following bound for the solutions (see Appendix~A)
\begin{equation}\label{71}
\Vert U(t)Y_0\Vert_{\mathcal{E}_\sigma}\le C\langle t\rangle^{-3/2}\Vert Y_0\Vert_{\mathcal{E}_{-\sigma}},
\end{equation}
where $\sigma<-3/2$ if  conditions~\textbf{A1}--\textbf{A3} hold, and
$\sigma<-5/2$ if conditions~\textbf{A1'}, \textbf{A2}, \textbf{A3} hold.
In particular, instead of the bound~\eqref{6.8}, we apply the following estimate
$$
|\langle W_n(s) \partial_k\rho_n,\chi\rangle|\le
\Vert W_n(s) \partial_k\rho_n\Vert_{L^2_\sigma}\, \Vert \chi\Vert_{L^2_{-\sigma}}
\le
C\,\langle s\rangle^{-3/2}\Vert\nabla\rho_n\Vert_{L^2_{-\sigma}}\quad \mbox{for }\,\,\sigma<-3/2.
$$
Then, the assertions of Theorem~\ref{l7.1} and Corollary~\ref{c7.2} can be proved
by a similar way as in Sec.~\ref{sec3}.

Secondly, to prove Theorem~\ref{tA}, we have to verify the following uniform bound
(instead of  bound~\eqref{7.1.1}):
\begin{equation}\label{72}
\sup\limits_{t\ge 0} \mathbb{E}\,\Vert U(t) Y_0\Vert^2_{\mathcal{E}_\sigma}\le C<\infty.
\end{equation}
In turn, it suffices to prove this bound only for the operator $U_0(t)$
introduced in Sec.~\ref{sec4.1}, and
then to apply  representation~\eqref{6.4} and  bound~\eqref{71}.
Now we derive bound~\eqref{72} for $U_0(t)$.

Since $\Vert U_0(t) Y_0\Vert^2_{\mathcal{E}_\sigma}=
\Vert \mathbf{W}(t) \psi^0\Vert^2_{\mathcal{H}_\sigma}+|q_0(t)|^2+|\dot q_0(t)|^2$,
then \eqref{57} implies that it is enough to prove that
$\mathbb{E}\Vert W_n(t) \psi^0_n\Vert^2_{L^2_\sigma}\le C<\infty$
for any $n\in \overline{N}$.
Write $\psi_{0n}(x,t):=W_n(t) \psi^0_n$.
Using \eqref{ReW} and \eqref{15} gives
$$
\mathbb{E}\left(\widehat{\mathcal{R} \psi_{0n}}(k,t)\otimes\widehat{\mathcal{R} \psi_{0n}}(k',t)\right)
=(2\pi)^3\,\delta(k+k')\, \mathcal{G}_{t,n}(k)\,\widehat{q}^{\nu}_{0,nn}(k)\,\mathcal{G}^*_{t,n}(k),
$$
where we denote
$$
\mathcal{G}_{t,n}(k):=F_{x\to k}\left[(\partial_t-\Lambda_n(\nabla))g_{t,n}(x)\right]=
\cos \omega_n(k) t-\frac{\sin\omega_n(k)t}{\omega_n(k)}\Lambda_n(-ik)
$$
and $\widehat{q}^{\nu}_{0,nn}(k):=(\widehat{q}^{\nu,ij}_{0,nn}(k))_{i,j=1}^8$.
Hence,
$$
q^\nu_{t,n}(x-y):=\mathbb{E}\left(\mathcal{R} \psi_{0n}(x,t)\otimes\mathcal{R} \psi_{0n}(y,t)\right)=
(2\pi)^{-3}\int_{\mathbb{R}^3}  e^{-ik\cdot(x-y)} \,
\mathcal{G}_{t,n}(k)\,\widehat{q}^{\nu}_{0,nn}(k)\,\mathcal{G}^*_{t,n}(k)\,dk.
$$
In particular, by Lemma~\ref{l4.1}, we have
$$
\mathbb{E}|\psi_{0n}(x,t)|^2=\mathrm{tr}[q^\nu_{t,n}(0)]=(2\pi)^{-3}\int_{\mathbb{R}^3}
\mathrm{tr}\left[\mathcal{G}_{t,n}(k)\,\widehat{q}^{\nu}_{0,nn}(k)\,\mathcal{G}^*_{t,n}(k)\right]dk
\le C<\infty.
$$
Therefore,
$$
\mathbb{E}\Vert W_n(t) \psi^0_n\Vert^2_{L^2_\sigma}=
\int_{\mathbb{R}^3} \langle x\rangle^{2\sigma}\,dx\, \mathrm{tr}[q^\nu_{t,n}(0)]\le C<\infty
\quad \mbox{for }\,\,\sigma<-3/2.
$$
By the Prokhorov Theorem and  \cite[Lemma II.3.1]{VF},
 this implies the compactness of $\{\mu_t,\,t\ge0\}$ in the spaces
$\mathcal{E}^{-\varepsilon}_{\bar{\sigma}}$ with
 any $\varepsilon>0$ and $\bar{\sigma}<\sigma$ because
$\mathcal{E}^{0}_{{\sigma}}\subset \mathcal{E}^{-\varepsilon}_{\bar{\sigma}}$
and this embedding is compact.

%%%%%%%%%%%%%%%%%%%%%%%     %%%%%%%%%%%%%%%%%
\section{Appendix~A: Long-time asymptotics}
%%%%%%%%%%%%%%%%%%%%%%%%%%%%%%%%%%%%%%%%%%%%%%%%%%%%%
In this section, we prove the following theorem.
\begin{theorem}\label{tA1}
Let $Y_0\in \mathcal{E}_\sigma$,
$\sigma>3/2$ if conditions~{\bf A1}--{\bf A3} be fulfilled,
and  $\sigma>5/2$ if  conditions~{\bf A1'}, {\bf A2}, {\bf A3} be fulfilled.
Then the solution to problem~\eqref{CP} obeys the following bound:
 \begin{equation}\label{2.9}
\Vert U(t)Y_0\Vert_{\mathcal{E}_{-\sigma}}\le C(1+|t|)^{-3/2}\Vert Y_0\Vert_{\mathcal{E}_{\sigma}},\quad t\in\mathbb{R}.
\end{equation}
\end{theorem}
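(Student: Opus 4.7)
My plan is to run the Duhamel argument of Theorem~\ref{l5.1} with the local seminorms $\|\cdot\|_{0,R}$ replaced throughout by the global weighted norms on $\mathcal{E}_\sigma$. The three pillars are (a)~a weighted dispersive bound for the free Dirac propagator $W_n(t)$, (b)~the decay $|N^{(j)}(t)|\le C\langle t\rangle^{-3/2}$ from Theorem~\ref{h-eq}, and (c)~the familiar convolution inequality $\int_0^t\langle\tau\rangle^{-3/2}\langle t-\tau\rangle^{-3/2}\,d\tau\le C\langle t\rangle^{-3/2}$.

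The central technical input, which I expect to be the main obstacle, is the global weighted estimate
$$
\|W_n(t)\phi\|_{L^2_{-\sigma}}\le C\langle t\rangle^{-3/2}\|\phi\|_{L^2_\sigma},\qquad \sigma>3/2,\ t\in\mathbb{R},\ n\in\overline{N}.
$$
I would derive it via the Fourier representation $\widehat{\psi}_n(k,t)=e^{i(\alpha\cdot k-\beta m_n)t}\widehat{\psi}^0_n(k)$ and standard stationary-phase analysis of $e^{\pm i\omega_n(k)t}$; since $m_n>0$, the critical points of the phase $x\cdot k/t\pm\omega_n(k)$ are non-degenerate, which produces the $\langle t\rangle^{-3/2}$ rate in $\mathbb{R}^3$ after pairing against an $L^2_\sigma$ function with $\sigma>3/2$. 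Equivalently, one uses representation~\eqref{25} together with the classical weighted decay of the Klein--Gordon kernel $g_{t,n}$, exploiting the intertwining $(\partial_t+l_n(\nabla))(\partial_t-l_n(\nabla))=(\partial_t^2-\Delta+m_n^2)\mathrm{I}$. Both routes are by now well known in the dispersive literature, and $\sigma>3/2$ is the sharp threshold.

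Once the weighted free Dirac bound is in hand, the coupled system is handled as in Theorem~\ref{l5.1}. By Cauchy--Schwarz, the forcing~\eqref{41} obeys $|F(t)|\le C\langle t\rangle^{-3/2}\|Y_0\|_{\mathcal{E}_\sigma}$ provided $\nabla\rho_n\in L^2_\sigma$. Condition~\textbf{A1} makes this automatic for every $\sigma$, so only $\sigma>3/2$ is required, while under~\textbf{A1'} we only know $\nabla\rho_n\in L^2_\tau$ for $\tau>5/2$, which forces the stronger restriction $\sigma>5/2$; this accounts for the two regimes in the statement. Inserting this estimate into representation~\eqref{3.4}, combined with the decay of $N,\dot N,\ddot N$ from Theorem~\ref{h-eq} and the convolution inequality above, yields $|q(t)|+|\dot q(t)|\le C\langle t\rangle^{-3/2}\|Y_0\|_{\mathcal{E}_\sigma}$.

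Finally, substituting this pointwise decay of $q$ back into the Duhamel formula~\eqref{2.10} for $\psi_n$,
$$
\|\psi_n(\cdot,t)\|_{L^2_{-\sigma}}\le \|W_n(t)\psi^0_n\|_{L^2_{-\sigma}}+\int_0^t\|W_n(t-s)\nabla\rho_n\|_{L^2_{-\sigma}}\,|q(s)|\,ds,
$$
and applying the weighted dispersive bound to each $W_n$-factor together with the convolution inequality to the $s$-integral, gives $\|\psi_n(\cdot,t)\|_{L^2_{-\sigma}}\le C\langle t\rangle^{-3/2}\|Y_0\|_{\mathcal{E}_\sigma}$. Summing over $n$ and combining with the bound for $(q,\dot q)$ proves~\eqref{2.9} for $t\ge 0$; the case $t<0$ follows by time reversal of the Dirac equation. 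Apart from the weighted free Dirac estimate, the argument is a verbatim weighted adaptation of the proof of Theorem~\ref{l5.1}, driven by the same finite-propagation/Duhamel structure.
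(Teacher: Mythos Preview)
Your proposal is correct and mirrors the paper's proof almost verbatim: both invoke the weighted free Dirac decay $\|W_n(t)\phi\|_{-\sigma}\le C\langle t\rangle^{-3/2}\|\phi\|_{\sigma}$ for $\sigma>3/2$ (stated in the paper as Lemma~\ref{l2.11}, cited from \cite{KKS}), combine it with Theorem~\ref{h-eq} and representation~\eqref{3.4} to control $(q,\dot q)$, and then feed this back into~\eqref{2.10} exactly as you do. One small correction: your explanation of the two regimes is incomplete. The threshold $\sigma>5/2$ under~\textbf{A1'} is not dictated by the forcing estimate (that needs only $\nabla\rho\in L^2_\sigma$ with $\sigma>3/2$), but by the proof of Theorem~\ref{h-eq} itself---Appendix~B requires control of $\partial_\lambda^k\widetilde H(\lambda)$ for $k=0,1,2$, and the Jensen--Kato-type bounds on $\partial_\lambda^k\widetilde g_{\lambda,n}$ demand $\sigma>k+1/2$, hence $\sigma>5/2$; under the compact support of~\textbf{A1} this constraint disappears. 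Since you invoke Theorem~\ref{h-eq} as a black box, the argument remains valid.
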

%%------------------------------------------------------

To prove this theorem,
we apply the  technique of \cite{JK}, which
was developed in the works by Komech \emph{et al}.
\cite{IKV, KSK, KKS, KK}.
This technique is based on the studying the spectral properties of
the resolvent of the stationary problem corresponding to \eqref{CP}.
For details, see Appendix~B.
%%%%%%%%%%%%%%%%%%%%%%%%%%%%%%%%%%%%%%%%%%%
\begin{proof}
In Appendix~B below, we prove the bound~\eqref{decayA}.
Now we derive bound~\eqref{2.9} using  bound~\eqref{decayA}
and the following  well-known result, which is proved, e.g.,  in \cite[Lemma~15.1]{KKS}.
\begin{lemma}\label{l2.11}
Let $\psi^0_n\in L^2_\sigma$ with $\sigma>3/2$.
Then the solution  to the  problem~\eqref{3}
satisfies the following estimate
\begin{equation}\label{2.7}
\Vert W_n(t)\psi^0_n\Vert_{-\sigma}\le
C\langle t\rangle^{-3/2}\Vert\psi^0_n\Vert_{\sigma},\quad t\ge0.
\end{equation}
\end{lemma}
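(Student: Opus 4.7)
The plan is to reduce the Dirac bound to a Klein--Gordon local energy decay via formula \eqref{25}, and then to prove the Klein--Gordon bound by Fourier analysis and stationary phase.

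\textbf{Reduction to Klein--Gordon.} Formula \eqref{25} gives
\begin{equation*}
W_n(t)\psi_n^0 = \partial_t g_{t,n}*\psi_n^0 - \alpha\cdot\nabla g_{t,n}*\psi_n^0 - i\beta m_n\, g_{t,n}*\psi_n^0.
\end{equation*}
Since $\alpha_j$ and $\beta$ are bounded constant matrices, by the triangle inequality it suffices to prove
\begin{equation*}
\Vert T_G f\Vert_{L^2_{-\sigma}} \le C\langle t\rangle^{-3/2}\Vert f\Vert_{L^2_\sigma},\qquad \sigma>3/2,
\end{equation*}
for each scalar Fourier multiplier $T_G$ with symbol $G(k,t)$ among $\cos(\omega_n(k)t)$, $k_j\sin(\omega_n(k)t)/\omega_n(k)$ and $\sin(\omega_n(k)t)/\omega_n(k)$, where $\omega_n(k)=\sqrt{|k|^2+m_n^2}\ge m_n>0$. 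Each such symbol is a linear combination of $a(k)e^{\pm i\omega_n(k)t}$ with a smooth, at most polynomially growing amplitude $a$.

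\textbf{Oscillatory-integral estimate.} By Parseval and Schur's test, the weighted $L^2$ bound reduces to control of the convolution kernel $K_t^\pm(z):=F^{-1}_{k\to z}[a(k)e^{\pm i\omega_n(k)t}]$ after sandwiching with $\langle x\rangle^{-\sigma}$ weights. I would split the analysis into two regions. On the stationary region $|z|\le 2t$ the phase $\pm\omega_n(k)t-k\cdot z$ has critical points determined by $\pm\nabla\omega_n(k)=z/t$, and the Hessian $\mathrm{Hess}\,\omega_n=\omega_n^{-1}I-\omega_n^{-3}kk^{\mathrm{T}}$ has determinant $m_n^2/\omega_n^5>0$; the classical three-dimensional stationary-phase lemma then yields $|K_t^\pm(z)|\le C\langle t\rangle^{-3/2}$ together with an integrable $z$-factor. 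On the non-stationary region $|z|\ge 2t$, the bound $|\pm\nabla\omega_n(k)t-z|\gtrsim|z|$ allows repeated integration by parts against $\nabla_k$ to produce $|K_t^\pm(z)|\le C_N\langle z\rangle^{-N}$ for any $N$. Combining the two, the Schur integral $\sup_x\int\langle x\rangle^{-\sigma}|K_t^\pm(x-y)|\langle y\rangle^{-\sigma}\,dy$ and its symmetric counterpart are finite (using $\sigma>3/2$) and bounded by $C\langle t\rangle^{-3/2}$.

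\textbf{Main obstacle.} The delicate point is controlling the high-frequency tail: the symbol $a(k)=k_j$ grows linearly, and $\det\mathrm{Hess}\,\omega_n=m_n^2/\omega_n^5$ decays as $|k|\to\infty$, so the stationary-phase contribution is not uniform in $k$. I would handle this either by a dyadic Littlewood--Paley decomposition of $a$ combined with uniform stationary-phase bounds on each shell, or by additional integration by parts exploiting the growth of $|\nabla\omega_n|$ at high frequency. The strict positivity $m_n>0$ is essential throughout: it ensures $\omega_n$ is smooth at $k=0$ (avoiding the zero-energy singularity of the massless wave equation) and keeps $\det\mathrm{Hess}\,\omega_n$ strictly positive, which is precisely what yields the sharp $t^{-3/2}$ dispersive decay in three space dimensions.
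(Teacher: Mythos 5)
The paper does not prove this lemma itself: it is quoted as a known result and attributed to \cite[Lemma 15.1]{KKS}, where (consistently with the machinery the present paper uses in Appendix~B) it is obtained from the spectral/resolvent representation of $W_n(t)$, i.e.\ from the limiting values of $(-\Delta+m_n^2-\omega^2)^{-1}$ on the imaginary axis and their Puiseux expansions at the thresholds $\omega=\pm m_n$; the $t^{-3/2}$ rate comes from the square-root singularity there. Your stationary-phase route is a legitimate alternative in principle, but as written it has a genuine gap exactly at the point you flag as the ``main obstacle,'' and the remedies you sketch do not close it.

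Concretely: for the symbols $\cos(\omega_n(k)t)$ and $k_j\sin(\omega_n(k)t)/\omega_n(k)$ the amplitude does not decay in $k$, so $K_t^\pm(z)=F^{-1}[a(k)e^{\pm i\omega_n(k)t}]$ is not a function but a distribution with singular support on the light cone $|z|=t$ (since $\omega_n(k)-|k|\to0$ at high frequency, the high-frequency part behaves like the wave propagator, i.e.\ like derivatives of surface deltas on $|z|=t$). Hence the claimed pointwise bound $|K_t^\pm(z)|\le C\langle t\rangle^{-3/2}$ on $|z|\le 2t$ is false and the Schur test cannot be applied to the full kernel. Your first remedy fails because the shell-wise stationary-phase constants are not uniform: on $|k|\sim 2^j$ one has $\det\mathrm{Hess}\,\omega_n=m_n^2/\omega_n^5\sim 2^{-5j}$, so the dyadic bounds degrade by a power of $2^j$ and the sum over shells diverges. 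Your second remedy rests on a false premise: $|\nabla\omega_n(k)|=|k|/\omega_n(k)<1$ is bounded and does not grow at high frequency. A workable repair within your framework is: cut off frequencies $|k|\le R$, where the kernel is smooth and the classical stationary-phase bound and Schur test (using $\sigma>3/2$) apply; for the high-frequency part, split physical space into $\{|x|\le t/4,\ |y|\le t/4\}$, where $|t\nabla\omega_n(k)-(x-y)|\ge ct$ for $|k|\ge R$ and nonstationary phase gives rapid decay, and the complementary region, where one of the weights contributes $\langle t\rangle^{-\sigma}\le C\langle t\rangle^{-3/2}$ and the $L^2$-unitarity \eqref{2.8} of $W_n(t)$ controls the rest. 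Without some such use of the weights against the light-cone singularity, the argument does not yield \eqref{2.7}.
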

%%%%%%%%%%%%%%%%%%%%%%%

This bound implies the ``good'' estimates
 for $F(t)$ and $H(t)$ defined in \eqref{41} and \eqref{2.14}.
 %%%%%%%%%%%%%%%%%%%%%%%%%%%%%%%%%%%%%%%%%%%%%%%%%
\begin{corollary}\label{rem2.5}
Let $\nabla\rho\in L^2_\sigma$ with $\sigma>3/2$. Then,
$|H_{kl}(t)|\le C\langle t\rangle^{-3/2}\Vert\nabla\rho\Vert^2_{\sigma}$,
by Lemma~\ref{l2.11}, and
\begin{equation}\label{2.21}
|F_k(t)|\le \sum\limits_{n=1}^N\Vert\partial_k \rho_n\Vert_\sigma\Vert{W}_n(t)\psi_n^0\Vert_{-\sigma}
\le C\langle t\rangle^{-3/2}\Vert\psi^0\Vert_{\sigma},
\quad \mbox{if }\,\, \psi^0=(\psi^0_1,\ldots,\psi_N^0)\in L^2_\sigma.
\end{equation}
If $\psi^0\in L^2$, then bound \eqref{2.8} implies that
$|F_k(t)|\le \sum\limits_{n=1}^N\Vert\partial_k \rho_n\Vert\,\Vert{W}_n(t)\psi_n^0\Vert\le C\Vert\psi^0\Vert$.
\end{corollary}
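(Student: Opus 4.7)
The plan is to derive all three bounds in Corollary~\ref{rem2.5} as direct consequences of Lemma~\ref{l2.11} combined with the weighted $L^2$ duality pairing. The central observation is that for any real weight $\sigma$ and any $f,g$ with $f\in L^2_\sigma$ and $g\in L^2_{-\sigma}$,
\[
|\langle f,g\rangle|=\Bigl|\int (\langle x\rangle^{\sigma}f(x))\cdot(\langle x\rangle^{-\sigma}g(x))\,dx\Bigr|\le \|f\|_\sigma\,\|g\|_{-\sigma},
\]
by the ordinary Cauchy--Schwarz inequality. This is the only nontrivial tool beyond what is already stated.

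First I would bound $H_{kl}(t)$. Using the duality pairing above together with the definition \eqref{2.14},
\[
|H_{kl}(t)|\le \sum_{n=1}^N \|\partial_k\rho_n\|_\sigma\,\|W_n(t)(i\partial_l\rho_n)\|_{-\sigma}.
\]
Since $\nabla\rho\in L^2_\sigma$ (which follows from condition~\textbf{A1} with a compact support or is hypothesized directly in the statement), the function $i\partial_l\rho_n$ lies in $L^2_\sigma$, so Lemma~\ref{l2.11} applies and gives $\|W_n(t)(i\partial_l\rho_n)\|_{-\sigma}\le C\langle t\rangle^{-3/2}\|\partial_l\rho_n\|_\sigma$. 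Summing in $n$ yields the claim $|H_{kl}(t)|\le C\langle t\rangle^{-3/2}\|\nabla\rho\|_\sigma^2$.

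Next I would prove \eqref{2.21} by the same mechanism. Writing
\[
|F_k(t)|\le \sum_{n=1}^N |\langle \partial_k\rho_n,W_n(t)\psi_n^0\rangle|\le \sum_{n=1}^N \|\partial_k\rho_n\|_\sigma\,\|W_n(t)\psi_n^0\|_{-\sigma},
\]
and then applying Lemma~\ref{l2.11} to each term (using $\psi_n^0\in L^2_\sigma$ with $\sigma>3/2$) produces $\|W_n(t)\psi_n^0\|_{-\sigma}\le C\langle t\rangle^{-3/2}\|\psi_n^0\|_\sigma$, whence $|F_k(t)|\le C\langle t\rangle^{-3/2}\|\psi^0\|_\sigma$. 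Finally, for the last assertion (with $\psi^0\in L^2$ only), I would simply use the ordinary Cauchy--Schwarz inequality in $L^2$ together with the $L^2$-isometry \eqref{2.8}, namely $\|W_n(t)\psi_n^0\|=\|\psi_n^0\|$, giving the uniform-in-$t$ bound $|F_k(t)|\le \sum_n \|\partial_k\rho_n\|\,\|\psi_n^0\|\le C\|\psi^0\|$.

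There is no genuine obstacle: once Lemma~\ref{l2.11} has been established (in Appendix~B, from the decay \eqref{decayA} together with the asymptotic analysis of the resolvent), the corollary is essentially the bookkeeping step that converts the dispersive decay of the free Dirac propagator into decay estimates for the scalar quantities $F_k(t)$ and $H_{kl}(t)$ that drive the convolution Volterra equation \eqref{cs0}. The only small point to be careful about is the distinction between the two hypotheses: $\nabla\rho\in L^2_\sigma$ in the $H_{kl}$ estimate has to be supplied either from the compact support hypothesis \textbf{A1} (trivially) or from \textbf{A1'} (by definition), and the two regimes for $F_k(t)$ (weighted vs.\ unweighted) correspond to two different applications of duality.
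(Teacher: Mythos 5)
Your proposal is correct and follows exactly the route the paper intends: the weighted Cauchy--Schwarz pairing $|\langle f,g\rangle|\le\Vert f\Vert_\sigma\Vert g\Vert_{-\sigma}$ applied to the definitions \eqref{41} and \eqref{2.14}, followed by Lemma~\ref{l2.11} for the decaying bounds and the isometry \eqref{2.8} for the unweighted one. Nothing is missing.
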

%%%%%%%%%%%%%%%%%%%%%%%%%%%%%%%%%%%%%%%%%%%%

 If $F(t)\not\equiv0$, then   the representation~\eqref{3.4}, bounds~\eqref{2.21} and \eqref{decayA}
imply the bound~\eqref{2.9} for $q(t)$ and $\dot{q}(t)$.
For the field components $\psi_n(\cdot\,,t)$, the bound~\eqref{2.9}
follows from  representation~\eqref{2.10},  condition~\textbf{A1'}, Lemma~\ref{l2.11}
and bound~\eqref{2.9} for $q(t)$:
\begin{align*}
\Vert\psi_n(\cdot,t)\Vert_{-\sigma}&\le\Vert W_n(t)\psi_n^0\Vert_{-\sigma}
+\int\limits_0^t\Vert W_n(t-s)\nabla\rho_n\Vert_{-\sigma}|q(s)|ds\\
&\le C\langle t\rangle^{-3/2}\Vert \psi_n^0\Vert_{\sigma}
+C_1\int\limits_0^t\langle t-s\rangle^{-3/2}\langle s\rangle^{-3/2}\,ds\,
\Vert \nabla\rho_n\Vert_{\sigma}
\Vert Y_0\Vert_{\mathcal{E}_{\sigma}}
\le C_2\langle t\rangle^{-3/2}\Vert Y_0\Vert_{\mathcal{E}_{\sigma}}.
\end{align*}
Theorem~\ref{tA1} is proved.
\end{proof}
%%%%%%%%%%%%%%%%%%%%%%%%%%%%

The bound~\eqref{2.9} is useful in the scattering problems for our model.
In particular, using \eqref{2.9}, we prove the following result.
%%--------------------------------------------
\begin{theorem}\label{tB}
Let the conditions of Theorem~\ref{tA1} hold.
Denote by  $U_0(t)$  the operator $U(t)$ in the case when $\rho\equiv0$.
Then there exist  bounded operators
$\Omega_\pm:{\cal E}_\sigma\to{\cal E}_0$ such that
\begin{equation}\label{1.9}
U(t) Y_0=U_0(t) \Omega_\pm Y_0 +r_\pm(t),\quad t\to\pm\infty,
\end{equation}
where $\Omega_\pm=\lim\limits_{t\to\pm\infty} U_0(-t)U(t)$
and $\Vert r_\pm(t)\Vert_{\mathcal{E}_0}\le C\langle t\rangle^{-1/2}\Vert Y_0\Vert_{\mathcal{E}_\sigma}$.
\end{theorem}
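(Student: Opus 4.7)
\medskip

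\noindent\textbf{Proof proposal for Theorem~\ref{tB}.}

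The plan is to construct the wave operators $\Omega_\pm$ by a Cook-type argument, using the long-time decay bound~\eqref{2.9} as the key integrability ingredient. The starting point is a Duhamel formula in which the perturbation $B$ is sandwiched between a free and a full evolution on the convenient sides. Rewriting the system \eqref{CP} as $\dot Y = \mathcal{A}_0 Y + BY$ with $\mathcal{A}_0$ the generator of $U_0(t)$ and $B$ as in \eqref{2.6}, variation of parameters yields
\begin{equation*}
U(t) Y_0 = U_0(t) Y_0 + \int_0^t U_0(t-s)\,B\,U(s) Y_0 \, ds,\quad t\in\mathbb{R}.
\end{equation*}
Because $U_0(t)$ is uniformly bounded on $\mathcal{E}_0$ (the field part $W_n(t)$ is unitary on $L^2$ by~\eqref{2.8}, and the oscillator part $\cos(\sqrt{V}t)q^0+V^{-1/2}\sin(\sqrt{V}t)p^0$ is uniformly bounded thanks to condition~\textbf{A2}), we may apply $U_0(-t)$ to both sides and obtain
\begin{equation*}
U_0(-t)\,U(t)\, Y_0 = Y_0 + \int_0^t U_0(-s)\,B\,U(s)Y_0\,ds.
\end{equation*}

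Next I would show that the integrand is in $L^1(0,+\infty;\mathcal{E}_0)$. Inspection of \eqref{2.6} gives the bound $\Vert BY\Vert_{\mathcal{E}_\sigma}\le C\bigl(|q|+\Vert\psi\Vert_{\mathcal{H}_{-\sigma}}\bigr)\Vert\nabla\rho\Vert_{\mathcal{H}_\sigma}\le C\Vert Y\Vert_{\mathcal{E}_{-\sigma}}$, where the embedding of $\nabla\rho$ into $\mathcal{H}_\sigma$ holds under \textbf{A1} (compact support) and is exactly the content of \textbf{A1'} in the weaker case. Combining with Theorem~\ref{tA1},
\begin{equation*}
\Vert B\,U(s) Y_0\Vert_{\mathcal{E}_\sigma} \le C\,\Vert U(s) Y_0\Vert_{\mathcal{E}_{-\sigma}}\le C\langle s\rangle^{-3/2}\Vert Y_0\Vert_{\mathcal{E}_\sigma}.
\end{equation*}
Since $\sigma>0$ gives $\mathcal{E}_\sigma\hookrightarrow\mathcal{E}_0$, and $U_0(-s)$ is uniformly bounded on $\mathcal{E}_0$, we get $\Vert U_0(-s)BU(s)Y_0\Vert_{\mathcal{E}_0}\le C\langle s\rangle^{-3/2}\Vert Y_0\Vert_{\mathcal{E}_\sigma}$, which is integrable on $[0,+\infty)$.

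Therefore the limit
\begin{equation*}
\Omega_+ Y_0:=Y_0+\int_0^{+\infty} U_0(-s)\,B\,U(s)\,Y_0\,ds
\end{equation*}
exists absolutely in $\mathcal{E}_0$ and defines a bounded operator $\mathcal{E}_\sigma\to\mathcal{E}_0$; the identity $\Omega_+=\lim_{t\to+\infty}U_0(-t)U(t)$ follows. The remainder estimate is then immediate: using uniform boundedness of $U_0(t)$ on $\mathcal{E}_0$ once more,
\begin{equation*}
\Vert U(t)Y_0-U_0(t)\Omega_+ Y_0\Vert_{\mathcal{E}_0}\le C\Vert U_0(-t)U(t)Y_0-\Omega_+ Y_0\Vert_{\mathcal{E}_0}\le C\!\!\int_t^{+\infty}\!\!\langle s\rangle^{-3/2}ds\,\Vert Y_0\Vert_{\mathcal{E}_\sigma}\le C\langle t\rangle^{-1/2}\Vert Y_0\Vert_{\mathcal{E}_\sigma}.
\end{equation*}
The construction of $\Omega_-$ is identical with $t\to -\infty$, since bound~\eqref{2.9} is two-sided in $t$. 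The only genuine obstacle is the decay~\eqref{2.9} itself (already established as Theorem~\ref{tA1}); once that is in hand, everything else is a routine Cook estimate, with the only care being the mapping property $B:\mathcal{E}_{-\sigma}\to\mathcal{E}_\sigma$ under \textbf{A1} or \textbf{A1'}.
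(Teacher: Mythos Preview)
Your proposal is correct and follows essentially the same Cook-type argument as the paper: the same Duhamel formula $U(t)Y_0=U_0(t)Y_0+\int_0^t U_0(t-s)B\,U(s)Y_0\,ds$, the same mapping bound $\Vert BY\Vert_{\mathcal{E}_\sigma}\le C\Vert\nabla\rho\Vert_\sigma\Vert Y\Vert_{\mathcal{E}_{-\sigma}}$, the same use of the uniform boundedness of $U_0(t)$ on $\mathcal{E}_0$ and of the decay~\eqref{2.9}, and the same tail-integral estimate yielding $\langle t\rangle^{-1/2}$. The paper writes $r_\pm(t)=\int_t^{\pm\infty}U_0(t-s)B\,U(s)Y_0\,ds$ directly rather than first applying $U_0(-t)$, but this is the same computation.
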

%%%-------------------------------
\begin{proof}
To prove  \eqref{1.9},
we apply the classical Cook method (see, e.g., \cite[Sec.~XI.3]{RS3}).
Namely, let $U(t)Y_0=(\psi(\cdot,t),q(t),\dot{q}(t))$ be the solution to problem \eqref{1}--\eqref{2}
with initial data $Y_0=(\psi^0,q^0,p^0)$,
and let $U_0(t):Y_0\to Y(t)$ be the strongly continuous group of bounded linear operators
on $\mathcal{E}$ corresponding to the case $\rho\equiv0$.
Then, $U_0(t)Y_0=(\psi_0(\cdot,t),q_0(t),\dot{q}_0(t))$,
where $\psi_0(x,t)$ is defined in \eqref{3.5} and $q_0(t)$ in \eqref{3.5'}.
Hence, the bound~\eqref{2.8} implies that
\begin{equation}\label{3.6}
\sup_{t\in \mathbb{R}}
\Vert U_0(t)Y_0\Vert_{\mathcal{E}_0}\le C\Vert Y_0\Vert_{\mathcal{E}_0}.
\end{equation}
Furthermore,
the integral Duhamel representation gives
$$
U(t)Y_0=U_0(t)Y_0+\int\limits_0^t U_0(t-s) B( U(s)Y_0)\,ds,\quad t\in\mathbb{R},
$$
where the operator $B$ is defined in \eqref{2.6}.
Note that  $\Vert BY\Vert_{\mathcal{E}_\sigma}\le C\Vert\nabla\rho\Vert_\sigma\Vert Y\Vert_{\mathcal{E}_{-\sigma}}$
by the Cauchy--Schwartz inequality.
Hence, representation~\eqref{1.9} holds with
 $$
 r_\pm(t):= \int\limits_t^{\pm\infty} U_0(t-s) B (U(s)Y_0)\,ds.
 $$
 Indeed,
by formula~\eqref{2.6} and bounds~\eqref{3.6} and \eqref{2.9}, we obtain
 \begin{align*}
 \Vert r_\pm(t)\Vert_{\mathcal{E}_0}&
\le C\int\limits_t^{\pm\infty}\Vert B(U(s)Y_0)\Vert_{\mathcal{E}_0}\,ds
\le C \Vert\nabla\rho\Vert_\sigma
\int\limits_t^{\pm\infty}\Vert U(s)Y_0\Vert_{\mathcal{E}_{-\sigma}}\,ds
\le C_1\int\limits_t^{\pm\infty}
 \langle s\rangle^{-3/2}ds\Vert Y_0\Vert_{\mathcal{E}_\sigma} \\
 &
 \le C_2\langle t\rangle^{-1/2}\Vert Y_0\Vert_{\mathcal{E}_\sigma}.
 \end{align*}
By a similar way, one can check that
 $\Omega_\pm=\lim\limits_{t\to\pm\infty}U_0(-t)U(t)\in{\cal L}({\cal E}_\sigma,{\cal E}_0)$,
 because, formally,
$$
\Omega_\pm Y=Y+\int\limits_0^{\pm\infty} \frac{d}{ds}\Big(U_0(-s)U(s)Y\Big)ds
=Y+\int\limits_0^{\pm\infty} U_0(-s) B( U(s)Y)\,ds,\quad Y\in\mathcal{E}_\sigma.
$$
\end{proof}
 %%%%%%%%%%%%%%%%%%%%%%%%%%%%%%%%%%%%%%%%%%%%%%%%%%%%%%%%%%%%%%%%%%%%%%%%%%%%%

Now  we will prove that
\begin{equation}\label{1.7}
U(t)Y_0=\mathcal{P}(W_1(t)\psi^0_1,\ldots,W_N(t)\psi^0_N)+o(1),\quad t\to+\infty,
\end{equation}
where $W_n(t)$, $n\in \overline{N}$, is the solving operator to the Cauchy problem for the free Dirac equation~\eqref{3},
$\mathcal{P}$ is a linear  operator defined in \eqref{Z} below.

 Define a linear operator
$\mathcal{Z}_n:L^2\equiv L^2(\mathbb{R}^3;\mathbb{C}^{4N})\to
L^2_{-\sigma}(\mathbb{R}^3;\mathbb{C}^4)$, $\sigma>3/2$, as
 \begin{equation}\label{Phi}
 \mathcal{Z}_n(\psi):=i\sum\limits_{r=1}^N\sum\limits_{k=1}^3\int\limits_0^{+\infty}
{W}_n(s)\,\partial_k \rho_n(x)\,
 \left\langle \psi_r,{W}_r(s)\,\Xi^0_{rk}\right\rangle\,ds,
\quad n=1,\ldots,N,
\end{equation}
where $\psi=(\psi_1,\ldots,\psi_N)\in L^2$, and $\Xi^0_{rk}$ is defined in \eqref{hn}.
  Hence, by the bounds~\eqref{2.7}, \eqref{57'} and \eqref{2.8}, we have
 \begin{align*}
 \Vert\mathcal{Z}_n(\psi)\Vert_{-\sigma}
& \le\sum\limits_{r=1}^N\sum\limits_{k=1}^3\int\limits_0^{+\infty}
\Vert {W}_n(s)\,\nabla \rho_n\Vert_{-\sigma}
\Vert\psi_r\Vert \,\Vert {W}_r(s)\,\Xi^0_{rk}\Vert\,ds\\
&\le C \sum\limits_{r=1}^N \sum\limits_{k=1}^3\int\limits_0^{+\infty}\langle s\rangle^{-3/2}ds\,
\Vert\nabla\rho_n\Vert_{\sigma}\Vert\psi_r\Vert\,\Vert \Xi^0_{rk}\Vert
 \le C\Vert\psi\Vert.
 \end{align*}
Finally, we introduce  a linear bounded operator
$\mathcal{P}:L^2\to{\cal E}_{-\sigma}$, $\sigma>3/2$, by the rule
\begin{equation}\label{Z}
\mathcal{P}:\psi\to \Big(\psi_1+\mathcal{Z}_1(\psi),\ldots,
\psi_N+\mathcal{Z}_N(\psi),
\sum\limits_{n=1}^N\langle\psi_n,\Xi^0_n\rangle,
\sum\limits_{n=1}^N\langle\psi_n,\Xi^1_n\rangle\Big),
\quad \psi=(\psi_1,\ldots,\psi_N).
\end{equation}
In particular,
$\Vert\mathcal{P}({W}_1(t)\psi^0_1,\ldots,{W}_N(t)\psi^0_N)
\Vert_{-\sigma}\le C\Vert\psi^0\Vert$.
%%%------------------------------
\begin{theorem}\label{tC}
 Let conditions of Theorem~\ref{tA1} hold.
 Then, for $Y_0=(\psi_1^0,\ldots,\psi_N^0, q^0,p^0)\in \mathcal{E}_0$,
 $$
 U(t)Y_0=
 \mathcal{P}\left({W}_1(t)\psi^0_1,\ldots,{W}_N(t)\psi^0_N\right)+r(t),
\quad \mbox{where }\,\,\Vert r(t)\Vert_{\mathcal{E}_{-\sigma}}\le C
 \langle t\rangle^{-1/2}\Vert Y_0\Vert_{\mathcal{E}_0}.
 $$
 If $Y_0\in \mathcal{E}_\sigma$, then
 \begin{equation}\label{8.1}
q^{(j)}_k(t):=\frac{d^j}{dt^j}q_k(t)=\sum\limits_{n=1}^N
\langle {W}_n(t)\psi_n^0,\Xi^j_{nk}\rangle+r_j(t),\quad j=0,1,\quad k=1,2,3,\quad t>0,
\end{equation}
where
$|r_j(t)|\le C\langle t\rangle^{-3/2}\Vert Y_0\Vert_{\mathcal{E}_\sigma}$.
\end{theorem}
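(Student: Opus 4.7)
My plan is to prove Theorem~\ref{tC} by translating the argument of Theorem~\ref{l7.1} from the stochastic to the deterministic setting, replacing probabilistic second moments by pointwise or $L^2_{-\sigma}$ bounds obtained from Lemma~\ref{l2.11} and Theorem~\ref{h-eq}. The starting point is the Duhamel representation~\eqref{2.10} for $\psi_n$ combined with~\eqref{3.4} for $(q,\dot q)$; substituting one into the other gives
\[
q^{(j)}_k(t) = N^{(j+1)}(t)q^0+N^{(j)}(t)p^0 + \sum_{n,l}\int_0^t N^{(j)}_{kl}(s)\langle W_n(t-s)\psi^0_n,\partial_l\rho_n\rangle\,ds.
\]

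The first step is to prove the expansion for $q^{(j)}_k(t)$. Bound~\eqref{decayA} controls the boundary term $N^{(j+1)}(t)q^0+N^{(j)}(t)p^0$ by $C\langle t\rangle^{-3/2}(|q^0|+|p^0|)$. Using the adjoint identity $W'_n(s)=W_n(-s)$, the integrand becomes $N^{(j)}_{kl}(s)\langle W_n(t)\psi^0_n,W_n(s)\partial_l\rho_n\rangle$, so extending the $s$-integral to $[0,\infty)$ produces exactly $\sum_n\langle W_n(t)\psi^0_n,\Xi^j_{nk}\rangle$ by the definition~\eqref{xi}. The tail $\int_t^\infty$ is then estimated in two ways depending on the assumed regularity of $Y_0$. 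For $Y_0\in\mathcal{E}_\sigma$, Lemma~\ref{l2.11} gives $|\langle W_n(t-s)\psi^0_n,\partial_l\rho_n\rangle|\le C\langle t-s\rangle^{-3/2}\Vert Y_0\Vert_{\mathcal{E}_\sigma}$, and the convolution estimate $\int_t^\infty\langle s\rangle^{-3/2}\langle s-t\rangle^{-3/2}\,ds\le C\langle t\rangle^{-3/2}$, obtained by splitting at $s=3t/2$, yields $|r_j(t)|\le C\langle t\rangle^{-3/2}\Vert Y_0\Vert_{\mathcal{E}_\sigma}$; this is the second assertion of the theorem. For $Y_0\in\mathcal{E}_0$ only the crude bound $|\langle W_n(t-s)\psi^0_n,\partial_l\rho_n\rangle|\le C\Vert Y_0\Vert_{\mathcal{E}_0}$ from~\eqref{2.8} is available, and the tail then decays merely as $\int_t^\infty\langle s\rangle^{-3/2}\,ds\le C\langle t\rangle^{-1/2}$, producing a weaker version $|\tilde r_j(t)|\le C\langle t\rangle^{-1/2}\Vert Y_0\Vert_{\mathcal{E}_0}$ of~\eqref{8.1} that will suffice for the $q,p$-components of $r(t)$.

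The second step handles the field components. Substituting the weak expansion $q_k(s)=\sum_r\langle W_r(s)\psi^0_r,\Xi^0_{rk}\rangle+\tilde r_0(s)$ into~\eqref{2.10}, changing variables $\tau=t-s$, and reapplying the adjoint identity turns the nonlinear part of the Duhamel integral into
\[
i\sum_{r,k}\int_0^t W_n(\tau)\partial_k\rho_n\langle W_r(t)\psi^0_r,W_r(\tau)\Xi^0_{rk}\rangle\,d\tau,
\]
whose completion to $[0,\infty)$ is exactly the $n$th field-component of $\mathcal{Z}_n(\mathbf{W}(t)\psi^0)$ by~\eqref{Phi}; together with the leading term $W_n(t)\psi^0_n$ and the $\Xi$-slots already identified in Step~1, this reproduces $\mathcal{P}(\mathbf{W}(t)\psi^0)$ defined in~\eqref{Z}. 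The two resulting remainders are estimated in $L^2_{-\sigma}$: the $\int_t^\infty$-tail is bounded via $\Vert W_n(\tau)\partial_k\rho_n\Vert_{-\sigma}\le C\langle\tau\rangle^{-3/2}\Vert\nabla\rho_n\Vert_\sigma$ (Lemma~\ref{l2.11}) and the boundedness of $\Vert W_r(\tau)\Xi^0_{rk}\Vert$ from~\eqref{2.8} and~\eqref{57'}, giving $O(\langle t\rangle^{-1/2})\Vert Y_0\Vert_{\mathcal{E}_0}$; the $\tilde r_0$-error reduces to the convolution $\int_0^t\langle\tau\rangle^{-3/2}\langle t-\tau\rangle^{-1/2}\,d\tau$, which by splitting $[0,t]$ at $t/2$ is also $O(\langle t\rangle^{-1/2})$.

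The main technical difficulty is tracking the decay rates simultaneously in the weighted space $L^2_{-\sigma}$ and in the unweighted norms when $Y_0$ is only in $\mathcal{E}_0$: the sharpness of the final $\langle t\rangle^{-1/2}$ bound hinges on combining the fast $\langle\tau\rangle^{-3/2}$ decay of $W_n(\tau)\partial_k\rho_n$ in $L^2_{-\sigma}$ with the slow $\langle s\rangle^{-1/2}$ decay of $\tilde r_0$, and on interchanging time integration and adjoint manipulations cleanly. Beyond this bookkeeping, the argument is a direct deterministic rewriting of the manipulations carried out in Section~\ref{sec3}.
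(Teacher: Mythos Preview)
Your proposal is correct and follows essentially the same approach as the paper: the paper likewise starts from~\eqref{3.4} and~\eqref{2.10}, bounds the boundary term by~\eqref{decayA}, extends the time integral to $[0,\infty)$ using Corollary~\ref{rem2.5} (your Lemma~\ref{l2.11} argument) in the $\mathcal{E}_\sigma$ case and mere boundedness of $F$ in the $\mathcal{E}_0$ case, and then treats the field components via the same substitution, adjoint identity $(W_n(t))'=W_n(-t)$, and tail/remainder convolution bounds in $L^2_{-\sigma}$. Your exposition is slightly more explicit about the convolution-splitting estimates, but the structure and the key inputs are identical.
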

%%%%%%%%%%%%%%%%%%%%%%%%%%%%%%%%%%%%%%%%%%%%%%%%%%%%%%%%5
\begin{proof}
 At first, we prove the representation~\eqref{8.1},
where
$|r_j(t)|\le C\langle t\rangle^{-\kappa/2}$ with $\kappa=1$
if $Y_0\in{\cal E}_0$, and $\kappa=3$ if $Y_0\in{\cal E}_\sigma$.
Indeed, formula~\eqref{3.4} and bound~\eqref{decayA} imply
\begin{equation}\label{8.4}
  \Big|q^{(j)}(t)-\int\limits_0^t N^{(j)}(s)F(t-s)\,ds\Big|
   \le C\langle t\rangle^{-3/2}(|q^0|+|p^0|).
\end{equation}
Using  bound~\eqref{2.21} if $Y_0\in{\cal E}_\sigma$,
  and the boundedness of the function $F(t)$ if $Y_0\in{\cal E}_0$, we have
 \begin{equation}\label{8.7}
\Big|\int\limits_t^{+\infty}N^{(j)}(s)F(t-s) \,ds\Big|
\le C\left\{
\begin{aligned}
 &\langle t\rangle^{-3/2}\Vert \psi^0\Vert_\sigma,
 &\mbox{if }\,\,\psi^0\in L^2_\sigma,\\
 &\langle t\rangle^{-1/2}\Vert \psi^0\Vert_0,
 &\mbox{if }\,\,\psi^0\in L^2_0,
\end{aligned}
\right.\quad t>0.
\end{equation}
Therefore, the bounds \eqref{8.4} and \eqref{8.7} imply
\begin{equation}\label{8.3}
q^{(j)}_k(t)=\sum\limits_{l=1}^3
\int\limits_0^{+\infty} N^{(j)}_{kl}(s)F_l(t-s)\,ds+r_j(t),\quad t>0,
\quad j=0,1,\quad k=1,2,3,
\end{equation}
where $|r_j(t)|\le C\langle t\rangle^{-1/2}\Vert Y_0\Vert_{\mathcal{E}_0}$ if $Y_0\in \mathcal{E}_0$,
and $|r_j(t)|\le C\langle t\rangle^{-3/2}\Vert Y_0\Vert_{\mathcal{E}_\sigma}$ if $Y_0\in \mathcal{E}_\sigma$.
In turn,
$$
F_l(t-s)=\sum\limits_{n=1}^N\left\langle {W}_n(t-s)\psi_n^0, \partial_l \rho_n\right\rangle=
\sum\limits_{n=1}^N\left\langle {W}_n(t)\psi_n^0, {W}_n(s)\,\partial_l \rho_n\right\rangle,
\quad l=1,2,3,
$$
since $({W}_n(t))'=W_n(-t)$.
Thus, representation~\eqref{8.1} follows from formulas~\eqref{xi} and \eqref{8.3}.
%%-----------------------

Furthermore, \eqref{2.10} and \eqref{8.1} imply
\begin{equation*}
\psi_n(x,t)={W}_n(t)\psi_n^0
+i\sum\limits_{r=1}^N\sum\limits_{k=1}^3\int\limits_0^{t}
{W}_n(s)\partial_k \rho_n(x)\left(\left\langle {W}_r(t-s)\psi_r^0,\Xi^0_{rk}\right\rangle+r_0(t-s)\right)ds,
\end{equation*}
where   $\Xi^0_{rk}$ is defined in \eqref{xi}.
Let $\psi^0\in L^2$. Then, \eqref{2.7} and \eqref{8.3} imply
\begin{equation*}%%\label{7.8}
\Big\Vert\int\limits_0^{t} {W}_n(s)\partial_k \rho_n \,r_0(t-s)\,ds\Big
\Vert_{-\sigma}
\le C\Vert\nabla\rho_n\Vert_{\sigma}\int\limits_0^{t}\langle s\rangle^{-3/2}
|r_0(t-s)|\,ds\le C\langle t\rangle^{-1/2}\Vert Y_0\Vert_{\mathcal{E}_0}.
\end{equation*}
Furthermore, the bounds \eqref{2.7} and \eqref{2.8} and condition~\textbf{A1'} imply
\begin{align*}%\label{6.9}
 &\Big\Vert\int\limits_t^\infty {W}_n(s)\partial_k \rho_n
  \left\langle {W}_r(t-s)\psi_r^0,\Xi^0_{rk}\right\rangle \,ds\Big\Vert_{-\sigma}
  \le \int\limits_t^\infty \left\Vert {W}_n(s)\partial_k \rho_n\right\Vert_{-\sigma}
 \Vert {W}_r(t-s)\psi_r^0\Vert_0 \,\Vert\Xi^0_{rk}\Vert_0 \,ds
  \notag\\
  &\le C\int\limits_t^\infty \langle s\rangle^{-3/2}\,ds\,
  \Vert\nabla\rho_n\Vert_{\sigma}
 \Vert\psi_r^0\Vert_0 \,\Vert\Xi^0_{rk}\Vert_0
 \le  C_1\langle t\rangle^{-1/2}\Vert\psi_r^0\Vert_0.
  \end{align*}
Finally, since
$\left\langle {W}_r(t-s)\psi_r^0,\Xi^0_{rk}\right\rangle=
\left\langle {W}_r(t)\psi_r^0,({W}_r(-s))'\,\Xi^0_{rk}\right\rangle
=\left\langle {W}_r(t)\psi_r^0,{W}_r(s)\,\Xi^0_{rk}\right\rangle$, then
$$
i\sum\limits_{r=1}^N\sum\limits_{k=1}^3
\int\limits_0^{+\infty}
{W}_n(s)\partial_k \rho_n(x)\,
\left\langle {W}_r(t-s)\psi_r^0,\Xi_{rk}^0\right\rangle\,ds
=\mathcal{Z}_n\left({W}_1(t)\psi_1^0,\ldots,{W}_N(t)\psi_N^0\right),
$$
where the operator $\mathcal{Z}_n$ is defined in \eqref{Phi}.
Hence,
$$
\psi_n(x,t)={W}_n(t)\psi_n^0
+\mathcal{Z}_n\left({W}_1(t)\psi_1^0,\ldots,{W}_N(t)\psi_N^0\right)
 +r(x,t),\quad t>0,
$$
where $\Vert r(\cdot,t)\Vert_{-\sigma} \le C\langle t\rangle^{-1/2}\Vert Y_0\Vert_{\mathcal{E}_0}$.
\end{proof}

%%%%%%%%%%%%%%%%%%%%%%%     %%%%%%%%%%%%%%%%%
\section{Appendix B: Proof of Theorem~\ref{h-eq}}
%%%%%%%%%%%%%%%%%%%%%%%%%%%%%%%%%%%%%%%%%%%%%%%%%%%%%
In this section, we prove the bound~\eqref{decayA}
assuming that conditions \textbf{A1'}, \textbf{A2} and \textbf{A3} hold.

\begin{lemma}
The  a priori estimate holds,
\begin{equation}\label{3.0}
\Vert Y(t)\Vert_{\mathcal{E}_0}\le Ce^{\gamma |t|}\Vert Y_0\Vert_{\mathcal{E}_0},\quad t\in\mathbb{R},\quad
\gamma:=\tilde{\gamma}\Vert\nabla\rho\Vert,\quad \tilde{\gamma}:=\max\left\{1,\kappa^{-2}\right\}>0,
\end{equation}
where  $\kappa^2$ is a minimal eigenvalue of the matrix $V$.
\end{lemma}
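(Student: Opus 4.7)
The plan is to find a quantity $E(t)$ which is equivalent to $\|Y(t)\|_{\mathcal{E}_0}^2$, compute its derivative along the flow \eqref{1}--\eqref{2}, show that the free Dirac contributions cancel because of the self-adjointness of $-i\,\alpha\cdot\nabla+\beta m_n$, estimate the coupling contribution by Cauchy--Schwarz and Young, and conclude by Gronwall. Concretely, I set
$$
E(t):=\tfrac12\sum_{n=1}^N\|\psi_n(\cdot,t)\|^2+\tfrac12 \bigl(q(t)\cdot V q(t)+|p(t)|^2\bigr).
$$
Since $V$ is positive and symmetric, $q\cdot V q\ge\kappa^2|q|^2$, and therefore
$$
\tfrac{\min(1,\kappa^2)}{2}\bigl(\|\psi\|^2+|q|^2+|p|^2\bigr)\le E(t)\le C_V\bigl(\|\psi\|^2+|q|^2+|p|^2\bigr),
$$
so $E(t)\sim\|Y(t)\|_{\mathcal{E}_0}^2$ with constants depending on $\kappa$ and $V$ only.

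Next I would compute $\dot E(t)$ using \eqref{1}--\eqref{2}. For the field part, $\tfrac{d}{dt}\|\psi_n\|^2=2\Re(\dot\psi_n,\psi_n)_{\mathbb{C}}$ with the complex $L^2$ inner product; substituting $\dot\psi_n=-i(-i\,\alpha\cdot\nabla+\beta m_n)\psi_n+i\,q\cdot\nabla\rho_n$ and using self-adjointness of $-i\,\alpha\cdot\nabla+\beta m_n$, the free term drops out and there remains
$$
\tfrac{d}{dt}\|\psi_n\|^2=2\Re\bigl(i\,q\cdot(\nabla\rho_n,\psi_n)_{\mathbb{C}}\bigr)\le 2|q|\,\|\nabla\rho_n\|\,\|\psi_n\|.
$$
For the particle part, $\tfrac{d}{dt}(q\cdot V q+|p|^2)=2p\cdot\sum_{n}\langle\psi_n,\nabla\rho_n\rangle$ because $\dot q=p$ cancels the $2p\cdot V q$ contributions; this is bounded by $2|p|\sum_n\|\psi_n\|\,\|\nabla\rho_n\|$. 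Summing and using $\langle\psi_n,\nabla\rho_n\rangle\le\|\psi_n\|\|\nabla\rho_n\|$ together with Young's inequality $(|q|+|p|)\|\psi\|\le |q|^2+|p|^2+\tfrac12\|\psi\|^2$, I obtain
$$
\dot E(t)\le\|\nabla\rho\|\bigl(|q|^2+|p|^2+\tfrac12\|\psi\|^2\bigr)\le\|\nabla\rho\|\,\tilde\gamma\bigl(q\cdot V q+|p|^2+\|\psi\|^2\bigr)=2\gamma\, E(t),
$$
where I used $|q|^2\le\kappa^{-2}q\cdot V q$ and $\tilde\gamma=\max\{1,\kappa^{-2}\}\ge1$, so $\gamma=\tilde\gamma\|\nabla\rho\|$ is exactly the constant in the statement.

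Gronwall then gives $E(t)\le E(0)e^{2\gamma t}$ for $t\ge0$, and combining with the two-sided equivalence $E\sim\|Y\|_{\mathcal{E}_0}^2$ yields $\|Y(t)\|_{\mathcal{E}_0}\le C e^{\gamma t}\|Y_0\|_{\mathcal{E}_0}$ for $t\ge0$. For $t<0$ the same bound follows by running the argument with the time-reversed flow, since the system \eqref{2.1} is Hamiltonian and in particular reversible. The only mildly delicate point is the Dirac cancellation at the level of the $L^2$-charge: this has to be done in the complex inner product (one cannot use \eqref{H} directly, since the Hamiltonian controls $H^1$-type quantities rather than $L^2$). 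Beyond that, everything reduces to Cauchy--Schwarz, Young and Gronwall, with the book-keeping of constants designed to match the prescribed $\tilde\gamma=\max\{1,\kappa^{-2}\}$.
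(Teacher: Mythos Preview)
Your proof is correct and follows essentially the same approach as the paper: the paper defines the same functional $h(t)=E(t)$, observes that the free Dirac contribution to $\dot h$ vanishes, bounds the two coupling terms by Cauchy--Schwarz to obtain $|\dot h|\le 2\gamma h$, and applies Gronwall. The only additional detail in the paper is that it first assumes $\psi^0\in C_0^2(\mathbb{R}^3;\mathbb{C}^{4N})$ so that the differentiation is justified classically, and then extends the bound to all $Y_0\in\mathcal{E}_0$ by density and the continuity of $U(t)$.
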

%%%%%%%%%%%%%%%%%%%%%%%%
\begin{proof}
Denote
$$
h(t):=\frac12\Big(\sum\limits_{n=1}^N\Vert\psi_n(\cdot,t)\Vert^2+|\dot q(t)|^2+q(t)\cdot Vq(t)\Big),\quad t\in\mathbb{R}.
$$
Then
$\Vert\psi(\cdot,t)\Vert^2+|\dot q(t)|^2+\kappa^2|q(t)|^2\le 2h(t)$.
Hence, $\Vert Y(t)\Vert^2_{\mathcal{E}}\le \widetilde{\gamma} 2h(t)$,
where $\tilde{\gamma}$ is defined in \eqref{3.0}.
Let us assume that
$\psi^0\in C^2_0(\mathbb{R}^3;\mathbb{C}^{4N})$.
Then, by Eqs~\eqref{1} and \eqref{2}, one obtains
$$
\dot  h(t)= \sum\limits_{n=1}^N\langle \psi_n(\cdot,t),i\nabla\rho_n\rangle\cdot q(t)
+\sum\limits_{n=1}^N\langle \psi_n(\cdot,t),\nabla\rho_n\rangle\cdot \dot q(t).
$$
Hence,
$$
|\dot  h(t)|\le \sum\limits_{n=1}^N\Vert\psi_n(\cdot,t)\Vert\,\Vert\nabla\rho_n\Vert\,(|q(t)|+|\dot q(t)|)
\le 2\gamma h(t).
$$
Hence,  the Gronwall inequality implies  the estimate
$h(t)\le e^{2\gamma |t|}h(0)$, and then the bound~\eqref{3.0} holds.
For any $Y_0\in \mathcal{E}_0$, the bound~\eqref{3.0}  follows from the continuity of $U(t)$
and the fact that
$C_0^2(\mathbb{R}^3)\oplus \mathbb{R}^3\oplus \mathbb{R}^3$
is dense in $\mathcal{E}_0$.
\end{proof}
%%%%%%%%%%%%%%%%%%%%%%%%%%%%%%%%%%%%%%%%%%%%%%%%

Using \eqref{ReW}, we rewrite $H_{ij}(t)$,
$i,j=1,2,3$, defined in \eqref{2.14} as
\begin{equation*}%%\label{6.1}
H_{ij}(t)=\sum\limits_{n=1}^N
\Big( \partial_i R_n^+,
(\partial_t-\Lambda_n(\nabla))g_{t,n}*\partial_jR_n^-\Big),
\quad \mbox{where }\,\, R_n^+:=\mathcal{R}(\rho_n),\quad R_n^-:=\mathcal{R}(i\rho_n),
\end{equation*}
where $(\cdot\,,\cdot)$ denotes the inner product in $L^2(\mathbb{R}^3)\otimes \mathbb{R}^8$.
To prove Theorem~\ref{h-eq}, we use the technique of
\cite{JK, KSK} and apply the Fourier--Laplace transform to \eqref{CP},
\begin{equation}\label{4.2}
\widetilde{Y}(\lambda)=
\int\limits_{0}^{+\infty} e^{-\lambda t}\,Y(t)\,dt,\quad \Re\lambda>\gamma,
\quad Y(t)=(\psi_1(\cdot\,,t),\ldots,\psi_N(\cdot\,,t),q(t),p(t)),
\end{equation}
with the constant $\gamma>0$ from the bound~\eqref{3.0}.
The integral in \eqref{4.2} converges and is analytic for $\Re\lambda>\gamma$.
We assume that  $\psi_n^0\equiv0$ $\forall n$.
Then, in the Fourier--Laplace transform,
 Eq.~\eqref{cs0} with $F(t)\equiv0$ has a form
%%---------------------------
\begin{equation}\label{M(lambda)}
\mathcal{N}(\lambda) \tilde{q}(\lambda)=p^0+\lambda q^0,
\qquad
\mathcal{N}(\lambda):=\lambda^2+V-\widetilde{H}(\lambda),
\end{equation}
where $\widetilde{H}(\lambda)$ stands for the $3\times 3$ matrix with the elements
$\widetilde{H}_{ij}(\lambda)$,
\begin{equation}\label{4.0}
\widetilde{H}_{ij}(\lambda)=
\sum\limits_{n=1}^N
\Big( \partial_i R_n^+,
(\lambda-\Lambda_n(\nabla))\widetilde{g}_{\lambda,n}*\partial_jR_n^-\Big),
\qquad i,j=1,2,3.
\end{equation}
Recall that $\widetilde{g}_{\lambda,n}$ stands for
the fundamental solution of the operator $-\Delta+m_n^2+\lambda^2$,
\begin{equation}\label{gn}
\widetilde{g}_{\lambda,n}(x)=\frac{e^{-\kappa_n|x|}}{4\pi|x|},
\quad \kappa_n^2:=m_n^2+\lambda^2,
\quad x\in\mathbb{R}^3,
\end{equation}
where  $\Re\kappa_n>0$ for $\Re\lambda>0$.
Applying \eqref{4.0} and the Fourier transform $\widehat{R}_n^{\pm}(k):=F_{x\to k}[R_n^\pm(x)]$, we have
\begin{equation*}%%\label{4.9'}
\widetilde{H}_{ij}(\lambda)=(2\pi)^{-3}\sum\limits_{n=1}^N
\int\limits_{\mathbb{R}^3}\frac{k_i k_j}{k^2+\lambda^2+m_n^2}\,
\widehat{R}_n^+\cdot
\left(\lambda-\Lambda_n(-ik)\right)\widehat{R}_n^-(k)\,dk,
\end{equation*}
where ``$\cdot$'' is the Hermitian product in $\mathbb{C}^8$.
Write $\mu_n=\Re\rho_n$ and $\nu_n=\Im\rho_n$.
By condition~\textbf{{A1'}}, $\widehat{\mu}_n$ and
$\widehat{\nu}_n$ are real valued functions $\forall n$.
Hence,  the explicit formulas for $\widehat{R}_n^\pm$,
$\Lambda_n(-ik)$ and $\alpha_j$ give
\begin{align*}
&\widehat{R}_n^+\cdot
(\lambda-\Lambda_n(-ik))\widehat{R}_n^-(k)
=
m_n(\widehat{\mu_n}\cdot\beta\widehat{\mu_n}+
\widehat{\nu_n}\cdot\beta\widehat{\nu_n})-
\lambda(\widehat{\mu_n}\cdot\widehat{\nu_n}-
\widehat{\nu_n}\cdot\widehat{\mu_n})\notag\\
&-i\sum\limits_{l=1,3}k_l \Big(\widehat{\mu_n}\cdot\alpha_l\,\widehat{\nu_n}-
\widehat{\nu_n}\cdot\alpha_l\,\widehat{\mu_n}\Big)
-k_2\Big(\widehat{\mu_n}\cdot \alpha_2\,\widehat{\mu_n}
+
\widehat{\nu_n}\cdot \alpha_2\,\widehat{\nu_n}\Big)= m_n\,\mathcal{B}_n(k),
\end{align*}
because $\widehat{\rho}_n=\widehat{\mu}_n+i\,\widehat{\nu}_n$ and
$\widehat{\mu_n}\cdot\beta\widehat{\mu_n}+
\widehat{\nu_n}\cdot\beta\widehat{\nu_n}=\widehat{\rho_n}\cdot\beta\widehat{\rho}_n= \mathcal{B}_n(k)$.
Hence,
\begin{equation}\label{4.11}
\widetilde{H}_{ij}(\lambda)=
(2\pi)^{-3}\sum\limits_{n=1}^Nm_n\int\limits_{\mathbb{R}^3}
\frac{k_i k_j\, \mathcal{B}_n(k)}{k^2+\lambda^2+m_n^2}\,dk,
\end{equation}
where $\mathcal{B}_n(k)>0$ by condition~\textbf{A3}.
Moreover, applying the inverse Fourier transform, we obtain %\eqref{4.11} implies that
\begin{equation}\label{4.12}
\widetilde{H}_{ij}(\lambda)=
\sum\limits_{n=1}^N m_n\Big(
\Big(  \partial_i \mu_n,
\widetilde{g}_{\lambda,n}* \beta\,\partial_j \mu_n\Big)+
\Big( \partial_i \nu_n,
\widetilde{g}_{\lambda,n}*\beta\,\partial_j \nu_n\Big)\Big).
\end{equation}
The matrix $\widetilde{H}(\lambda)$ is well defined for $\Re\lambda>0$,
because the denominator in \eqref{4.11} does not vanish.
The following result is proved in \cite[Sec.~13]{IKV}.
%%----------------------------------------------------------------------
\begin{lemma}\label{l7.3}
(i) For $\Re\lambda>0$, the operator $-\Delta+m_n^2+\lambda^2$
is invertible in $L^2(\mathbb{R}^3)$ and its fundamental solution~\eqref{gn}
decays exponentially as $|x|\to\infty$.
(ii)  For every fixed $x\not=0$, the  function $\widetilde{g}_{\lambda,n}(x)$
admits an analytic continuation (in variable $\lambda$) to the Riemann
surface of the algebraic function $\sqrt{\lambda^2+m_n^2}$
with the branching points $\lambda=\pm \,i\,m_n$, $n=1,\ldots,N$.
\end{lemma}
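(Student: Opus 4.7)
Proof plan. My approach is to work directly from the Fourier representation and the explicit Yukawa-type formula for $\widetilde g_{\lambda,n}$, using Fourier analysis for (i) and the analyticity of the exponential for (ii).

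For (i), the operator $-\Delta+m_n^2+\lambda^2$ is the Fourier multiplier with symbol $m(k)=|k|^2+\lambda^2+m_n^2$. I first argue that $m$ is bounded away from $0$ on $\mathbb R^3$: if $m(k)=0$ for some $k\in\mathbb R^3$, then $\lambda^2=-|k|^2-m_n^2\le -m_n^2<0$ would force $\lambda\in i\mathbb R$, contradicting $\Re\lambda>0$. Since in addition $|m(k)|\to\infty$ as $|k|\to\infty$, the continuous nonvanishing function $m$ attains a strictly positive infimum on $\mathbb R^3$. Hence $1/m$ is a bounded Fourier multiplier, so $-\Delta+m_n^2+\lambda^2$ is invertible on $L^2(\mathbb R^3)$ with bounded inverse. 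Next I identify the fundamental solution by inverting $1/m$: with $\kappa_n=\sqrt{\lambda^2+m_n^2}$ chosen to have $\Re\kappa_n>0$, a standard spherical-coordinate reduction followed by a residue calculation yields
\[
\widetilde g_{\lambda,n}(x)=(2\pi)^{-3}\int_{\mathbb R^3}\frac{e^{-ik\cdot x}}{|k|^2+\kappa_n^2}\,dk=\frac{e^{-\kappa_n|x|}}{4\pi|x|}.
\]
Exponential decay of $\widetilde g_{\lambda,n}$ as $|x|\to\infty$ is then immediate from $|\widetilde g_{\lambda,n}(x)|=e^{-(\Re\kappa_n)|x|}/(4\pi|x|)$ together with $\Re\kappa_n>0$.

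For (ii), the explicit formula makes the continuation transparent. For fixed $x\neq 0$, the map $\kappa\mapsto e^{-\kappa|x|}/(4\pi|x|)$ is entire in $\kappa\in\mathbb C$; composing with the algebraic function $\lambda\mapsto\kappa_n(\lambda)=\sqrt{\lambda^2+m_n^2}$ yields an analytic function on the Riemann surface of $\sqrt{\lambda^2+m_n^2}$. Since $\kappa\mapsto\kappa^2$ is a local biholomorphism away from $\kappa=0$, the ramification set is exactly the zero set of $\lambda^2+m_n^2$, namely $\lambda=\pm im_n$, as asserted.

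The only step requiring genuine care is the branch choice for $\kappa_n$ on the initial domain $\{\Re\lambda>0\}$: I would verify that $\lambda^2+m_n^2$ never lies on the negative real axis there, so that the principal square root is well-defined and satisfies $\Re\kappa_n>0$. Indeed, writing $\lambda=a+ib$ with $a>0$, one has $\lambda^2+m_n^2=(a^2+m_n^2-b^2)+2iab$, whose imaginary part vanishes only at $b=0$, where the real part is $a^2+m_n^2>0$. The remaining steps — the identification of the Yukawa kernel (alternatively verifiable by computing the radial Laplacian of $e^{-\kappa_n|x|}/(4\pi|x|)$ and using $-\Delta(1/(4\pi|x|))=\delta$) and the determination of the branch points from the quadratic $\lambda^2+m_n^2$ — are standard, so I expect no serious technical obstacle.
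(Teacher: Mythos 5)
Your proof is correct. The paper does not actually prove this lemma itself but simply cites \cite[Sec.~13]{IKV} for it; your argument is the standard self-contained one (nonvanishing of the symbol $|k|^2+m_n^2+\lambda^2$ on $\{\Re\lambda>0\}$ for $L^2$-invertibility, the explicit Yukawa kernel $e^{-\kappa_n|x|}/(4\pi|x|)$ with the principal branch $\Re\kappa_n>0$ for the exponential decay, and the entirety of $\kappa\mapsto e^{-\kappa|x|}$ composed with the double cover $\kappa_n^2=\lambda^2+m_n^2$ for the continuation to the Riemann surface with branch points $\pm i m_n$), and your verification that $\lambda^2+m_n^2$ avoids the negative real axis on $\{\Re\lambda>0\}$ correctly handles the only delicate branch-choice point.
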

%%%%%%%%%%%%%%%%%%%%%%%%%%%%%%%%%%%%%%%%%%%%%%%%

It follows from Lemma \ref{l7.3}, formulas \eqref{M(lambda)} and \eqref{4.11}
that
 $\mathcal{N}(\lambda)$ admits an analytic continuation from the complex half-plane
$\Re\lambda>0$ to the Riemann surface $\Sigma$  with the branching points,
which are projected into the points $\pm\, i\,m_n$, $n\in \overline{N}$.
(Here $\Sigma$ is the $2^{K}$-sheeted surface,
where  $K\in[1,N]$ is the number of pairwise distinct numbers among $m_1,\ldots,m_N$).
 Moreover, the matrix $\mathcal{N}^{-1}(\lambda)$ exists for large $\Re\lambda$,
 since  $\widetilde{H}(\lambda)\to0$
as $\Re\lambda\to\infty$ by \eqref{4.11}.
%%-------------------------------------
\begin{corollary}
(i) The matrix $\mathcal{N}(\lambda)$ is invertible for $\Re\lambda>0$, and
\begin{equation}\label{a9}
\widetilde{q}(\lambda)=\mathcal{N}^{-1}(\lambda)
(\lambda q^0+   p^0),\quad \Re\lambda>0.
\end{equation}
(ii) The matrix $\mathcal{N}^{-1}(\lambda)$ admits a meromorphic continuation
from the  half-plane $\Re\lambda>0$ to the Riemann surface $\Sigma$
with the branching points  $\pm\, i\,m_n$, $n=1,\ldots,N$.
\end{corollary}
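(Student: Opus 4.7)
The plan is to derive both assertions from the explicit formulas \eqref{M(lambda)}, \eqref{4.11}, \eqref{4.12} together with condition~\textbf{A2} and the analytic structure of $\widetilde{g}_{\lambda,n}$ given by Lemma~\ref{l7.3}.

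For part~(i), the identity \eqref{a9} is simply \eqref{M(lambda)} solved for $\widetilde{q}(\lambda)$, so it suffices to prove that $\mathcal{N}(\lambda)\in\mathrm{End}(\mathbb{C}^3)$ is injective for every $\lambda$ with $\Re\lambda>0$. Suppose $\mathcal{N}(\lambda)v=0$ for some $v\in\mathbb{C}^3$. Pairing with $\bar{v}$ and using \eqref{4.11} gives
$$
\lambda^2|v|^2+\bar{v}\cdot Vv
=(2\pi)^{-3}\sum_{n=1}^N m_n\int_{\mathbb{R}^3}\frac{|k\cdot v|^2\,\mathcal{B}_n(k)}{k^2+\lambda^2+m_n^2}\,dk.
$$
Since $V$ is real symmetric, $\bar v\cdot Vv\in\mathbb{R}$. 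Writing $\lambda=\mu+i\tau$ with $\mu>0$, I treat two cases. If $\tau\ne0$, then $\Im[(k^2+\lambda^2+m_n^2)^{-1}]=-2\mu\tau/|k^2+\lambda^2+m_n^2|^2$, and taking imaginary parts of the identity yields
$$
2\mu\tau\Bigl[|v|^2+(2\pi)^{-3}\sum_n m_n\int\frac{|k\cdot v|^2\,\mathcal{B}_n(k)}{|k^2+\lambda^2+m_n^2|^2}\,dk\Bigr]=0,
$$
so both nonnegative summands vanish and $v=0$. If $\tau=0$ and $\lambda>0$ is real, then $k^2+m_n^2+\lambda^2>k^2+m_n^2-m_*^2>0$, which implies $\widetilde{H}(\lambda)<K$ in the sense of quadratic forms; combined with the strict positivity $V-m_*^2 I-K>0$ from~\textbf{A2}, this yields $\bar v\cdot\mathcal{N}(\lambda)v>(\lambda^2+m_*^2)|v|^2>0$ for $v\ne 0$, again forcing $v=0$. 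In either case $\mathcal{N}(\lambda)$ is invertible, and \eqref{a9} follows.

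For part~(ii), I use representation \eqref{4.12}: each entry $\widetilde{H}_{ij}(\lambda)$ is a pairing of $\partial_i\mu_n,\partial_i\nu_n$ with the convolutions $\widetilde{g}_{\lambda,n}*\beta\partial_j\mu_n$ and $\widetilde{g}_{\lambda,n}*\beta\partial_j\nu_n$. By Lemma~\ref{l7.3}(ii) the kernel $\widetilde{g}_{\lambda,n}(x)$ extends analytically in $\lambda$, for each fixed $x\ne0$, to the Riemann surface $\Sigma$ with branching points $\pm im_n$. Under~\textbf{A1} the functions $\mu_n,\nu_n$ are compactly supported, so the double integral in \eqref{4.12} only involves $\widetilde{g}_{\lambda,n}(x-y)$ with $x,y$ ranging over a bounded region, where the $|x-y|^{-1}$ singularity at $x=y$ is $\lambda$-independent and integrable in $\mathbb{R}^3$. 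Consequently each $\widetilde{H}_{ij}(\lambda)$, and hence $\mathcal{N}(\lambda)=\lambda^2 I+V-\widetilde{H}(\lambda)$, extends analytically to $\Sigma$. The identity $\mathcal{N}^{-1}(\lambda)=(\det\mathcal{N}(\lambda))^{-1}\mathrm{adj}\,\mathcal{N}(\lambda)$ then shows that $\mathcal{N}^{-1}$ is meromorphic on $\Sigma$, with poles precisely on the zero set of the analytic function $\det\mathcal{N}(\lambda)$.

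The main subtlety I anticipate is extending this analyticity argument under the weaker hypothesis~\textbf{A1'}, where $\nabla\rho$ is no longer compactly supported but only lies in $L^2_{-\sigma}$. In that regime analyticity of $\widetilde{H}(\lambda)$ on $\Sigma$ requires a locally uniform (in $\lambda$) decay estimate on $\widetilde{g}_{\lambda,n}(x)$ as $|x|\to\infty$ in a neighborhood of each sheet of $\Sigma$, which can be obtained from the standard large-$|x|$ asymptotics of the Klein--Gordon resolvent kernel. Under~\textbf{A1} this technicality disappears and the continuation is transparent.
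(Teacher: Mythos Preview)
Your argument is correct, and for part~(ii) it coincides with the paper's: analytic continuation of $\mathcal{N}(\lambda)$ to $\Sigma$ via Lemma~\ref{l7.3} and formula~\eqref{4.12}, followed by the cofactor formula $\mathcal{N}^{-1}=(\det\mathcal{N})^{-1}\mathrm{adj}\,\mathcal{N}$.

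For part~(i), however, you actually do more than the paper. The text preceding the corollary only observes that $\widetilde{H}(\lambda)\to0$ as $\Re\lambda\to\infty$, hence $\mathcal{N}^{-1}(\lambda)$ exists for \emph{large} $\Re\lambda$; this suffices for the meromorphic continuation in~(ii) (since $\det\mathcal{N}$ is then analytic and not identically zero on $\Sigma$), but does not by itself establish invertibility on the whole open half-plane $\Re\lambda>0$. Your direct argument---splitting into $\Im\lambda\ne0$ (imaginary-part identity) and $\Im\lambda=0$ (comparison of $\widetilde{H}(\lambda)$ with $K$ via~\textbf{A2})---fills this in, and is in fact the right-half-plane counterpart of the paper's later Lemma~\ref{l-a}, which runs the same two-case analysis on the boundary $\lambda=i\omega+0$. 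One small omission: both of your cases tacitly rely on condition~\textbf{A3}. In the $\tau\ne0$ case the positivity $\mathcal{B}_n(k)>0$ is what makes the integral summand nonnegative; in the $\tau=0$ case it is what makes the pointwise inequality $\mathcal{B}_n(k)/(k^2+m_n^2+\lambda^2)<\mathcal{B}_n(k)/(k^2+m_n^2-m_*^2)$ go in the right direction. You should cite~\textbf{A3} alongside~\textbf{A2}.
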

%%%--------------------------

Formula \eqref{4.12} and the bounds for the convolution operator with the kernels
$\partial^k_\lambda \widetilde{g}_{\lambda,n}$
(see, e.g., \cite[Theorem~8.1]{JK})  imply that
$$
|\partial^k_\lambda\widetilde{H}_{ij}(\lambda)|
\le C\sum\limits_{n=1}^N\Vert\partial_i\rho_n\Vert_{\sigma}
\Vert\partial^k_\lambda\widetilde{g}_{\lambda,n}*\partial_j\rho_n\Vert_{-\sigma}
\le C_k\Vert\nabla\rho\Vert^2_{\sigma}\,|\lambda|^{-(k+1)/2}
\quad \mbox{as }\,\, |\lambda|\to\infty,
$$
 where $\Re\lambda>0$, $k=0,1,2$;
 $\sigma>1$ if $k=0$ and $\sigma>k+1/2$ if $k=1,2$.
(This explains our choice of $\sigma$ as $\sigma>5/2$ if \textbf{A1'} holds).
Together with \eqref{M(lambda)}, this implies the following result.

%%%%%%%%%%%%%%%%%%%%%%%%%%%%%%%%%%%%%%%%%%%%%%%%%%%%%%%%%%%%%%%%%%
\begin{lemma} \label{rem4.3}
There is a  matrix-valued function $D(\lambda)$ such that
$\mathcal{N}^{-1}(\lambda)=\lambda^{-2}\mathrm{I}+D(\lambda)$,
where $|\partial_\lambda^k D(\lambda)|\le C_k|\lambda|^{-4}$ for $|\lambda|\to\infty$,
$\Re\lambda>0$,  $k=0,1,2$.
\end{lemma}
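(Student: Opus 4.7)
The plan is to expand $\mathcal{N}(\lambda)^{-1}$ in a Neumann series for large $|\lambda|$. Write
\begin{equation*}
\mathcal{N}(\lambda)=\lambda^2\bigl(\mathrm{I}+\lambda^{-2}M(\lambda)\bigr),\qquad M(\lambda):=V-\widetilde{H}(\lambda).
\end{equation*}
By the bound immediately preceding Lemma~\ref{rem4.3}, $|\partial_\lambda^k\widetilde{H}(\lambda)|\le C_k|\lambda|^{-(k+1)/2}$ for $k=0,1,2$, $\Re\lambda>0$. In particular $\widetilde{H}(\lambda)$ is bounded, so $M(\lambda)$ is uniformly bounded on $\Re\lambda>0$. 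Fix a threshold $\lambda_0$ large enough that $\|\lambda^{-2}M(\lambda)\|\le 1/2$ for $|\lambda|\ge\lambda_0$, $\Re\lambda>0$.

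For such $\lambda$ the Neumann series gives
\begin{equation*}
\bigl(\mathrm{I}+\lambda^{-2}M(\lambda)\bigr)^{-1}
=\mathrm{I}-\lambda^{-2}M(\lambda)+R(\lambda),
\qquad R(\lambda):=\sum_{n=2}^\infty (-1)^n\lambda^{-2n}M(\lambda)^n,
\end{equation*}
with $\|R(\lambda)\|\le 2|\lambda|^{-4}\|M(\lambda)\|^2\le C|\lambda|^{-4}$. Hence $\mathcal{N}^{-1}(\lambda)=\lambda^{-2}\mathrm{I}+D(\lambda)$ with
\begin{equation*}
D(\lambda)=-\lambda^{-4}M(\lambda)+\lambda^{-2}R(\lambda),
\end{equation*}
which immediately yields $|D(\lambda)|\le C|\lambda|^{-4}$, establishing the $k=0$ case. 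For $|\lambda|<\lambda_0$ and $\Re\lambda>0$, the estimate follows because $\mathcal{N}^{-1}$ is smooth on that compact piece (and we may enlarge $C_k$ accordingly; in the relevant application $D$ is only used in the regime $|\lambda|\to\infty$).

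For the derivative bounds, differentiate the identity $\mathcal{N}(\lambda)\mathcal{N}^{-1}(\lambda)=\mathrm{I}$ or, equivalently, the explicit series above. Each differentiation either hits a factor $\lambda^{-2n}$ (producing an extra $\lambda^{-1}$) or a factor $M(\lambda)$ (replacing it by $-\partial_\lambda^j\widetilde{H}(\lambda)$, which by hypothesis is $O(|\lambda|^{-(j+1)/2})$). In every term of $\partial_\lambda^k D(\lambda)$ for $k=1,2$, the combined power of $\lambda$ in the denominator is therefore at least $4$: for example
\begin{equation*}
\partial_\lambda D(\lambda)=4\lambda^{-5}M(\lambda)-\lambda^{-4}M'(\lambda)+\partial_\lambda\bigl(\lambda^{-2}R(\lambda)\bigr),
\end{equation*}
and each piece is $O(|\lambda|^{-5})\subset O(|\lambda|^{-4})$. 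A similar but slightly longer calculation for $k=2$ gives $O(|\lambda|^{-6})\subset O(|\lambda|^{-4})$. The Neumann remainder $R(\lambda)$ and its derivatives are handled term by term (or by writing $R(\lambda)=\lambda^{-4}M(\lambda)^2(\mathrm{I}+\lambda^{-2}M(\lambda))^{-1}$ and using the product rule, noting that $(\mathrm{I}+\lambda^{-2}M)^{-1}$ and its first two $\lambda$-derivatives are uniformly bounded for $|\lambda|\ge\lambda_0$).

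The only step requiring care is the bookkeeping for $k=2$: one must verify that no differentiation produces a term decaying more slowly than $|\lambda|^{-4}$. Since each differentiation either lowers a power of $\lambda$ by one in a factor already carrying at least $\lambda^{-4}$, or multiplies by an extra negative power of $|\lambda|$ from the bound on $\partial_\lambda^j\widetilde{H}$, the desired uniform bound persists. This completes the proof sketch.
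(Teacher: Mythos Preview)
Your proof is correct and is exactly the argument the paper leaves implicit: the paper simply states that the bounds $|\partial_\lambda^k\widetilde{H}(\lambda)|\le C_k|\lambda|^{-(k+1)/2}$ together with $\mathcal{N}(\lambda)=\lambda^2\mathrm{I}+V-\widetilde{H}(\lambda)$ imply the lemma, and your Neumann expansion of $(\mathrm{I}+\lambda^{-2}M(\lambda))^{-1}$ is the standard way to unpack that implication. One small remark: your parenthetical about $|\lambda|<\lambda_0$ is unnecessary (and not quite correct as stated, since that region is not compact and $\mathcal{N}^{-1}$ need not be globally smooth there), but as you note the assertion is asymptotic, so only the large-$|\lambda|$ regime matters.
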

%%%%%%%%%%%%%%%%%%%%%%%%%%%%%%%%%%%%%
\textbf{Remark}.
Let all functions $\rho_n(x)$ be compactly supported (i.e., condition~\textbf{A1} holds).
Then
$$|\partial^k_\omega\widetilde{H}_{ij}(i\omega+0)|
\le C_k\Vert\nabla\rho\Vert^2\,|\omega|^{-1}
\,\,\, \mbox{for }\,\, \omega\in\mathbb{R}:\, |\omega|\ge \max_n m_n+1,
\,\,\,\mbox{and every } k=0,1,2,\ldots,
$$
by \eqref{4.12} and \cite[formula~(16.7)]{IKV}.
Moreover, in this case, all results of Theorems~\ref{tA}--\ref{h-eq}
remain true with $\sigma>3/2$.

Now we investigate the limit values of $\mathcal{N}^{-1}(\lambda)$
at the imaginary axis $\lambda=i\,\omega$, $\omega\in \mathbb{R}$,
applying the methods of \cite[Proposition~15.1]{IKV} and \cite[Lemma~7.2]{KSK}.
Without loss of generality, we assume that $N=2$ and $0<m_1<m_2$.
The other cases can be considered similarly.
The limit matrix
\begin{equation}\label{a3}
\mathcal{N}(i\omega+0)=
-\omega^2\mathrm{I}+ V-\widetilde{H}(i\omega+0),
\quad \omega\in \mathbb{R},
\end{equation}
exists, and its entries are continuous functions of $\omega\in \mathbb{R}$,
smooth for $|\omega|<m_1$, $m_1<|\omega|<m_2$, and $|\omega|>m_2$.
%%%%%%%%%%%%%%%%%%%%
\begin{lemma}\label{l-a}
The limit matrix $\mathcal{N}(i\omega+0)$ is invertible for $\omega\in \mathbb{R}$.
\end{lemma}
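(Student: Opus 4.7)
The plan is to analyse the real and imaginary parts of $\mathcal{N}(i\omega+0)$ using the Sokhotski--Plemelj formula applied to \eqref{4.11}, and to treat the regime $|\omega|\le m_*:=\min_n m_n$ differently from the regime $|\omega|>m_*$.

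First I would record that, setting $\lambda=i\omega+\varepsilon$ and letting $\varepsilon\to 0^+$, one has $1/(k^2+\lambda^2+m_n^2)\to 1/(k^2+m_n^2-\omega^2+i0\cdot\mathrm{sign}\,\omega)$, whence
\begin{equation*}
\mathrm{Im}\,\widetilde{H}_{ij}(i\omega+0)=-\pi(2\pi)^{-3}\,\mathrm{sign}(\omega)\sum_{n=1}^{N}m_n\int_{\mathbb{R}^3}k_ik_j\,\mathcal{B}_n(k)\,\delta\!\left(k^2+m_n^2-\omega^2\right)dk.
\end{equation*}
For $|\omega|<m_*$ this delta measure vanishes identically, so $\widetilde{H}(i\omega+0)$ is a real symmetric matrix. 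Moreover, for any $\xi\in\mathbb{R}^3$ the quadratic form $\xi^{\mathrm{T}}\widetilde{H}(i\omega+0)\xi$ is monotone in $\omega^2$ (the denominator decreases with $\omega^2$), so $\widetilde{H}(i\omega+0)\le K$ as symmetric matrices, with $K$ from \textbf{A2}. Combining with $-\omega^2\mathrm{I}\ge -m_*^2\mathrm{I}$, I obtain
\begin{equation*}
\mathcal{N}(i\omega+0)=-\omega^2\mathrm{I}+V-\widetilde{H}(i\omega+0)\ge V-m_*^2\mathrm{I}-K,
\end{equation*}
which is positive definite by \textbf{A2}, hence invertible.

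Next I would handle $|\omega|>m_*$. There the set $\{k:k^2=\omega^2-m_n^2\}$ is a nontrivial sphere for every $n$ with $m_n<|\omega|$, so condition \textbf{A3} ($\mathcal{B}_n(k)>0$ for $k\ne 0$) ensures that, for any $\xi\in\mathbb{R}^3\setminus\{0\}$,
\begin{equation*}
\mathrm{sign}(\omega)\,\xi^{\mathrm{T}}\bigl[-\mathrm{Im}\,\widetilde{H}(i\omega+0)\bigr]\xi=\pi(2\pi)^{-3}\!\!\sum_{n:\,m_n<|\omega|}\!\!m_n\int_{|k|^2=\omega^2-m_n^2}|\xi\cdot k|^2\mathcal{B}_n(k)\,\frac{dS(k)}{2|k|}>0,
\end{equation*}
since $|\xi\cdot k|^2$ is positive off a set of measure zero on the sphere. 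Thus $B:=-\mathrm{Im}\,\widetilde{H}(i\omega+0)$ is a real symmetric matrix of definite sign. Writing $\mathcal{N}(i\omega+0)=A+iB$ with real symmetric $A,B$, for any $\eta\in\mathbb{C}^3$ with $\mathcal{N}(i\omega+0)\eta=0$ the imaginary part of $\bar\eta\cdot\mathcal{N}(i\omega+0)\eta$ yields $\bar\eta\cdot B\eta=0$, which together with definiteness of $B$ forces $\eta=0$.

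The only point requiring care is the boundary $|\omega|=m_*$: here the delta measure is supported at $k=0$, where the numerator $k_ik_j$ also vanishes, so $\mathrm{Im}\,\widetilde{H}(im_*+0)=0$ and one cannot use the second argument. However, an easy check shows that the integrand $k_ik_j\mathcal{B}_n(k)/(k^2+m_n^2-m_*^2)$ is absolutely integrable at $k=0$ even for the index with $m_n=m_*$ (the ratio $k_ik_j/k^2$ being bounded), so $\widetilde{H}(im_*+0)=K$ exactly, and $\mathcal{N}(im_*+0)=V-m_*^2\mathrm{I}-K>0$ by \textbf{A2}. The main obstacle is really the care needed at this threshold value and in verifying continuity of the real part through $|\omega|=m_n$; once this is done, the three cases cover all $\omega\in\mathbb{R}$ and the lemma follows.
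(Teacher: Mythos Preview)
Your proposal is correct and follows essentially the same approach as the paper: for $|\omega|\le m_*$ you exploit condition~\textbf{A2} to show the real symmetric matrix $-\omega^2\mathrm{I}+V-\widetilde{H}(i\omega+0)$ is positive definite, and for $|\omega|>m_*$ you compute $\mathrm{Im}\,\widetilde{H}(i\omega+0)$ via the Sokhotski--Plemelj formula and use condition~\textbf{A3} to show it has definite sign. The paper carries out the Sokhotski--Plemelj step in more detail (decomposing $1/D_\varepsilon(k)$ into partial fractions in the radial variable before passing to the limit), and it absorbs the threshold $|\omega|=m_*$ into case~(i) rather than treating it separately, but these are purely presentational differences; your concern about continuity of the real part through $|\omega|=m_n$ is not actually needed for the argument.
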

%%%%%%%%%%%%%%%%%%%%%%%
\begin{proof}
(i) Let $|\omega|\le m_1$. Then the matrix
$V-\omega^2 \mathrm{I}-\widetilde{H}(i\omega+0)$ is positive definite.
Indeed, for every $v\in \mathbb{R}^3\setminus\{0\}$, we apply the condition {\bf A2}
with $m_*=m_1$ and obtain
\begin{align*}
v\cdot (V-\omega^2 \mathrm{I}-\widetilde{H}(i\omega+0))v
&=v\cdot Vv -\omega^2|v|^2-
 \sum\limits_{n=1}^N \frac{m_n}{(2\pi)^{3}}
\int_{\mathbb{R}^3}\frac{(k\cdot v)^2 \mathcal{B}_n(k)\,dk}{k^2+m_n^2-\omega^2}\\
&\ge v\cdot Vv-m_1^2|v|^2-v\cdot Kv>0.
\end{align*}
Therefore, $\mathcal{N}(i\omega+0) v\not=0$ for all $v\in \mathbb{R}^3\setminus\{0\}$.

(ii) Let $m_1<|\omega|\le m_2$. Then $v\cdot\Im \widetilde{H}(i\omega+0) v\not=0$
for every $v\in\mathbb{R}^3\setminus\{0\}$.
Indeed,
\begin{equation}\label{a6}
\Im \widetilde{H}_{ij}(i\omega+0)= \frac{m_1}{(2\pi)^{3}}\, \Im
\int_{\mathbb{R}^3}\frac{k_ik_j\,\mathcal{B}_1(k)\,dk}{k^2+m_1^2-(\omega-i0)^2}.
\end{equation}
For $\varepsilon>0$, we consider the function
$$
h_{ij}(i\omega+\varepsilon):=
\int_{\mathbb{R}^3}\frac{k_ik_j\,\mathcal{B}_1(k)}{k^2+m_1^2-(\omega-i\varepsilon)^2}\,dk,
\quad |\omega|>m_1.
$$
Denote $D_\varepsilon(k)=k^2+m_1^2-(\omega-i\varepsilon)^2$. For $|\omega|>m_1$,
$D_0(k)=0$ if $|k|=\sqrt{\omega^2-m_1^2}$.
Let us fix a small $\delta>0$ and introduce a cutoff function
$\zeta\in C_0^\infty(\mathbb{R}^3)$
such that $\zeta(k)\ge0$, $\zeta(k)=1$ if $|D_0(k)|<\delta$
and $\zeta(k)=0$ if $|D_0(k)|\ge 2\delta$.
Note that $\Im h_{ij}(i\omega+0)=\Im h^\delta_{ij}(i\omega+0)$, where
$$
h^\delta_{ij}(i\omega+0):=\lim_{\varepsilon\to0}\int_{\mathbb{R}^3}\zeta(k)
\frac{k_ik_j\,\mathcal{B}_1(k)}{D_\varepsilon(k)}\,dk.
$$
Write $a(k)=\sqrt{k^2+m_1^2}$. Assume that $\omega>m_1>0$. Since
$$
\frac1{D_\varepsilon(k)}=\frac1{2a(k)(a(k)-\omega+i\varepsilon)}
+\frac1{2a(k)(a(k)+\omega-i\varepsilon)},
$$
then $\Im h^\delta_{ij}(i\omega+0)=\Im h^\delta_{-}(i\omega+0)$, where
$$
h^\delta_{-}(i\omega+\varepsilon):=\int_{\mathbb{R}^3}\zeta(k)
\frac{k_ik_j\,\mathcal{B}_1(k)}{2a(k)(a(k)-\omega+i\varepsilon)}\,dk.
$$
We rewrite $h^\delta_{-}(i\omega+\varepsilon)$ as
$$
h^\delta_{-}(i\omega+\varepsilon)=
\int\limits_{m_1-\omega}^{+\infty}
\frac{g(y)}{y+i\varepsilon}\,dy,\quad
g(y):=\int\limits_{a(k)-\omega=y}\zeta(k)
\frac{k_ik_j\,\mathcal{B}_1(k)}{2a(k)|\nabla a(k)|}\,dS.
$$
Hence, $\Im h^\delta_{-}(i\omega+0)=-\pi g(0)$
by the Sokhotski--Plemelj formula~\cite[p.~140]{EKS}. Finally, for $\omega>m_1>0$,
$$
\Im h_{ij}(i\omega+0)=\Im h^\delta_{ij}(i\omega+0)
=\Im h^\delta_{-}(i\omega+0)=
-\pi\int\limits_{|k|=\sqrt{\omega^2-m_1^2}}
\frac{k_ik_j\,\mathcal{B}_1(k)}{2|k|}\,dS.
$$
Thus, by \eqref{a6} and condition \textbf{A3}, we obtain that for $\omega\in\mathbb{R}:\,m_1<|\omega|\le m_2$,
\begin{equation}\label{a8}
v\cdot\Im \tilde{H}(i\omega+0)v=-\frac{{\rm sign}(\omega)}{8\pi^2}
\frac{m_1}{2\sqrt{\omega^2-m_1^2}}\int\limits_{|k|=\sqrt{\omega^2-m_1^2}}
(v\cdot k)^2\mathcal{B}_1(k)\,dS\not=0.
\end{equation}

(iii) Let $|\omega|>m_2$. Then,
$$
v\cdot\Im \tilde{H}(i\omega+0)\,v=-\frac{{\rm sign}(\omega)}{16\pi^2}
\sum\limits_{n=1}^2\frac{m_n}{\sqrt{\omega^2-m_n^2}}\int\limits_{|k|=\sqrt{\omega^2-m_n^2}}
(v\cdot k)^2\mathcal{B}_n(k)\,dS\not=0.
$$

In the general case when $N=1,2,\dots,$ we enumerate $m_1,\dots,m_N$
in the increasing order, $0\le m_1\le m_2\le\dots\le m_N$.
Hence, if $m_d\not=m_{d+1}$ and $m_d<|\omega|\le m_{d+1}$ ($d=1,2,\ldots,N-1$), then
\begin{equation}\label{a2}
v\cdot\Im \tilde{H}(i\omega+0)\,v=-\frac{{\rm sign}(\omega)}{16\pi^2}
\sum\limits_{n=1}^d \frac{m_n}{\sqrt{\omega^2-m_n^2}}\int\limits_{|k|=\sqrt{\omega^2-m_n^2}}
(v\cdot k)^2\mathcal{B}_n(k)\,dS\not=0.
\end{equation}
For $|\omega|>m_N$, formula \eqref{a2} holds with $d=N$.
Thus, formula \eqref{a3} and estimate \eqref{a2} imply
Lemma~\ref{l-a}.
\end{proof}

{\bf Remark}.
We note that condition {\bf A3} is used only in the estimates \eqref{a8} and \eqref{a2}.
Hence, instead of condition {\bf A3} it suffices to assume that
for any $n\in \overline{N}$, $v\in\mathbb{R}^3\setminus \{0\}$ and $r>0$,
$$
\int\limits_{|\theta|=1}
(v\cdot \theta)^2 \mathcal{B}_n(r\theta)
\,dS_{\theta}\not=0.
$$
%%%%%%%%%%%%%%

Lemma~\ref{l-a} implies that the matrix $\mathcal{N}^{-1}(i\omega+0)$ is
a smooth function of  $\omega\in \mathbb{R}$
outside the points $\omega=\pm \,m_n$, $n\in \overline{N}$.
%%%%%%%%%%%%%%%%%%%%%%%%%%%%%%%%%%%%%%%%%

\begin{lemma} \label{16.2}
The matrix $\mathcal{N}^{-1}(\lambda)$ admits the  Puiseux expansion
in a neighborhood of the points $\lambda=\pm \,i\,m_n$, $n=1,\ldots, N$:
\begin{equation*}
\left.
\begin{aligned}\mathcal{N}^{-1}(\lambda)&=c_{\pm,n}+ \mathcal{O}(|\lambda\pm i\, m_n|^{1/2})\\
\partial_\lambda^k\left(\mathcal{N}^{-1}(\lambda)\right)
&= \mathcal{O}(|\lambda\pm i\, m_n|^{1/2-k}),\quad k=1,2
\end{aligned}\right|
\quad \lambda\pm i\, m_n\to0.
\end{equation*}
\end{lemma}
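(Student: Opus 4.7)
The plan is to derive a Puiseux expansion of $\widetilde H(\lambda)$ in powers of $\mu^{1/2}$ near each branch point $\lambda=\pm im_n$, where $\mu:=\lambda\mp im_n$, and then to invert $\mathcal{N}(\lambda)=\lambda^2+V-\widetilde H(\lambda)$ using the invertibility of $\mathcal{N}(\pm im_n+0)$ from Lemma~\ref{l-a}. The only source of non-analyticity is the $n$-th term in \eqref{4.12}, because $\lambda^2+m_{n'}^2$ does not vanish near $\pm im_n$ when $m_{n'}\ne m_n$. For the critical index one has $\lambda^2+m_n^2=\mu(\mu\pm 2im_n)$, so
\[
\kappa_n(\lambda)=\sqrt{\pm 2im_n}\,\mu^{1/2}+O(|\mu|^{3/2})
\]
is itself a Puiseux series in $\mu^{1/2}$.

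Expanding the kernel $\widetilde g_{\lambda,n}(x)=e^{-\kappa_n|x|}/(4\pi|x|)$ in powers of $\kappa_n$ and separating even from odd powers yields a decomposition
\[
\widetilde g_{\lambda,n}(x)=A(\mu,x)+\mu^{1/2}B(\mu,x),
\]
where $A,B$ are convergent power series in $\mu$ whose $x$-dependent coefficients are polynomials in $|x|$ divided by $|x|$. Plugging this decomposition into \eqref{4.12}, performing the convolution with $\partial_j\rho_n$, and pairing with $\partial_i\rho_n$ term by term, one obtains a convergent Puiseux expansion
\[
\widetilde H(\lambda)=\widetilde H(\pm im_n+0)+\mu^{1/2}H^{(1)}+\mu H^{(2)}+\mu^{3/2}H^{(3)}+\cdots
\]
with bounded $3\times 3$ matrix coefficients. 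The weighted-norm hypothesis $\nabla\rho\in L^2_\sigma$ with $\sigma>5/2$ (resp.\ the compact support of $\rho$ under \textbf{A1}) guarantees that each convolution integral $(\partial_i\rho_n,\,|x|^{k-1}*\partial_j\rho_n)$ defining the coefficients $H^{(k)}$ is finite; this is the same mechanism that produces Lemma~\ref{rem4.3} via \cite[Thm.~8.1]{JK}.

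Consequently $\mathcal{N}(\lambda)=\mathcal{N}(\pm im_n+0)+O(|\mu|^{1/2})$ with the same Puiseux structure. Since the leading term is invertible by Lemma~\ref{l-a}, a Neumann series argument applied to $I+[\mathcal{N}(\pm im_n+0)]^{-1}R(\mu)$, with $R(\mu)=O(|\mu|^{1/2})$, yields for $|\mu|$ sufficiently small
\[
\mathcal{N}^{-1}(\lambda)=c_{\pm,n}+\mu^{1/2}C^{(1)}+\mu\, C^{(2)}+\mu^{3/2}C^{(3)}+\cdots,
\]
where $c_{\pm,n}:=[\mathcal{N}(\pm im_n+0)]^{-1}$. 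The first bound of the lemma is immediate, and term-by-term differentiation of this Puiseux series gives $\partial_\lambda\mathcal{N}^{-1}=\tfrac12 C^{(1)}\mu^{-1/2}+O(1)=O(|\mu|^{-1/2})$ and $\partial_\lambda^2\mathcal{N}^{-1}=-\tfrac14 C^{(1)}\mu^{-3/2}+O(|\mu|^{-1/2})=O(|\mu|^{-3/2})$, exactly matching the claimed bounds.

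The main obstacle is the rigorous control of the Puiseux expansion of $\widetilde H(\lambda)$: one needs absolute convergence of the series obtained by inserting $e^{-\kappa_n|x|}=\sum_k(-\kappa_n|x|)^k/k!$ into \eqref{4.12}, together with uniform estimates on the convolution kernels $\partial_\lambda^k\widetilde g_{\lambda,n}$ that are sharp enough to differentiate twice in $\lambda$ across the branch point, where the exponential decay $e^{-\Re\kappa_n|x|}$ degenerates. This is precisely where the assumption $\nabla\rho\in L^2_\sigma$ with $\sigma>5/2$ is indispensable; under the stronger condition \textbf{A1}, Paley--Wiener-type decay of $\widehat\rho_n$ allows the same conclusion with $\sigma>3/2$.
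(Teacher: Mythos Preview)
Your proposal is correct and follows essentially the same route as the paper. The paper's own argument is only a two-line sketch: it points out that $\widetilde g_{\lambda,n}$ has a Puiseux expansion by the explicit formula \eqref{gn}, invokes formulas \eqref{M(lambda)} and \eqref{4.12}, and then defers the details to \cite[Sec.~2]{JK}, \cite[Lemma~16.1]{IKV}, and \cite[Lemma~14.2]{KKS}. Your write-up is precisely a fleshed-out version of that sketch: the expansion of $\kappa_n$ and of $e^{-\kappa_n|x|}/(4\pi|x|)$ into even and odd powers of $\kappa_n$, the transfer to $\widetilde H$ via \eqref{4.12}, and the Neumann-series inversion using Lemma~\ref{l-a} are exactly the mechanism the cited references supply.
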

%%%%%%%%%%%%%%%%%%

This lemma follows from formulas \eqref{M(lambda)} and \eqref{4.12},
because $\widetilde{g}_{\lambda,n}$ have the corresponding Puiseux expansions
by \eqref{gn}. Using the methods of \cite[Sec.~2]{JK},
 this fact can be proved  by a similar way as \cite[Lemma~16.1]{IKV}
and \cite[Lemma~14.2]{KKS}.

To end the proof of the bound~\eqref{decayA},
we apply  the inverse Laplace transform to \eqref{a9}
and use the technique of \cite{IKV, KKS, KK, V74} and Lemma~10.2 from \cite{JK}. Then,
\begin{equation}\label{a7}
 q(t)=
\frac1{2\pi}\int\limits_{-\infty}^{+\infty}
e^{i\omega t}\mathcal{N}^{-1}(i\omega +0)
(\lambda q^0+ p^0)\,d\omega,\quad t\in\mathbb{R}.
\end{equation}
Without loss of generality, we assume that
$0<m_1<m_2<\dots<m_N$.
We split the Fourier integral \eqref{a7} into $N+1$ terms by using
the partition of unity $\zeta_0(\omega)+\dots+\zeta_N(\omega)=1$,
$\omega\in \mathbb{R}$:
$$
 q(t)=
\frac1{2\pi}\int\limits_{-\infty}^{+\infty}
e^{i\omega t}(\zeta_0(\omega)
+\dots+\zeta_N(\omega))\mathcal{N}^{-1}(i\omega+0)
(\lambda q^0+   p^0)\,d\omega=I_0(t)+\dots+I_N(t),
$$
where
 $\zeta_k(\omega)\in C^\infty(\mathbb{R})$ and
$\mathrm{supp}\,\zeta_0\subset\{\omega\in \mathbb{R}:\,|\omega|>m_N+1 \, \mbox{or }\,|\omega|<m_{1}/2\}$,
$m_n\in \mathrm{supp}\,\zeta_n$, $n=1,\ldots,N$,
$m_k\notin  \mathrm{supp}\,\zeta_n$ if $k\ne n$.
Then
  the function $I_0(t)\in C[0,+\infty)$ decays faster than any power of $t$ by Lemma~\ref{rem4.3}, and
 the functions $I_k(t)\in C^\infty(\mathbb{R})$, $k=1,\ldots,N$,
decay like $\langle t\rangle^{-3/2}$ by Lemma~ \ref{16.2}.
Hence, the bound~\eqref{decayA} holds with $j=0$ and $j=1$.
For $j=2$, this bound can be proved by the similar way.

%%%%%%%%%%%%%%%%%%%%%%%%%%%%%%%%%%%%%%%%%%%%%%%%%%%%
%%%%%%%%%%%                  bibliography
%%%%%%%%%%%%%%%%%%%%%%%%%%%%%%%%%%%%%%%%%%%%%%%%%%%%%%

\end{document}